\title{Scalable Distributed String Sorting} %
\author{Florian Kurpicz}{Karlsruhe Institute of Technology, Germany}{kurpicz@kit.edu}{https://orcid.org/0000-0002-2379-9455}{}
\author{Pascal Mehnert}{Independent, Germany}{pascalmehnert@posteo.de}{}{}
\author{Peter Sanders}{Karlsruhe Institute of Technology, Germany}{sanders@kit.edu}{https://orcid.org/0000-0003-3330-9349}{}
\author{Matthias Schimek}{Karlsruhe Institute of Technology, Germany}{schimek@kit.edu}{https://orcid.org/0009-0002-6402-9016}{}
\authorrunning{F. Kurpicz, P. Mehnert, P. Sanders, and M. Schimek} %
\keywords{sorting, strings, distributed-memory computing, distributed membership filters, scalability} %
\newif\iffigures
\newcommand{\encoding}[2]{\textsc{enc}\left(#1, #2\right)}
\newcommand{\bigO}[1]{\mathcal{O}(#1)}
\newcommand{\lcpMath}[2]{\textup{lcp(}#1, #2\textup{)}}
\newcommand{\distMath}[1]{\textsc{dist}(#1)}
\newcommand{\expectedValue}[1]{\mathbb{E}[#1]}
\newcommand{\distApproxMath}[1]{\textsc{dist}_{\approxeq}(#1)}
\newcommand{\Bucket}{\mathcal{B}}
\newcommand{\StringSet}[1]{\mathcal{#1}}
\newcommand{\StringSetIndex}[2]{\mathcal{#1}_{#2}}
\newcommand{\maketh}[1]{#1^{th}}
\newcommand{\exchange}[3]{\textrm{Exch}(#1,#2,#3)}
\renewcommand{\exchange}[3]{\textrm{Exch}(#2,#3)}
\newcommand{\exchangetilde}[3]{\textrm{E}\widetilde{\textrm{xc}}\textrm{h}(#1,#2,#3)}
\renewcommand{\exchangetilde}[3]{\textrm{E}\widetilde{\textrm{xc}}\textrm{h}(#2,#3)}
\newcommand{\nmax}{n_{\textrm{max}}}
\newcommand{\mbar}{\bar{m}}
\DeclareMathOperator*{\argmin}{arg\,min}
\DeclareMathOperator{\poly}{poly}
\newcommand{\frage}[1]{#1}
\renewcommand{\frage}[1]{}
\newcommand{\fullstop}{}
\DeclarePairedDelimiter\parens{\lparen}{\rparen}
\DeclarePairedDelimiter\braces{\lbrace}{\rbrace}
\DeclarePairedDelimiter\abs{\lvert}{\rvert}
\DeclarePairedDelimiter\norm{\lVert}{\rVert}
\DeclarePairedDelimiter\floor{\lfloor}{\rfloor}
\DeclarePairedDelimiter\ceil{\lceil}{\rceil}
\newcommand{\uhat}{\underaccent{\check}}
\def\lmin{\ensuremath{\uhat\ell}}
\def\lmax{\ensuremath{\hat\ell}}
\def\dmax{\ensuremath{\hat{d}}}
\def\davg{\ensuremath{\bar{d}}}
\def\nmax{\ensuremath{\hat{n}}}
\def\nsquig{\widetilde{n}}
\def\Nsquig{\widetilde{N}}
\def\Dsquig{\widetilde{D}}
\def\recursionlevel{t}
\DeclareSIUnit{\nothing}{\relax}
\def\ms{\texttt{MS}}
\def\pdms{\texttt{PDMS}}
\def\RQuick{\texttt{RQuick}}
\def\LcpRQuick{\texttt{RQuick+}}
\def\psV{$\text{pS}^5$}
\pgfplotsset{compat=1.18}
\definecolor{my-dark-red}{RGB}{183, 28, 28}
\definecolor{my-red}{RGB}{244,67,54}
\definecolor{my-pink}{RGB}{233,30,99}
\definecolor{my-purple}{RGB}{156,39,176}
\definecolor{my-deep-purple}{RGB}{103,58,183}
\definecolor{my-indigo}{RGB}{63,81,181}
\definecolor{my-blue}{RGB}{33,150,243}
\definecolor{my-light-blue}{RGB}{3,169,244}
\definecolor{my-cyan}{RGB}{0,188,212}
\definecolor{my-teal}{RGB}{0,150,136}
\definecolor{my-green}{RGB}{76,175,80}
\definecolor{my-light-green}{RGB}{139,195,74}
\definecolor{my-lime}{RGB}{205,220,57}
\definecolor{my-yellow}{RGB}{255,235,59}
\definecolor{my-amber}{RGB}{255,193,7}
\definecolor{my-orange}{RGB}{255,152,0}
\definecolor{my-deep-orange}{RGB}{255,87,34}
\definecolor{my-brown}{RGB}{121,85,72}
\definecolor{my-grey}{RGB}{158,158,158}
\definecolor{my-blue-grey}{RGB}{96,125,139}
\definecolor{my-lipics-grey}{rgb}{0.6,0.6,0.61}
\pgfplotsset{
  major grid style={thin,dotted,my-grey},
  minor grid style={thin,dotted,my-grey},
  ymajorgrids,
  yminorgrids,
  axis lines*=left,
  xlabel near ticks,
  ylabel near ticks,
  axis lines*=left,
  every axis/.append style={
    line width=0.7pt,
    tick style={
      line cap=round,
      thin,
      major tick length=4pt,
      minor tick length=2pt,
    },
  },
  legend style={
    line width=0.25pt,
    /tikz/every even column/.append style={column sep=3mm,black},
    /tikz/every odd column/.append style={black},
  },
}
\pgfplotsset{
    major grid style={thin,dotted,color=black!50},
    minor grid style={thin,dotted,color=black!50},
    grid,
    cycle list/Set1-5,
    cycle multiindex* list={
        custommarks\nextlist
        customcolors\nextlist
    },
    every axis/.append style={
        line width=0.5pt,
        tick style={
            line cap=round,
            thin,
            major tick length=4pt,
            minor tick length=2pt,
        },
    },
    every axis y label/.style={
            at={(ticklabel cs:0.5)},
            rotate=90,
            anchor=near ticklabel,
        },
    custom ybar legend/.style={
        legend image code/.code={
            \draw[##1] (0cm,-0.6ex) rectangle + (0.4cm,1.5ex);
        },
    },
}
\newcommand\irregularline[2]{%
  let \n1 = {rand*(#1)} in +(0,\n1)
  \foreach \a in {0.15,0.3,...,#2}{
    let \n1 = {rand*(#1)} in -- +(\a,\n1)
  } 
  let \n1 = {rand*(#1)} in -- +(#2,\n1)
}
\tikzset{
  fitting node/.style={
    inner sep=0pt,
    fill=none,
    draw=none,
    reset transform,
    fit={(\pgf@pathminx,\pgf@pathminy) (\pgf@pathmaxx,\pgf@pathmaxy)}
  },
  reset transform/.code={\pgftransformreset}
}
\begin{document}

\maketitle

\begin{abstract}
	String sorting is an important part of tasks such as building index data structures.
	Unfortunately, current string sorting algorithms do not scale to massively parallel distributed-memory machines since they either have latency (at least) proportional to the number of processors $p$ or communicate the data a large number of times (at least logarithmic).
	We present practical and efficient algorithms for distributed-memory string sorting that scale to large $p$.
	Similar to state-of-the-art sorters for atomic objects, the algorithms have latency of about $p^{1/k}$ when allowing the data to be communicated $k$ times.
	Experiments indicate good scaling behavior on a wide range of inputs on up to \num{49152} cores. Overall, we achieve speedups of up to $5$ over the current state-of-the-art distributed string sorting algorithms.
\end{abstract}

\section{Introduction}
Sorting strings is a fundamental building block of many important string-processing tasks such as the construction of index data structures for databases and full-text phrase search~\cite{ferragina1999string, DBLP:journals/jacm/KarkkainenSB06, nong2013practical}.
The problem differs from \emph{atomic} sorting---where keys are treated as indivisible objects that can be compared in constant time.
Strings on the other hand can have variable lengths and the time needed to compare two strings depends on the length of their longest common prefix.
Therefore, string sorting algorithms try to avoid the comparison of whole strings.
Instead, they only inspect the \emph{distinguishing prefixes} of the strings, i.e., the characters needed to establish the global ordering, only once.
The sum of the lengths of all distinguishing prefixes is usually denoted by $D$.
The lower bound for sequential string sorting based on character comparisons is $\Omega(n\log n + D)$ with existing algorithms matching this bound \cite{DBLP:conf/soda/BentleyS97}.

\vspace{-.1cm}
\subparagraph{Related Work.}
There exists extensive research on string sorting in the sequential setting.
For a systematic overview, we refer to \cite{DBLP:phd/dnb/Bingmann18, DBLP:conf/spire/KarkkainenR08, DBLP:conf/wea/SinhaW08}.
We focus on parallel sorting algorithms.
Let \(n\) be the total number of strings and \(N\) be the total number of characters.
See \cref{tab:symbols_and_abbreviations} for a list of abbreviations used in this paper.
For the PRAM model, there are (comparison-based) algorithms solving the string sorting problem in $\mathcal{O}(n\log n + N)$ work and $\mathcal{O}(\log^2 n/\log\log n)$ time~\cite{DBLP:journals/tcs/JaJaRV96}.
For integer alphabets, Hagerup proposes an algorithm with $\mathcal{O}(N\log N)$ work and running time in $\mathcal{O}(\log N/\log\log N)$~\cite{DBLP:conf/stoc/Hagerup94}.
Ellert~et~al. propose a framework for PRAM string sorting algorithms that makes the (work) complexity depend on $D$ instead of $N$ by increasing the time complexity by a logarithmic factor in the length of the longest distinguishing prefix~\cite{DBLP:conf/europar/Ellert0S20}.
String sorting is also considered in other parallel settings, e.g., on the GPU~\cite{neelima2014string}.

While the problem has been extensively studied in the sequential and (shared-memory) parallel setting, we are only aware of the following results in the distributed-memory setting.
Bingmann~et~al. present the state-of-the-art distributed string sorting algorithms \cite{DBLP:conf/ipps/Bingmann0S20}.
The first one follows the standard distributed-memory merge sort scheme (local sorting, partitioning, message exchange, and merging) \cite{DBLP:journals/jpdc/VarmanSIR91}.
Every step is augmented with string-specific optimizations, e.g., LCP-compression and LCP-aware merging \cite{ng2008merging, DBLP:conf/ipps/Bingmann0S20, DBLP:journals/algorithmica/BingmannES17}, see \cref{sec:preliminaries}.
The second algorithm is more communication-efficient, as it only sorts approximations of the distinguishing prefixes using the first algorithm.
They also adapt the distributed hypercube quicksort algorithm \cite{axtmann2015practical, axtmann2017robust} to variable-length keys (without string-related optimizations).
These algorithms improve the first dedicated distributed string sorting algorithm by Fischer~et~al.~\cite{DBLP:conf/alenex/0001K19}.

However, the algorithms are only efficient for very small or large inputs, as they have a prohibitively high communication volume (hypercube quicksort) or do not scale to the largest available machines due to their latency which is (at least) proportional to the number of processors $p$.
For example, the merge sort algorithms do not scale to the largest available machines, as the partitioning step dominates the running time unless $n=\Omega(p^2 \log p)$.

\crefname{theorem}{Thm.}{Thm.}
\vspace{-.1cm}
\subparagraph{Our Contribution.}
Our new algorithms close this gap by providing a viable trade-off between latency and communication volume.
We introduce a multi-level approach to Bingmann~et~al.'s algorithms, i.e., we use $k$ levels where processor groups work on independent sorting problems.
The resulting {\em multi-level mergesort} has internal work and communication volume close to \(N\) for each level (\cref{thm:multi_level:simple:complexity}).
This is significantly improved with {\em prefix-doubling mergesort} 
where internal work and communication volume per level are close to $D$ (\cref{thm:multi_level:pdms}).
As side results of independent interest, we present an improved {\em hypercube quicksort for strings} 
(\cref{thm:rquick-runningtime}) and a {\em multi-level distributed single-shot Bloom filter} (\cref{thm:k-level-bloomfilter}).
While the idea to combine the best string sorting and atomic sorting algorithm is simple in principle, we view the analysis of a quite complicated overall algorithm in a realistic model of distributed-memory computing as a significant contribution. In particular,
because this guides an efficient, highly scalable implementation.
In an experimental evaluation, on up to \num{49152} cores, the multi-level algorithms are up to $5\times$ faster than the single-level ones.

\crefname{theorem}{Theorem}{Theorems}

\section{Preliminaries}
\label{sec:preliminaries}
\vspace{-.1cm}
\subparagraph{Machine Model and Communication Primitives.}
We assume a distributed-memory machine model consisting of $p$ processing elements (PEs) allowing single-ported point-to-point communication.
The cost of exchanging a message of $h$ bits between any two PEs is $\alpha + \beta h$, where $\alpha$ accounts for the message start-up overhead and $\beta$ quantifies the time to exchange one bit.
Let $h$ be the maximum number of bits a PE sends or receives, then collective operations \emph{broadcast}, \emph{prefix sum}, \emph{(all-)reduce}, and \emph{(all-)gather} can be implemented in time $\mathcal{O}(\alpha \log p + \beta h)$~\cite{DBLP:books/sp/SandersMDD19}.
For personalized all-to-all communication, there is a trade-off between communication volume and start-up latency.
When data is delivered directly, we obtain a time complexity in $\mathcal{O}(\alpha p + \beta h)$, while using a maximum degree of indirection yields $\mathcal{O}(\alpha \log p + \beta h\log p )$.
In our analysis, we resort to a more abstract view introduced by Axtmann~et~al.---a black box data exchange function $\exchange{P}{h}{r}$ yielding the time complexity of exchanging data%
, when each PE sends or receives at most $h$ bits in total from at most $r$ PEs \cite{axtmann2015practical}.
We find $\alpha r + \beta h$ as a lower bound for the time complexity of $\exchange{P}{h}{r}$, and although matching upper bounds are not known, there are indications suggesting that one can come close to this (see \cite{axtmann2015practical} for a brief discussion).
We use $\exchangetilde{P}{h}{r} = (1+o(1))\exchange{P}{h}{r}$ to sum multiple exchanges by the dominant one.

\begin{table}[t]
	\centering
	\caption[List of used symbols]{Symbols used in this paper.}
	\label{tab:symbols_and_abbreviations}
	\begin{tabular}{ll}
		\multicolumn{2}{c}{\textbf{Machine Model}}            \\
		\toprule
		$ p$       & number of processing elements (PEs)      \\
		$ r$       & number of groups in each recursion level \\
		$ k$       & number of recursion levels               \\
		$ \alpha $ & message start-up latency                 \\
		$ \beta $  & time to communicate a bit                \\
		\bottomrule
	\end{tabular}
	\begin{tabular}{ll}
		\multicolumn{2}{c}{\textbf{String Properties}}    \\
		\toprule
		$n$     & total number of strings                 \\
		$N$     & total number of characters              \\
		$\lmax$ & length of longest string                \\
		$\lmin$ & length of shortest string               \\
		$\dmax$ & length of longest distinguishing prefix \\
		\bottomrule
	\end{tabular}
\end{table}
\vspace{-.25cm}
\subparagraph{String Properties and Input Format.}
The input to our algorithms is a string array $S = [s_0, s_1, \ldots, s_{n-1}]$ consisting of $n = \abs{S}$ unique strings.
A string $s$ of length $\ell$ is a sequence of characters from an alphabet $\Sigma$ with $s = [s[0], \ldots, s[\ell-2],\bot]$ where $\bot \notin \Sigma$ is a sentinel character; $\lmax$ denotes the length of the longest string in $S$.
By $N = \norm{S}$, we denote the total number of characters in $S$.
The $\ell$-prefix of a string $s$ are the first $\ell$ characters of $s$.
The \emph{longest common prefix (LCP)} of two strings $s \neq t$ is the prefix of $s$ with length $\lcpMath{s}{t} = \argmin{s[i] \neq t[i]}$.
For a sorted string array, the corresponding LCP array $[\bot, h_1, h_2, \ldots, h_{n-1}]$ contains the LCP values $h_i = \lcpMath{s_{i-1}, s_i}$.
For sorting the string array, we do not necessarily have to look at all characters in $S$.
The \emph{distinguishing prefix} of a string $s$ (with length $\distMath{s}$) are the characters that need to be inspected to rank $s$ in $S$.
The sum of the lengths of all distinguishing prefixes of $S$ is denoted by $D$.
By $\dmax$, we denote the length of the longest distinguishing prefix.
A string array is usually represented as an array of pointers referring to the start of the corresponding character sequence. This allows for moving or swapping strings in constant time.
The concatenation of all the character sequences forms the \emph{character} array $\mathcal{C}\mathcal({S})$ with $\abs{\mathcal{C}\mathcal({S})} = \norm{\mathcal{S}}$.

In our distributed setting, we assume that each PE $i$ obtains a local subarray $S_i$ of $S$ as input such that $S$ is the concatenation of all local string arrays $S_i$.
Furthermore, we assume the input to be well-balanced, i.e., $\abs{S_i} = \Theta(n/p)$ and $\norm{S_i} = \Theta(N/p)$.

\vspace{-.25cm}
\subparagraph{Algorithmic Building Blocks.}
\label{sec:algorithmic_building_blocks}
We make use of an \emph{$r$-way LCP loser tree} to merge $r$ sorted sequences of in total $m$ strings augmented with LCP values in $\mathcal{O}(m\log r +D)$ time~\cite{DBLP:journals/algorithmica/BingmannES17}.
While merging, we update the LCP values of the resulting merged string sequence.
Furthermore, we use LCP compression, i.e., we send the longest common prefix of two consecutive strings (in a sorted sequence) only once.
While being very useful for many inputs in practice, LCP compression cannot substantially reduce the communication volume in the worst case \cite{DBLP:conf/ipps/Bingmann0S20, schimek2019distributed}.

Robust Hypercube Quicksort (RQuick) is a sorting algorithm initially proposed for atomic sorting \cite{axtmann2017robust, DBLP:phd/dnb/Axtmann21} that has been adapted to handle strings~\cite{DBLP:conf/ipps/Bingmann0S20}.
RQuick only works for a number of PEs that is a power of two.
Hence, with $d = \floor{\log p}$, PEs with index $i \ge 2^d$ send their data to a PE within the \(d\)-dimensional hypercube.
Then the input data is randomly permuted among the PEs to make imbalances less likely.
Subsequently, each PE sorts its local data.
Then the actual $d$ hypercube (quick)sorting rounds are executed, where the input is recursively partitioned into sub-hypercubes using a pivot element.
Hence, the elements are communicated at least $d = \floor{\log p}$ times.

\begin{theorem}[String RQuick,{\cite[Theorem 1]{DBLP:conf/ipps/Bingmann0S20}}]
	\label{thm:rquick-runningtime}
	If all input strings are unique, RQuick runs in time
	\(
	\mathcal{O}\left( \lmax \frac{n}{p} \log n + \alpha \log^2p + \beta \left(\frac{n}{p} \lmax \log p + \lmax \log^2p   \right)\log{\sigma} \right)
	\) with probability $\ge 1 - p^{-c}$ for any constant $c>0$.
\end{theorem}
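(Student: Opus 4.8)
The plan is to split RQuick into four phases and charge each separately while carrying an invariant that, with probability $\ge 1-p^{-c}$, every PE holds $\mathcal{O}(n/p)$ strings---and hence $\mathcal{O}((n/p)\lmax)$ characters, each encoded in $\mathcal{O}(\log\sigma)$ bits---throughout the whole execution. The four phases are: (i) shrinking the active PE set to a power of two $2^d$, $d=\floor{\log p}$; (ii) the randomized rebalancing permutation; (iii) the local string sort; and (iv) the $d$ hypercube partitioning rounds.

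Phases (i)--(iii) are the easy ones. Phase (i) at most doubles the per-PE load and costs $\mathcal{O}(\alpha+\beta(n/p)\lmax\log\sigma)$. For phase (ii) I would view the permutation as $\mathcal{O}(\log p)$ rounds in which each PE forwards a random subset of its current strings to one hypercube neighbour; a Chernoff bound shows each PE ends up with $\Theta(n/p)$ strings, and a union bound over the $\mathcal{O}(p\log p)$ PE/round pairs (inflating the exponent in the tail bound by a constant absorbs this) gives the high-probability guarantee. Each round moves $\mathcal{O}((n/p)\lmax\log\sigma)$ bits to a single partner, so phase (ii) costs $\mathcal{O}(\alpha\log p+\beta(n/p)\lmax\log p\log\sigma)$, which the claimed bound dominates. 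Phase (iii) is a sequential comparison-based string sort of $\mathcal{O}(n/p)$ strings; using an algorithm meeting the $\mathcal{O}(m\log m+D)$ bound with $D\le m\lmax$ yields $\mathcal{O}((n/p)\log(n/p)+(n/p)\lmax)=\mathcal{O}((n/p)\lmax\log n)$ local work and no communication.

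The heart of the proof is phase (iv). In round $j=1,\dots,d$ the PEs are partitioned into sub-hypercubes; inside each of them RQuick's robust pivot selection samples elements, agrees on a common pivot string of length $\le\lmax$, and then exchanges, along one dimension, the elements lying on the wrong side of the pivot, so each PE sends or receives $\mathcal{O}((n/p)\lmax\log\sigma)$ bits to/from one partner. Sampling, agreeing on, and broadcasting the pivot within a group is done by a collective costing $\mathcal{O}(\alpha\log p+\beta\lmax\log p\log\sigma)$; over the $\mathcal{O}(\log p)$ rounds this accumulates to the $\alpha\log^2 p$ latency term and the $\beta\lmax\log^2 p\log\sigma$ volume term, while the $d$ single-partner exchanges contribute the $\beta(n/p)\lmax\log p\log\sigma$ term. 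Partitioning against the pivot and then LCP-aware merging the retained and received sorted runs costs $\mathcal{O}((n/p)\lmax)$ per round, i.e.\ $\mathcal{O}((n/p)\lmax\log p)\subseteq\mathcal{O}((n/p)\lmax\log n)$ in total. One also has to argue that the pivot in each group splits the local data within a constant factor of balanced with probability $\ge 1-p^{-c-2}$; a union bound over the $\mathcal{O}(p)$ groups and $\mathcal{O}(\log p)$ levels keeps the overall failure probability below $p^{-c}$, and on this good event the load invariant survives all $d$ rounds. Summing the four phases and absorbing lower-order terms gives the stated bound.

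The step I expect to be the main obstacle is making this load-balance invariant airtight: one must show that the randomized permutation together with the whole sequence of pivot-induced splits never lets any PE's string (and character) count exceed $\mathcal{O}(n/p)$, uniformly over all $\mathcal{O}(\log p)$ levels and $\mathcal{O}(p)$ sub-hypercubes and against a worst-case input, while still keeping the failure probability at $p^{-c}$ for an arbitrary constant $c$---which is exactly what RQuick's ``robust'' sampling (with a suitably chosen sample size) is designed for. This is also where the uniqueness assumption is essential: distinct strings have distinct, well-defined ranks, so a sampled pivot genuinely separates the data and no sub-hypercube can get stuck with an unsplittable pile of equal strings.
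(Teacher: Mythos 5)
The paper does not prove this theorem at all: it is imported verbatim as \cite[Theorem~1]{DBLP:conf/ipps/Bingmann0S20}, so there is no in-paper proof to compare against. Your sketch is nonetheless a faithful reconstruction of the standard RQuick analysis: the four-phase decomposition, the term-by-term accounting (pivot collectives giving $\alpha\log^2 p$ and $\beta\lmax\log^2 p\log\sigma$, the $d$ single-partner exchanges giving $\beta\frac{n}{p}\lmax\log p\log\sigma$, and $\lmax$-cost comparisons giving the $\lmax\frac{n}{p}\log n$ work term), and the identification of the whp load-balance invariant under robust pivot sampling as the crux --- including why uniqueness is needed to rule out unsplittable piles of equal keys --- all match the argument in the cited work. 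The only part left genuinely schematic is the invariant itself, which you correctly flag as the hard step rather than claiming it.
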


RQuick can be easily improved by using a string sorting algorithm for local sorting and (LCP-aware) merging.
Our implementation \LcpRQuick\ uses these optimizations.
Therefore, the overall work performed by all PEs for these steps is $\mathcal{O}(n\log n + D)$.

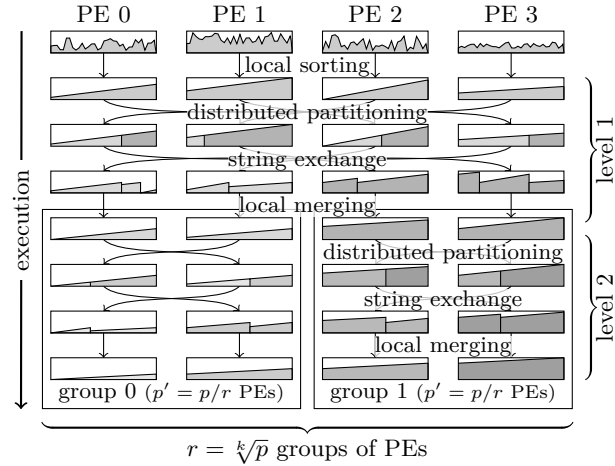
\begin{figure}[t]
	\centering
	\begin{tikzpicture}[scale=0.275,font=\footnotesize]

    \tikzmath{ \dx = 5; \dy = -1; \mx = 1.5; \my = -1.25; \dmx = \dx + \mx; \dmy = \dy + \my; }

    \foreach \y in {0, ..., 7} \foreach \x in {0, ..., 3} \fill[fill=white]
    (\x * \dmx, \y * \dmy) rectangle (\x * \dmx + \dx, \y * \dmy + \dy) node[fitting node] (c\y\x) {};

    \node foreach \x in {0,...,3} [anchor=south] at (c0\x.north) {\small{PE \x}};

    \path [draw,fill=gray!40] (c00.west) ++ (0,-0.1) {\irregularline{0.3}{5}}   -- (c00.south east) -- (c00.south west) -- cycle;
    \path [draw,fill=gray!40] (c01.west) ++ (0,+0.2) {\irregularline{0.3}{5}}   -- (c01.south east) -- (c01.south west) -- cycle;
    \path [draw,fill=gray!40] (c02.west) ++ (0,-0.05) {\irregularline{0.4}{5}}  -- (c02.south east) -- (c02.south west) -- cycle;
    \path [draw,fill=gray!40] (c03.west) ++ (0,-0.15) {\irregularline{0.15}{5}} -- (c03.south east) -- (c03.south west) -- cycle;

    \foreach \y/\x/\p/\q/\r/\cl/\cr in {
            1/0/0.1/0.7/1.0/40/40,   1/1/0.4/1.0/1.0/40/40,   1/2/0.0/0.9/1.0/40/40,  1/3/0.3/0.6/1.0/40/40,
            2/0/0.1/0.7/0.67/30/60,  2/1/0.4/1.0/0.167/30/60, 2/2/0.0/0.9/0.56/30/60, 2/3/0.3/0.6/0.67/30/60,
            4/0/0.0/0.5/1.0/30/30,   4/1/0.1/0.5/1.0/30/30,   4/2/0.6/0.9/1.0/60/60,  4/3/0.5/1.0/1.0/60/60,
            5/0/0.0/0.5/0.375/20/40, 5/1/0.1/0.5/0.6/20/40,  5/2/0.6/0.9/0.6/55/75,  5/3/0.5/1.0/0.4/55/75,
            7/0/0.0/0.25/1.0/20/20,  7/1/0.25/0.5/1.0/40/40,  7/2/0.5/0.75/1.0/55/55, 7/3/0.75/1.0/1.0/75/75
        }
        {
            \coordinate (l\y\x) at ($(c\y\x.south west)!\p!(c\y\x.north west)$);
            \coordinate (r\y\x) at ($(c\y\x.south east)!\q!(c\y\x.north east)$);
            \coordinate (m\y\x) at ($(l\y\x)!\r!(r\y\x)$);
            \coordinate (b\y\x) at ($(c\y\x.south west)!\r!(c\y\x.south east)$);

            \fill[fill=gray!\cl] (c\y\x.south west) -- (l\y\x) -- (m\y\x) -- (b\y\x) -- cycle;
            \fill[fill=gray!\cr] (c\y\x.south east) -- (b\y\x) -- (m\y\x) -- (r\y\x) -- cycle;

            \draw (l\y\x) -- (r\y\x) (b\y\x) -- (m\y\x);
        }

    \foreach \y/\x/\lx/\ly/\rx/\ry/\c in {
            3/0/0.0/0.1/0.67/0.5/30, 3/0/0.67/0.4/0.85/0.5/30, 3/0/0.85/0.0/1.0/0.15/30,
            3/1/0.0/0.15/0.4/0.5/30, 3/1/0.4/0.3/1.0/0.5/30,
            3/2/0.0/0.5/0.33/0.7/60, 3/2/0.33/0.5/1.0/0.9/60,
            3/3/0.0/0.9/0.2/1.0/60, 3/3/0.2/0.5/0.67/0.9/60, 3/3/0.67/0.5/1.0/0.6/60,
            6/0/0.0/0.0/0.375/0.25/20, 6/0/0.375/0.1/1.0/0.25/20,
            6/1/0.0/0.25/0.6/0.5/40, 6/1/0.6/0.3/1.0/0.5/40,
            6/2/0.0/0.6/0.6/0.75/55, 6/2/0.6/0.5/1.0/0.70/55,
            6/3/0.0/0.75/0.4/0.9/75, 6/3/0.4/0.75/1.0/1.0/75
        }
        {
            \coordinate (pA\y\x) at ($(c\y\x.south west)!\lx!(c\y\x.south east)$);
            \coordinate (pB\y\x) at ($(pA\y\x)-\ly*(0, \dy)$);
            \coordinate (pD\y\x) at ($(c\y\x.south west)!\rx!(c\y\x.south east)$);
            \coordinate (pC\y\x) at ($(pD\y\x)-\ry*(0, \dy)$);

            \draw[fill=gray!\c] (pA\y\x) -- (pB\y\x) -- (pC\y\x) -- (pD\y\x) -- cycle;
        }

    \foreach \y/\z in {0/1,3/4,6/7} \foreach \x in {0,...,3} \draw[->] (c\y\x.south) -- (c\z\x.north);

    \foreach \i/\j in {3/0, 2/1} \draw[out=225,in=45,looseness=0.25,->] (c1\i.south) to (c2\j.north);
    \foreach \i/\j in {3/0, 2/1} \draw[out=315,in=135,looseness=0.25,->] (c1\j.south) to (c2\i.north);

    \foreach \i/\j in {3/0, 2/1} \draw[out=225,in=45,looseness=0.25,->] (c2\i.south) to (c3\j.north);
    \foreach \i/\j in {3/0, 2/1} \draw[out=315,in=135,looseness=0.25,->] (c2\j.south) to (c3\i.north);

    \foreach \i/\j in {1/0, 3/2} \draw[out=225,in=45,looseness=0.5,->] (c5\i.south) to (c6\j.north);
    \foreach \i/\j in {1/0, 3/2} \draw[out=315,in=135,looseness=0.5,->] (c5\j.south) to (c6\i.north);

    \foreach \i/\j in {1/0, 3/2} \draw[out=225,in=45,looseness=0.5,->] (c4\i.south) to (c5\j.north);
    \foreach \i/\j in {1/0, 3/2} \draw[out=315,in=135,looseness=0.5,->] (c4\j.south) to (c5\i.north);

    \draw ($(c40.north west) + (-0.4, 0.4)$) rectangle ($(c71.south east) + (0.4, -1.4)$) node[fitting node] (group0) {};
    \draw ($(c42.north west) + (-0.4, 0.4)$) rectangle ($(c73.south east) + (0.4, -1.4)$) node[fitting node] (group1) {};
    \node foreach \i in {0,1} [anchor=south, yshift=-1pt] at (group\i.south) {group \i{} {\scriptsize ($p'=p/r$ PEs)}};

    \foreach \i/\j/\l/\r/\phase in {
            0/1/0/3/local sorting,
            1/2/0/3/distributed partitioning,2/3/0/3/string exchange,3/4/0/3/local merging,
            4/5/2/3/distributed partitioning,5/6/2/3/string exchange,6/7/2/3/local merging
        }
    \node[inner sep=0.025pt, fill=white, opacity=0.75, text opacity=1,yshift=-.5pt] at ($(c\i\l.south west)!0.5!(c\j\r.north east)$) {\phase};

    \foreach \y in {0, ..., 7} \foreach \x in {0, ..., 3} \draw (c\y\x.south west) rectangle (c\y\x.north east);

    \draw[thick,->] ($(c00.north west) + (-1.4, 0)$) -- ($(group0.south west) + (-1, 0)$) node[midway,rotate=90,fill=white] {\small{execution}};

    \draw[decoration={calligraphic brace,raise=5pt,amplitude=4pt},decorate,thick] ($(c13.north east) + (0.4, 0.0)$) -- node[anchor=north, inner sep=2ex, rotate=90] {\small level 1} ($(c43.north east) + (0.4, -0.2)$);
    \draw[decoration={calligraphic brace,raise=5pt,amplitude=4pt},decorate,thick] ($(c43.south east) + (0.4, 0.2)$) -- node[anchor=north, inner sep=2ex, rotate=90] {\small level 2} ($(c73.south east) + (0.4, -0.0)$);
    \draw[decoration={calligraphic brace,raise=5pt,amplitude=4pt,mirror},decorate,thick] (group0.south west) -- node[below=7.5pt] {\small $r=\sqrt[\leftroot{2}k]{p}$ groups of PEs} (group1.south east);

\end{tikzpicture}
	\caption{Overview of the main steps in the multi-level string sorting scheme with \(k=2\) levels.}
	\label{fig:merge-sort-scheme}
\end{figure}

\section{Multi-Level String Sorting}
\label{sec:multi_level_string_sorting}
Our \emph{multi}-level \textbf{m}erge \textbf{s}ort (MS) approach adapts the ideas of Axtmann~et~al. for multi-level \emph{atomic} sorting \cite{axtmann2015practical, axtmann2017robust} to string sorting.
We recursively split PEs into groups each of which solves an independent (string) sorting problem.
There are $k$ levels of recursion with arbitrary splitting factors between levels.
To obtain the single-level variant~\cite{DBLP:conf/ipps/Bingmann0S20} of the algorithm choose $k=1$.
To simplify the analysis, we assume that $p$ can be perfectly subdivided into $r$ groups of $p/r$ consecutive PEs, i.e., on the first level the $\maketh{j}$ group consists of PEs $jp/r, \ldots, (j+1)p/r-1$.
Furthermore, we generally assume approximately equal factors on each level, i.e., $r=\Theta\parens{\sqrt[\leftroot{2}k]{p}}$ which implies $p=\Theta\parens{r^k}$, and that the final levels splits the PEs into groups of size~$1$.
The algorithm consists of a one-time initialization and a recursive phase which is invoked $k$ times.
\cref{fig:merge-sort-scheme} provides an overview.

\noindent
{\textbf{Initialization:}}
The input is sorted locally.
On each PE $i$, sort the local input array $\mathcal{S}_i$.
The LCP array can be obtained as a by-product of sorting.

\noindent
{\textbf{Recursion:}}
A global order is established recursively.
Initially, all string arrays $\mathcal{S}_i$ must be locally sorted.
On each level of recursion, we have $p'$ PEs and $r$ groups of size $p''=p'/r$.

\begin{enumerate}[nolistsep,nosep]
	\item \label{alg:multi_level:simple:partition}
	      \textbf{Distributed Partitioning:}
	      Globally determine $r-1$ splitter strings $f_j$ and on each PE $i$ compute local \emph{buckets} $\mathcal{B}_i^0, \dots,
          \mathcal{B}_i^{r-1}$ with $\mathcal{B}_i^j = \{ s \in S_i \mid f_{j} < s \le f_{j+1} \}$ for $j \in \braces{0, \ldots, r-1}$ using sentinels $f_{0} = -\infty$ and $f_{r} = \infty$.

	\item  \label{alg:multi_level:simple:exchange}
	      \textbf{String Assignment and Exchange:}
	      On PE $i$, the strings
	      in bucket $\mathcal{B}_i^j$ are assigned to PEs belonging to group $j$. By $\mathcal{B}^j = \bigcup_i \mathcal{B}_i^j$ we denote the union of all strings assigned to group $j$.
	      Then, all strings and LCP values are exchanged using direct messaging.

	\item  \label{alg:multi_level:simple:merge}
	      \textbf{Local LCP-aware Merging:}
	      On PE $i$, the received string sequences are merged to obtain locally sorted string arrays
	      $\mathcal{O}_i$ (using the LCP values).
	      We also update the LCP values for $\mathcal{O}_i$ during merging.
	      We then set $p' \leftarrow p'/r$ and $S_i \leftarrow O_i$ in the subsequent recursive step.
\end{enumerate}

\subsection{Distributed Partitioning}
Due to the multidimensionality of the string sorting problem, determining balanced partitions is far more challenging than in atomic sorting.
Some of the steps of our merge sort algorithm depend on the number of strings in the local string array while others depend on the number of characters or the size of the distinguishing prefix.
We therefore adapt \emph{string-based} and \emph{character-based} partitioning~\cite{DBLP:conf/ipps/Bingmann0S20} to our multi-level approach.
These schemes bound the number of strings and characters, respectively.
The general approach is to draw and globally sort a number of samples on each PE.
Then, $r-1$ splitters $f_j$ are chosen from the global sample array.
Since all strings are locally sorted before the partitioning step, we can make use of a \emph{regular} sampling approach \cite{DBLP:journals/pc/LiLSS93, DBLP:journals/jpdc/ShiS92} in which the samples are drawn equidistantly. %

\subsubsection{String-Based Partitioning}
On the $\maketh{\recursionlevel}$ recursion level, there are $r^{\recursionlevel-1}$ groups of PEs working on independent sorting problems, see \cref{fig:merge-sort-scheme}.
We now describe the partitioning process from the point of view of one such group.
Let $\StringSet{S'}$ be the concatenation of the local string arrays of the PEs of one such group of size $p'$.
Unlike for single-level MS, now, we cannot assume an equal number of strings on each PE as from the second level of recursion on, this number is subject to the result of a previous partitioning round which is not exact.
For multi-level MS, string-based partitioning consists of the following two steps:\\
\vspace{-.125cm}
\begin{description}[nosep]
	\item[Local Sampling:] Let $v > 0$ be the \emph{sampling} factor.
	      In total, there will be $p'(v+1)$ samples drawn from the local string arrays.
	      To simplify the discussion, we assume $\abs{\StringSet{S'}}$ to be divisible by $p'(v+1)$.
	      Let $\omega = \abs{\StringSet{S'}}/(p'(v+1))$.
	      PE $i$ then draws $\ceil{\abs{\StringSetIndex{S}{i}}/\omega} - 1$ samples $\mathcal{V}_i$ from its local string array spaced as evenly as possible.
	      $\mathcal{V}$ is the union of all local samples.
	      If $\abs{\mathcal{V}} < p'(v+1)$, then the first $p'(v+1) - \abs{\mathcal{V}}$ PEs draw one additional sample.
	      This ensures that at most \(\omega\) strings are between two local samples on each PE.
	      Also, the global number of samples is a multiple of $p'$ and of $r$ as we find $p' = r^{k+1-\recursionlevel}$ on recursion level $\recursionlevel$.
	\item[Splitter Computation:] The samples $\mathcal{V}$ are globally sorted using hypercube quicksort.
	      Then $r-1$ splitters $f_j = \mathcal{V}[j|\mathcal{V}|/r -1]$ for $0 < j < r$ are determined using a prefix-sum.
	      Subsequently, the $r-1$ splitters are communicated to all PEs using an all-gather operation.
\end{description}

A sampling factor $v = \Theta(r)$ yields a maximum number of strings per bucket in $\Theta(\abs{\mathcal{S}'}/r)$.
This is shown in detail by using a generalization of the sample density lemma from \cite{DBLP:conf/ipps/Bingmann0S20} in \cref{appendix:sec:string_based_partitioning}. %
The proof of the following \cref{lemma:string-based-sampling} can also be found in \cref{appendix:sec:string_based_partitioning}.

\begin{lemma}
	\label{lemma:string-based-sampling}
	On recursion level $\recursionlevel$ with $r = \sqrt[k]{p}$ in step~\ref{alg:multi_level:simple:partition} of multi-level MS, string-based regular sampling with sampling factor $v$ yields a maximum bucket size of
	\(
    |\StringSet{B}^j| \le \left(1 + \frac{r}{v} \right)^{\recursionlevel} \frac{n}{r^{\recursionlevel}}.
	\)
\end{lemma}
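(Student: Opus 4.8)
The plan is to argue by induction on the recursion level $\recursionlevel$, with the single-level sample density lemma from~\cite{DBLP:conf/ipps/Bingmann0S20} providing both the base case and the inductive step. The key quantity to track is not just the bucket size but the number of strings on the largest group after a partitioning round, since from the second level on the input to a group is itself the (imperfect) output of the previous level. Write $\nmax^{(\recursionlevel)}$ for the maximum number of strings held by any group of PEs entering recursion level $\recursionlevel+1$ (so $\nmax^{(0)} = n$, the whole input on the single group of all $p$ PEs). The claim I actually want is $\nmax^{(\recursionlevel)} \le \left(1 + \frac{r}{v}\right)^{\recursionlevel}\frac{n}{r^{\recursionlevel}}$, and $|\StringSet{B}^j|$ in the statement is exactly an instance of $\nmax^{(\recursionlevel)}$.

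First I would set up the single-level bound: within a group of size $p'$ holding a string array $\StringSet{S'}$, the regular sampling procedure places at most $\omega = |\StringSet{S'}|/(p'(v+1))$ strings between consecutive local samples on each PE, and draws $p'(v+1)$ samples globally. Applying the generalized sample density lemma (the generalization to unequal per-PE string counts is the point of \cref{appendix:sec:string_based_partitioning}), the number of strings falling into any one of the $r$ buckets, i.e.\ assigned to any one child group, is at most $\frac{|\StringSet{S'}|}{r} + r\omega = \frac{|\StringSet{S'}|}{r} + \frac{r}{v+1}\cdot\frac{|\StringSet{S'}|}{p'} \le \left(1 + \frac{r}{v}\right)\frac{|\StringSet{S'}|}{r}$, where the last step uses $p' \ge r$ (which holds since $p' = r^{k+1-\recursionlevel} \ge r$ for $\recursionlevel \le k$) together with $v+1 > v$. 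This is the one-step contraction factor.

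The induction is then immediate: $\nmax^{(0)} = n$ is the base case. Assuming $\nmax^{(\recursionlevel-1)} \le \left(1 + \frac{r}{v}\right)^{\recursionlevel-1}\frac{n}{r^{\recursionlevel-1}}$, apply the one-step bound with $|\StringSet{S'}| \le \nmax^{(\recursionlevel-1)}$ (a group entering level $\recursionlevel$ holds at most $\nmax^{(\recursionlevel-1)}$ strings) to get $\nmax^{(\recursionlevel)} \le \left(1+\frac{r}{v}\right)\frac{1}{r}\nmax^{(\recursionlevel-1)} \le \left(1+\frac{r}{v}\right)^{\recursionlevel}\frac{n}{r^{\recursionlevel}}$, which is the desired inequality. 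One has to be slightly careful that the divisibility simplification ($|\StringSet{S'}|$ divisible by $p'(v+1)$) used in the description does not affect correctness of the bound in the general case — the "$+1$ extra sample on the first few PEs" correction in Local Sampling is exactly what guarantees the $\le\omega$-strings-between-samples invariant unconditionally, so the single-level lemma applies verbatim to the non-divisible case with the same $\omega$.

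The main obstacle is the single-level step, specifically making the generalized sample density lemma apply when the per-PE string counts within a group are unbalanced (a consequence of the previous, inexact partitioning). The original lemma in~\cite{DBLP:conf/ipps/Bingmann0S20} assumes $\Theta(|\StringSet{S'}|/p')$ strings per PE; here we only know each PE draws $\ceil{|\StringSetIndex{S}{i}|/\omega}-1$ samples with at most $\omega$ strings between them, so the counting argument for "how many strings can accumulate in one bucket" must be redone in terms of $\omega$ and the total sample count rather than a uniform per-PE count. Once that generalization is in hand (it is the content of \cref{appendix:sec:string_based_partitioning}), the rest is the routine induction sketched above.
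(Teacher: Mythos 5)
Your overall strategy---an induction over recursion levels driven by a single-level bucket bound obtained from the generalized sample-density lemma---is exactly the route the paper takes (its Theorem on string-based partitioning plus the induction in \cref{appendix:sec:string_based_partitioning}). However, the single-level counting step as you wrote it is incorrect. You claim a bucket contains at most $\abs{\StringSet{S'}}/r + r\omega$ strings, i.e.\ you charge a total slack of $r\omega$. The sample-density lemma is a \emph{per-PE} statement: on PE $i$ one has $\abs{\mathcal{B}_i^j}\le(\abs{\mathcal{V}_i^j}+1)\omega$, so summing over the group and using $\sum_i\abs{\mathcal{V}_i^j}=p'(v+1)/r$ gives a slack of $p'\omega=\abs{\StringSet{S'}}/(v+1)$, one $\omega$ for each of the $p'$ PEs, not one per bucket. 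This per-PE slack is real: every PE can contribute up to $\omega$ strings lying in the bucket's range while contributing no sample to that bucket (e.g.\ the strings preceding its first sample, or strings between two consecutive samples that straddle the splitter), and one can arrange inputs where this happens on essentially all $p'$ PEs simultaneously, so the bucket exceeds $\abs{\StringSet{S'}}/r + r\omega$ by roughly $p'\omega$. Hence your intermediate inequality is false, not merely unjustified; your subsequent use of $p'\ge r$ is a symptom of this, since with the correct accounting no such condition is needed: $\abs{\StringSet{S'}}/r+\abs{\StringSet{S'}}/(v+1)\le(1+r/v)\abs{\StringSet{S'}}/r$ directly, which is precisely the paper's derivation. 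With that one term fixed, your induction (group input at level $\recursionlevel$ bounded by the level-$(\recursionlevel-1)$ bucket bound, then apply the one-step bound again) coincides with the paper's proof and is fine.

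A smaller point: your claim that the argument applies ``verbatim'' without the divisibility assumption $p'(v+1)\mid\abs{\StringSet{S'}}$ is not what the paper establishes. In the general case it samples with spacing $\ceil{\omega}$ and only obtains the bound with an extra factor $(1+1/k)$, under the additional assumption $\abs{\StringSet{S'}}=\Omega(p'(v+1)k)$ (see \cref{appendix:partitioning-general-case}); the lemma as stated relies on the divisibility simplification, so this side remark should either be dropped or weakened accordingly.
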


The term $(1+r/v)^{\recursionlevel}$ signifies that the imbalance between the buckets  multiplies with each level of recursion.
Therefore, we need to choose $v=\Theta(kr)$ for any number of \(k\) levels to keep the term asymptotically constant during the entire sorting process.
For single-level MS, $v = \Theta(r) = \Theta(p)$ samples per PE are sufficient. %
Note the large difference between drawing $kr$ samples per PE (on average) here and $r^k = p$ samples in the single-level case.
Using the assignment strategy described in \cref{section:assignment-strategies}, which equally distributes the strings over the PEs in each group, we arrive at \cref{thm:string-based-sampling}.

\begin{theorem}[String-Based Sampling]
	\label{thm:string-based-sampling}
	Using a sampling factor of $v = \Theta(kr) = \Theta(k\sqrt[k]{p})$ the number of strings per PE is in $\bigO{n/p}$ in each level of the algorithm.
\end{theorem}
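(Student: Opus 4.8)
The plan is to derive the theorem as a short corollary of \cref{lemma:string-based-sampling} together with the balancing guarantee of the assignment strategy from \cref{section:assignment-strategies}. First I would substitute the chosen sampling factor $v = ckr$ (for a suitable constant $c$, so that $v = \Theta(kr) = \Theta(k\sqrt[k]{p})$) into the bound of \cref{lemma:string-based-sampling}. This turns the per-level imbalance factor $1 + r/v$ into $1 + 1/(ck)$, and since every recursion level satisfies $\recursionlevel \le k$, the accumulated imbalance is bounded by
\[
\left(1 + \frac{r}{v}\right)^{\recursionlevel} \le \left(1 + \frac{1}{ck}\right)^{k} \le e^{1/c} = \bigO{1}.
\]
Hence after the partitioning on level $\recursionlevel$ every group-bucket $\StringSet{B}^j$ contains $\bigOMath{n / r^{\recursionlevel}}$ strings.

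Next I would translate this group-level bound into a per-PE bound. On level $\recursionlevel$ a group consists of $r^{k-\recursionlevel}$ PEs (using $p = \Theta(r^k)$ and the group-size bookkeeping of \cref{sec:multi_level_string_sorting}, where $p' = r^{k+1-\recursionlevel}$ and groups have $p'' = p'/r$ PEs), and the assignment strategy of \cref{section:assignment-strategies} distributes the $\abs{\StringSet{B}^j}$ strings of group $j$ as evenly as possible over those PEs. Therefore each PE of group $j$ receives at most
\[
\left\lceil \frac{\abs{\StringSet{B}^j}}{r^{k-\recursionlevel}} \right\rceil = \bigOMath{\frac{n}{r^{\recursionlevel} \cdot r^{k-\recursionlevel}}} + 1 = \bigOMath{\frac{n}{r^{k}}} + 1 = \bigO{n/p}
\]
strings, where the last step uses $n = \Omega(p)$ (if $n < p$ there is at most one string per PE and nothing to show). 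The same reasoning applies trivially to the perfectly balanced initial input, so the claimed $\bigO{n/p}$ bound holds on every level. Finally I would note that the average number of samples drawn per PE in the local-sampling step is $v + 1 = \Theta(kr) = \Theta(k\sqrt[k]{p})$, and the "first $p'(v+1) - \abs{\mathcal{V}}$ PEs draw one additional sample" correction changes this by at most one, so the stated sampling factor is exactly what the algorithm uses.

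The only points needing care are the bookkeeping that makes $r^{\recursionlevel}$ and $r^{k-\recursionlevel}$ telescope to $p$, and the divisibility assumption used inside \cref{lemma:string-based-sampling}: removing it replaces the exact counts by ceilings and adds an $\bigO{1}$ additive term per level, i.e.\ $\bigO{k}$ overall, which is absorbed into $\bigO{n/p}$ under the standing balance assumption (or, if one prefers to be fully explicit, the bound becomes $\bigO{n/p + k}$). The genuinely substantive ingredient — that one round of regular sampling inflates the largest bucket by only a factor $1 + r/v$ and that these factors multiply across levels — is precisely the content of \cref{lemma:string-based-sampling}, so it is not re-proved here; the main obstacle in this proof is therefore merely keeping the level-indexed group sizes and string counts straight, which is routine.
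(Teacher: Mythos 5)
Your proposal is correct and follows essentially the same route as the paper: substitute $v=\Theta(kr)$ into \cref{lemma:string-based-sampling} so that $(1+r/v)^{\recursionlevel}\le(1+1/(ck))^{k}=\bigO{1}$, then use the assignment strategy of \cref{section:assignment-strategies} (cf.\ \cref{thm:bounded-assignment}), which spreads each bucket $\StringSet{B}^j$ evenly over the $p''=r^{k-\recursionlevel}$ PEs of its group, so the per-PE count telescopes to $\bigO{n/r^{k}}=\bigO{n/p}$. Your extra remarks on the ceiling, the divisibility assumption, and the initial level only add care beyond what the paper spells out.
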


\subsubsection{Character-Based Sampling}
We generalize \emph{character}-based regular sampling \cite{DBLP:conf/ipps/Bingmann0S20} to our multi-level approach to achieve tighter bounds on the number of characters per PE than the conservative $\mathcal{O}(\lmax n/p)$.
Now, each PE of the considered group draws $\ceil{\norm{\StringSet{S}_i}/ \omega'} - 1$ equally spaced samples from its character array with sampling distance $\omega'= \norm{\StringSet{S'}}/(p'(v+1))$.
To arrive at the final string samples, we shift the sampled character positions by at most $\lmax - 1$ characters to the beginning of a string.
If the total number of samples is smaller than $p'(v+1)$, the first PEs draw one additional sample.

In \cref{lemma:multi_level:complexity:character_bucket_size}, we give bounds on the number of characters per bucket over the course of our algorithm when using character-based partitioning.
This is shown in detail by using a generalization of the (character-)~sample density lemma~\cite{DBLP:conf/ipps/Bingmann0S20} in \cref{appendix:sec:character_based_partitioning}. %
\begin{lemma}%
	\label{lemma:multi_level:complexity:character_bucket_size}
	On recursion level $\recursionlevel$ with $r=\sqrt[\leftroot{2}k]{p}$ in
	step~\ref{alg:multi_level:simple:partition} of multi-level MS using character-based regular
	sampling with a sampling factor of $v$, each bucket contains at most \(\norm{\mathcal{B}^j}\leq \parens*{1+\frac{r}{v}}^{\recursionlevel}\parens*{\frac{N}{r^\recursionlevel}	+ \recursionlevel \parens*{1+\frac{v+1}{r}}\frac{p}{r^{\recursionlevel-1}}\lmax}\) characters.
\end{lemma}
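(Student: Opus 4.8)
<br>

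The plan is to mirror the structure of the string-based argument (Lemma~\ref{lemma:string-based-sampling}) but track \emph{characters} instead of \emph{strings}, carefully accounting for the two sources of slack that are specific to character-based sampling: the non-exact sampling distance and the backward shift of at most $\lmax-1$ characters applied to each sampled position to align it with a string boundary. First I would set up the single-level estimate: on the $\maketh{\recursionlevel}$ recursion level we consider one group with character array $\StringSet{S'}$ of size $\norm{\StringSet{S'}}$, sampling distance $\omega' = \norm{\StringSet{S'}}/(p'(v+1))$, and $p'(v+1)$ samples in total. The key combinatorial fact is that between two consecutive \emph{global} samples in the sorted sample array there are at most a bounded number of characters of $\StringSet{S'}$; since each splitter $f_j$ picks out every $(v+1)$-th global sample (because $|\mathcal{V}|/r = p'(v+1)/r = p''(v+1)$ samples fall into each group), a bucket $\StringSet{B}^j$ spans roughly $v+1$ inter-sample gaps, each of size about $\omega'$, plus the shift correction. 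I would invoke the generalized character-sample-density lemma from \cref{appendix:sec:character_based_partitioning} to make "at most a bounded number of characters between consecutive global samples'' precise, getting a per-level bound of the form $\norm{\StringSet{B}^j} \le (1 + \tfrac{r}{v})\bigl(\tfrac{\norm{\StringSet{S'}}}{r} + (\text{shift term})\bigr)$.

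The shift term is the crux. Each of the $p'(v+1)$ sampled character positions may be moved back by up to $\lmax - 1$ characters; within one group and one bucket, the relevant sampled positions number at most about $p''(v+1) = p'(v+1)/r$, and each contributes up to $\lmax$ extra characters that could be "double counted'' when charging characters to gaps. So the single-level additive overhead for one bucket is $\bigO{\tfrac{p'(v+1)}{r}\lmax} = \bigO{(1+\tfrac{v+1}{r})\tfrac{p'}{r}\lmax}$, and since $p' = p/r^{\recursionlevel-1}$ on level $\recursionlevel$, this is exactly the $(1+\tfrac{v+1}{r})\tfrac{p}{r^{\recursionlevel-1}}\lmax$ appearing in the statement. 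I would also need to handle the "if the total number of samples is smaller than $p'(v+1)$, the first PEs draw one additional sample'' clause — but this only \emph{increases} the sample count toward $p'(v+1)$, so the bound with $p'(v+1)$ samples remains valid; I would note this explicitly to avoid an off-by-one gap.

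Next I would unroll the recursion. As in the string-based case, $\norm{\StringSet{S'}}$ on level $\recursionlevel$ is itself a bucket size produced on level $\recursionlevel-1$, so the multiplicative factor $(1+\tfrac{r}{v})$ compounds across levels, giving the $(1+\tfrac{r}{v})^{\recursionlevel}$ prefactor. The additive shift terms accumulate: at level $t \le \recursionlevel$ the overhead is $(1+\tfrac{v+1}{r})\tfrac{p}{r^{t-1}}\lmax$ scaled by the multiplicative factors of the intervening levels, $(1+\tfrac{r}{v})^{\recursionlevel - t}$; since we start from the global character count $N$ at level $0$ and $N/r^{\recursionlevel}$ is the "clean'' share, the worst case of all $\recursionlevel$ additive contributions, each at most $(1+\tfrac{r}{v})^{\recursionlevel}(1+\tfrac{v+1}{r})\tfrac{p}{r^{\recursionlevel-1}}\lmax$ after bounding $\tfrac{p}{r^{t-1}} \le \tfrac{p}{r^{\recursionlevel-1}}\cdot r^{\recursionlevel - t}$ against the corresponding shrinking multiplicative slack, collapses to the factor $\recursionlevel$ in front of the $\lmax$-term. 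Collecting everything yields $\norm{\StringSet{B}^j} \le (1+\tfrac{r}{v})^{\recursionlevel}\bigl(\tfrac{N}{r^{\recursionlevel}} + \recursionlevel(1+\tfrac{v+1}{r})\tfrac{p}{r^{\recursionlevel-1}}\lmax\bigr)$, which is the claim.

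The main obstacle I anticipate is bookkeeping the shift term cleanly through the recursion: one must argue that the $\lmax$-overhead introduced on an \emph{early} level does not itself get blown up by later multiplicative factors in a way that breaks the linear-in-$\recursionlevel$ bound, and simultaneously that the number of sampled positions charged within a single level-$\recursionlevel$ bucket is genuinely $\bigO{p'(v+1)/r}$ rather than $\bigO{p'(v+1)}$ (i.e., only the samples whose shifted positions land inside that bucket matter). Getting the index arithmetic of "every $(v+1)$-th sample, $p''(v+1)$ per group'' consistent with the possibly-unequal string counts per PE inherited from previous inexact partitioning rounds is the delicate part; here I would lean on the generalized sample-density lemma to absorb the per-PE imbalance, exactly as the string-based proof does, so that the only genuinely new estimate is the $\lmax$ shift accounting.
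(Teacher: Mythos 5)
Your overall route (per-level bound via the character sample-density lemma, then induction over levels with the $(1+r/v)$ factor compounding and the additive $\lmax$-terms collapsing into the factor $\recursionlevel$) is the same as the paper's, and the inductive unrolling you sketch is essentially the paper's argument. However, your single-level accounting has a genuine gap, and it is exactly at the point you flag as your "main obstacle" and then resolve the wrong way. You insist that only the $\mathcal{O}(p'(v+1)/r)$ samples whose (shifted) positions land inside the bucket need to be charged, and hence that the per-level shift overhead is $\mathcal{O}\parens*{\frac{p'(v+1)}{r}\lmax}$. But the sample-density lemma is inherently a \emph{per-PE} statement: on each of the $p'$ PEs of the group, the local portion $\mathcal{B}_i^j$ with $\abs{\mathcal{V}_i^j}$ local samples can hold up to $\parens*{\abs{\mathcal{V}_i^j}+1}\parens*{\omega'+\lmax}$ characters, and a PE contributing \emph{zero} in-bucket samples can still contribute up to $\omega'+\lmax$ characters. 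Summing over the group gives $\norm{\mathcal{B}^j}\le\parens*{\frac{p'(v+1)}{r}+p'}\parens*{\omega'+\lmax}$; the extra "$+p'$" is not optional. It is precisely the source of both the multiplicative factor $(1+r/v)$ on the character term (via $p'\omega'=\norm{\StringSet{S'}}/(v+1)\le\frac{r}{v}\cdot\frac{\norm{\StringSet{S'}}}{r}$) and of the $p'\lmax$ part of the additive term. With your charging scheme the factor $(1+r/v)$ you write in your per-level bound has no justified origin, and your shift term is too small for general $v$ (for $v+1<r$ the true worst case, where the in-bucket samples concentrate on few PEs and every other PE contributes a sample-free chunk of $\approx\omega'+\lmax$ characters, exceeds $\frac{p'(v+1)}{r}\lmax$).

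A second, related problem is the algebra by which you match your term to the lemma: with $p'=p/r^{\recursionlevel-1}$ you have $\frac{p'}{r}=\frac{p}{r^{\recursionlevel}}$, so $(1+\frac{v+1}{r})\frac{p'}{r}\lmax$ is \emph{not} the lemma's $(1+\frac{v+1}{r})\frac{p}{r^{\recursionlevel-1}}\lmax$; also $\frac{p'(v+1)}{r}\lmax$ is not $\Theta\parens*{(1+\frac{v+1}{r})\frac{p'}{r}\lmax}$ in general. The lemma's additive term is $(1+\frac{v+1}{r})\,p'\lmax$ with $p'$ the full current group size, which is exactly what the per-PE "$+1$" accounting above produces; your derivation only lands on the stated formula because these two slips compensate. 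Fixing the proof means proving the single-level statement in the form $\norm{\mathcal{B}^j}\le\parens*{1+\frac{r}{v}}\frac{\norm{\StringSet{S'}}}{r}+\parens*{1+\frac{v+1}{r}}p'\lmax$ (per-PE application of the density lemma, including the sample-free slack), after which your inductive step goes through as sketched.
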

The additional term depending on $\lmax$ stems from shifting the sampled positions to the beginning of strings.
By distributing the characters equally over the PEs in each group up to additional $\mathcal{O}(\lmax)$ characters (see \cref{section:assignment-strategies}), we can limit the maximum number of characters per PE in \cref{theorem:character-based-sampling}.%
\begin{theorem}[Character-Based Sampling]
	\label{theorem:character-based-sampling}
	Using a sampling factor in $\Theta(kr)$ the maximum number of characters per PE in each level is in
	\(
	\mathcal{O}\parens*{\frac{N}{p} + k^2r\lmax}.
	\)
\end{theorem}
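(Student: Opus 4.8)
The plan is to derive \cref{theorem:character-based-sampling} from \cref{lemma:multi_level:complexity:character_bucket_size} in two steps: first bound the per-bucket character count uniformly over all recursion levels $\recursionlevel \in \{1,\dots,k\}$ with the chosen sampling factor $v = \Theta(kr)$, and then account for the redistribution of characters within each group of PEs. Plugging $v = \Theta(kr)$ into the lemma, the leading factor $\parens*{1+\frac{r}{v}}^{\recursionlevel}$ becomes $\parens*{1+\Theta(1/k)}^{\recursionlevel} \le \parens*{1+\Theta(1/k)}^{k} = \Theta(1)$, since $(1+x/k)^k \le e^x$. So this factor is absorbed into the $\mathcal{O}$-notation uniformly in $\recursionlevel$. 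What remains inside the parentheses is $\frac{N}{r^{\recursionlevel}} + \recursionlevel\parens*{1+\frac{v+1}{r}}\frac{p}{r^{\recursionlevel-1}}\lmax$.

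For the second term, substitute $v+1 = \Theta(kr)$, so $1 + \frac{v+1}{r} = \Theta(k)$, giving $\recursionlevel \cdot \Theta(k) \cdot \frac{p}{r^{\recursionlevel-1}}\lmax = \Theta(\recursionlevel k)\frac{p}{r^{\recursionlevel-1}}\lmax$. Using $\recursionlevel \le k$, this is $\mathcal{O}(k^2)\frac{p}{r^{\recursionlevel-1}}\lmax$. Hence each bucket on level $\recursionlevel$ has at most $\mathcal{O}\parens*{\frac{N}{r^{\recursionlevel}} + k^2\frac{p}{r^{\recursionlevel-1}}\lmax}$ characters. Now I invoke the assignment strategy of \cref{section:assignment-strategies}: on level $\recursionlevel$ a bucket $\mathcal{B}^j$ is distributed over a group of $p' = r^{k+1-\recursionlevel}$ PEs (equivalently, over the $p/r^{\recursionlevel-1}$ PEs remaining per group at the start of level $\recursionlevel$, with the group sizes contracting by a factor $r$ each level) up to additional $\mathcal{O}(\lmax)$ characters per PE. Dividing the bucket bound by the number of PEs in the group gives per PE at most $\mathcal{O}\parens*{\frac{N}{r^{\recursionlevel}} \cdot \frac{r^{\recursionlevel-1}}{p} + k^2\lmax} + \mathcal{O}(\lmax) = \mathcal{O}\parens*{\frac{N}{rp}\cdot r + k^2\lmax}$; more carefully, $\frac{N}{r^{\recursionlevel}}$ divided by $p/r^{\recursionlevel-1}$ is $\frac{N}{r p}\cdot 1 = \frac{N}{pr}\le\frac{N}{p}$, and $k^2\frac{p}{r^{\recursionlevel-1}}\lmax$ divided by $p/r^{\recursionlevel-1}$ is exactly $k^2\lmax$ (note $k^2\lmax$ versus the claimed $k^2 r\lmax$: the extra factor $r$ provides slack and also covers the additive $\mathcal{O}(\lmax)$ rounding term from the assignment step). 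This yields the claimed $\mathcal{O}\parens*{\frac{N}{p} + k^2 r \lmax}$ per PE, uniformly over all $k$ levels.

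The main obstacle is bookkeeping the group sizes correctly across levels: on level $\recursionlevel$ the relevant denominator is the number of PEs a single bucket is spread over, and one must check that this denominator together with the $r^{\recursionlevel-1}$ in the lemma's second term collapses to the clean $k^2$ (times the $r$ slack) independent of $\recursionlevel$, so that the bound holds simultaneously for every level rather than just the first or last. A secondary point to handle cleanly is that from the second level onward the input per PE is only approximately balanced (a consequence of the inexact previous partitioning), so one should confirm that \cref{lemma:multi_level:complexity:character_bucket_size} — whose statement already incorporates the multiplicative imbalance factor $\parens*{1+r/v}^{\recursionlevel}$ — indeed accounts for this, which it does, and that the assignment strategy's $\mathcal{O}(\lmax)$ slack per PE is the only additional imbalance introduced at the redistribution step. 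Everything else is the routine substitution $v = \Theta(kr)$ and the inequality $(1+x/k)^k \le e^x$.
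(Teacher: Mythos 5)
Your overall route is the paper's: combine \cref{lemma:multi_level:complexity:character_bucket_size} with the assignment guarantee of \cref{thm:bounded-assignment}, absorb $\parens*{1+r/v}^{\recursionlevel}=\mathcal{O}(1)$ via $v=\Theta(kr)$, and bound $\recursionlevel\parens*{1+\frac{v+1}{r}}=\mathcal{O}(k^2)$. The error is in the redistribution step, exactly at the point you yourself flag as the delicate bookkeeping. A bucket $\mathcal{B}^j$ formed at level $\recursionlevel$ is \emph{not} spread over the whole current group of $p'=r^{k+1-\recursionlevel}$ PEs; each group forms $r$ buckets and bucket $\mathcal{B}^j$ is assigned only to the $\maketh{j}$ subgroup of $p''=p'/r=r^{k-\recursionlevel}$ PEs (see \cref{section:assignment-strategies}; \cref{thm:bounded-assignment} bounds the per-PE load by $\norm{\mathcal{B}^j}/p''+\lmax$, with $p''$, not $p'$, in the denominator). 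Dividing by $p'$ overcounts the receivers by a factor $r$.

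With the correct denominator $p''=p/r^{\recursionlevel}$ the two terms become $\frac{N}{r^{\recursionlevel}}\cdot\frac{r^{\recursionlevel}}{p}=\frac{N}{p}$ and $k^2\frac{p}{r^{\recursionlevel-1}}\lmax\cdot\frac{r^{\recursionlevel}}{p}=k^2 r\lmax$, i.e., exactly the bound in the theorem; the additive $\mathcal{O}(\lmax)$ from the assignment is absorbed by $k^2 r\lmax$. Consequently, your intermediate claim of $\mathcal{O}\parens*{\frac{N}{pr}+k^2\lmax}$ characters per PE is false (already at $\recursionlevel=k$, where $p''=1$, a single PE holds an entire bucket with up to $\Theta(N/p+k^2r\lmax)$ characters), and the factor $r$ in $k^2r\lmax$ is not slack but the genuine contribution of the shift-induced imbalance after redistribution. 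Since your final conclusion is weaker than your (incorrect) intermediate bound, the theorem is not contradicted, but as written the derivation does not establish it; replacing $p'$ by $p''$ in the division repairs the argument and reproduces the paper's proof.
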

For the single-level case with $k=1$, this is equivalent to $O(N/p + p\lmax)$ which is the bound in the original algorithm \cite{DBLP:conf/ipps/Bingmann0S20}.
For $k > 1$, we even have an improvement over the single-level algorithm.
Since we assume $k$ in $\bigO{\log p/\log\log{p}}$, we find $k^2r\lmax = \mathcal{O}(\log^2(p)\sqrt[k]{p}\lmax) = o(p\lmax)$.
This may seem counter-intuitive at first as we introduce a potential imbalance already in the first recursion level.
However, the subsequent assignment step distributes this imbalance equally over $p'/r$ PEs.

\subsection{String Assignment and Exchange}
\label{section:assignment-strategies}
We now assign the strings in bucket $\mathcal{B}^j$ to the $\maketh{j}$ (sub-)group consisting of PEs $j$,\ldots, $(j + 1)p'' - 1$ with $p'' = p'/r = r^{k - \recursionlevel}$ on the $\maketh{\recursionlevel}$ level.
The resulting assignment needs to ensure that each PE receives approximately the same amount of data.
Additionally, the number of sent and received messages per PE should be bounded by the number of groups $r$.
We generalize an approach proposed by Axtmann~et~al.~\cite{axtmann2015practical} for distributed atomic sorting to string sorting.
For the sake of simplicity, we assume that each PE contributes the same number of strings (characters) in the assignment process.
With (slightly) imbalanced data (due to the partitioning) the below-stated results hold up to a small factor.

For \emph{string-based} partitioning, we want to balance the number of strings per PE.
Since the assignment algorithm does not rely on internal properties of the elements, we can treat a string as an \emph{atomic} object and apply Axtmann~et~al.'s assignment algorithm directly.

For \emph{character-based} partitioning, we want to achieve a balanced number of characters but cannot split strings.
Therefore, we reiterate the steps of the algorithm and describe the necessary adjustments.
A local bucket $\mathcal{B}_i^j$ is \emph{small} if it contains at most $\norm{\mathcal{B}^j}/(2rp'')$ characters.
Small buckets are separately enumerated for each group $j$ using a prefix sum where each PE contributes its number of small buckets for group $j$.
The $\maketh{t}$ small bucket belonging to group $j$ is then assigned to PE $\floor*{t/r}$ of group $j$.
This way, each PE gets assigned no more than half of its final capacity and receives messages from at most $r$ different PEs.

Then, a description of each large bucket located on PE $i$ and destined for group $j$ is first sent to PE $\floor*{i/r}$ in group $j$ and each group computes a balanced assignment independently of each other.
Conceptually, this works by performing separate prefix sums over residual capacities (remaining after assigning small buckets) and sizes of unassigned buckets, in each group.
The resulting sorted sequences of integers $R$ (residual capacities) and $U$ (unassigned buckets) must then be merged such that the bucket beginning at the $\maketh{i}$ element is preceded by the PE containing the $\maketh{i}$ open slot.
A subsequence of $\langle r_i,u_j,\ldots, u_{j+h}, r_{i+1},z \rangle$ in the merged sequence of $R$ and $U$ means that the local buckets $u_{j}, \ldots, u_{j+h}$ are assigned to PE $i$.
The last bucket $u_{j+h}$ potentially needs to be split up (respecting string boundaries in the character-based assignment) and partly assigned to $r_{i+1}$ or even $r_{i+2}$ if $z = r_{i+2}$.
Since we cannot split up strings, we may end up with a PE obtaining up to $\lmax - 1$ additional characters.
Strings from one local bucket cannot be assigned on more than 3 PEs as the residual capacity on each PE is at least $\norm{\mathcal{B}^j}/(2p'')$.
Since large buckets contain more than $\norm{\mathcal{B}^j}/(2rp'')$ elements, a single PE can store at most $2r$ of them.
Hence, each PE receives $\bigO{r}$ messages.
We refer to~\cite{axtmann2015practical} for details on the actual group-local merging process.
\begin{theorem}[Bounded Assignment]
	\label{thm:bounded-assignment}
	Using the bounded assignment algorithm~\cite{axtmann2015practical}, we obtain a message assignment where each PE sends and receives $\bigO{r}$ messages
	and each PE in group~$j$ obtains $\frac{\abs{\mathcal{B}^j}}{p''}$ strings (string-based) or at most $\frac{\norm{\mathcal{B}^j}}{p''} + \lmax$ characters (character-based).
\end{theorem}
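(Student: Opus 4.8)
The plan is to follow the structure of Axtmann et al.'s bounded assignment analysis, which the theorem already cites, and verify that each of its guarantees survives the two modifications forced by strings: (i) elements have variable sizes (characters), and (ii) strings cannot be split, so a bucket carved at a character position must be rounded to the nearest string boundary, costing at most $\lmax - 1$ extra characters. I would prove the string-based claim first, since there the only change is semantic (an ``element'' is now a whole string, counted with unit weight), so Axtmann et al.'s algorithm applies verbatim: the prefix-sum over the small-bucket counts delivers each small bucket of group $j$ to PE $\floor{t/r}$ of that group, giving at most $r$ incoming small-bucket messages and filling at most half of each PE's capacity $\abs{\mathcal{B}^j}/p''$; the large-bucket phase then merges residual capacities with unassigned-bucket sizes and assigns each PE exactly its remaining quota, so every PE in group $j$ ends up with exactly $\abs{\mathcal{B}^j}/p''$ strings, and since each large bucket holds more than $\abs{\mathcal{B}^j}/(2rp'')$ strings while a PE's total capacity is $\abs{\mathcal{B}^j}/p'' = 2r \cdot \abs{\mathcal{B}^j}/(2rp'')$, no PE absorbs more than $2r$ large buckets, yielding $\bigO{r}$ messages total.

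For the character-based claim I would re-run the same three phases but track characters and string boundaries. Define a local bucket \emph{small} iff it has at most $\norm{\mathcal{B}^j}/(2rp'')$ characters; the small-bucket prefix sum and the ``$\maketh{t}$ small bucket $\mapsto$ PE $\floor{t/r}$'' rule are unchanged, so again at most $r$ small-bucket messages per PE and at most half of a PE's capacity $\norm{\mathcal{B}^j}/p''$ consumed, with a possible $+\lmax$ slack only arising at split points. For the large buckets, the group-local merge of the residual-capacity sequence $R$ and the unassigned-size sequence $U$ is exactly as in the atomic case, except that whenever the merged sequence indicates a bucket $u_{j+h}$ straddling a PE boundary $r_{i+1}$, we must move the cut to the closest string boundary rather than to the exact character position; since each such rounding adds at most $\lmax - 1$ characters to one PE, and — as I would argue next — at most one such rounding lands on any given PE, the per-PE character count is at most $\norm{\mathcal{B}^j}/p'' + \lmax$.

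The two points that need genuine argument, and which I expect to be the crux, are: \textbf{(a)} a single local bucket is split across at most three PEs, and \textbf{(b)} a PE holds at most $2r$ large buckets (hence $\bigO{r}$ messages). For \textbf{(a)} I would use the fact that after the small-bucket phase every PE still has residual capacity at least $\norm{\mathcal{B}^j}/(2p'')$ — because small buckets occupied at most half — whereas a \emph{large} bucket has size at most the full bucket size divided by the number of large buckets, and more to the point, the standard argument shows a run $\langle r_i, u_j, \dots, u_{j+h}, r_{i+1}, z\rangle$ in the merged sequence means only the last bucket $u_{j+h}$ can spill past $r_{i+1}$, and it can reach $r_{i+2}$ only when $z = r_{i+2}$, i.e.\ it fits within two residual slots; since each residual slot is $\ge \norm{\mathcal{B}^j}/(2p'')$ and we are told large buckets are not enormous, no bucket can overrun three PEs — and here the string-boundary rounding is harmless because it only shifts a split by $< \lmax$, far less than a residual slot. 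For \textbf{(b)}, the counting is the one already sketched in the text: large buckets have $> \norm{\mathcal{B}^j}/(2rp'')$ characters each, a PE's capacity is $\norm{\mathcal{B}^j}/p''$, so at most $2r$ fit; combined with the $\le r$ small-bucket messages this gives $\bigO{r}$ messages per PE, completing both halves of the theorem. The main obstacle is bookkeeping the interaction between the rounding slack $\lmax$ and the residual-capacity invariant so that the $+\lmax$ appears exactly once per PE and never compounds across the merge; everything else is a direct transcription of the atomic analysis.
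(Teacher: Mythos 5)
Your proposal matches the paper's own justification, which is not a separate formal proof but precisely the inline description in the assignment section: treat strings as atomic objects for the string-based case, redefine small buckets by character count for the character-based case, assign the $\maketh{t}$ small bucket of group $j$ to PE $\floor{t/r}$, merge residual capacities with unassigned large buckets per group, round splits to string boundaries at a cost of at most $\lmax-1$ characters, and count $\le r$ small-bucket messages plus $\le 2r$ large buckets per PE (each large bucket exceeding $\norm{\mathcal{B}^j}/(2rp'')$ against capacity $\norm{\mathcal{B}^j}/p''$), with each local bucket spread over at most $3$ PEs since residual capacities are at least $\norm{\mathcal{B}^j}/(2p'')$. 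The bookkeeping point you flag (that the $+\lmax$ slack is charged once per PE) is handled at the same informal level in the paper, so your argument is correct and essentially identical in approach.
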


Hence, even for character-based sampling, we can find an assignment, such that the number of characters per PE remains in the bounds of \cref{theorem:character-based-sampling}.
Afterwards, each PE sends its strings according to their assignment.
The time for computing the assignment is dominated by the actual data exchange \cite{axtmann2015practical}.%
\footnote{
	For string-based assignment, this naturally transfers from the atomic sorting case.
	For character-based assignment, we have to compensate a communication volume in $\mathcal{O}(r\log{N})$.
	Assuming unique strings, we find $\lmax \ge \dmax = \Omega(\log{n}/\log{\sigma})$ and $N \le n\lmax$ and therefore $\log(N) = \mathcal{O}(\lmax \log{\sigma})$.
	Since our bound for the number of (received) characters per PE contains an imbalance of at least $r\lmax$ characters (which require an encoding of $\log{\sigma}$ bits each), the data exchange dominates the assignment also for character-assignment.}

\subsection{Overall Running Time}

Let $\Theta(kr)$ be the sampling rate of our algorithm with character-based sampling.
By \cref{theorem:character-based-sampling}, the maximum number of characters per PE at any time is $\mathcal{O}(\Nsquig/p)$ with  $\Nsquig = N + k^2r\lmax p$.
Since $n \le N/\lmin$ and $N \le n\lmax$, the number of strings per PE then is
$\mathcal{O}(\nsquig/p)$ with $\nsquig = \lmax/\lmin(n + k^2rp)$.

We now combine the running time of the three phases of our algorithm in each level, including $\mathcal{O}(n/p\log{n/p} + D/p)$ time for the initial local string sorting~\cite{DBLP:conf/europar/Ellert0S20}.

\noindent \textbf{1. Distributed Partitioning.} Here, we have to globally sort $\mathcal{O}(rk)$ local sample strings of length $\le \lmax$ per PE.
With RQuick, this is possible in $\mathcal{O}(rk\lmax\log{p}(1 + \beta\log{\sigma}) + \alpha\log^2{p})$ expected time.
Allgathering the $r-1$ splitters needs $\mathcal{O}(\alpha \log{p} + \beta r\lmax \log{\sigma})$ time.
As we have $r\lmax = \mathcal{O}(\Nsquig/p)$ by definition and the local string array as well as the splitters are sorted, computing the local buckets $\mathcal{B}_j^i$ is possible in time $\mathcal{O}(\Nsquig/p)$.%

\noindent \textbf{2. Assignment and Exchange.} The string assignment is dominated by the data exchange which is possible in time $\exchangetilde{p^{\frac{i}{k}}}{\mathcal{O}(\Nsquig/p\log\sigma)}{\mathcal{O}(r)}$ on level $\recursionlevel$.
This holds as a PE stores $\mathcal{O}(\Nsquig/p)$ characters (\cref{theorem:character-based-sampling}) encoded in $\log{\sigma}$ bits each.
By the bounded assignment algorithm each PE exchanges strings with at most $\mathcal{O}(r)$ other PEs (\cref{thm:bounded-assignment}).

\noindent \textbf{3. Local (LCP-aware) Merging.} The $r$ sorted sequences of strings received in the data exchange now needs to be sorted to restore our invariant on the local string array.
While being beneficial for many inputs in practice, asymptotically, LCP-aware merging does not yield substantial advantage.
Thus, we resort to uninformed merging for our analysis.
Processing the $\mathcal{O}(r)$ sorted sequences with $\mathcal{O}(\Nsquig/p)$ characters in total is possible in $\mathcal{O}(\Nsquig/p + \nsquig/p \log{r})$ time, dominating the time required for computing the local buckets during partitioning.

\begin{theorem}%
	\label{thm:multi_level:simple:complexity}
  Multi-level MS with $r=\sqrt[\leftroot{2}k]{p}$, \(\Nsquig = N + k^2r\lmax p\), and \(\nsquig = \lmax/\lmin(n + k^2rp)\), using character-based sampling and
	bounded assignment, runs in expected time\fullstop{}
	\begin{equation*}
		\begin{gathered}
          \mathrm{T_{MS}}(n,N,\lmax) = \mathcal{O}\Biggl(\overbrace{\frac{n}{p}\log{\frac{n}{p}} + \frac{D}{p}}^\text{local sorting} + \overbrace{k\frac{\Nsquig}{p} + \frac{\nsquig}{p}\log{p}}^\text{merging} + \overbrace{k\parens*{\alpha\log^2 p+
					k\sqrt[\leftroot{2}k]{p}\lmax\log p(1 + \beta\log\sigma)}}^\text{partitioning}\Biggr)\\\fullstop{}
			+ \underbrace{k\cdot\exchangetilde{p^{\frac{l}{k}}}{\mathcal{O}\parens*{\frac{\Nsquig}{p}\log\sigma}}{\mathcal{O}\parens*{\sqrt[\leftroot{2}k]{p}}}}_\text{assignment + exchange}\text{.}
		\end{gathered}
	\end{equation*}
\end{theorem}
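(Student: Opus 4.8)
Let me sketch how I would prove this. The theorem is a bookkeeping result: it combines the per-phase costs established in the preceding subsections, summed over the $k$ recursion levels, together with the one-time local sorting cost. The strategy is to account for each of the three recursive phases separately, bound the cost on a single level $\recursionlevel$ in terms of the global parameters $\Nsquig$, $\nsquig$, $\lmax$, $r$, $\alpha$, $\beta$, and then sum over $\recursionlevel = 1, \ldots, k$. The key enabling facts are all already in hand: \cref{theorem:character-based-sampling} guarantees that at every level a PE holds $\mathcal{O}(\Nsquig/p)$ characters (hence $\mathcal{O}(\nsquig/p)$ strings), and \cref{thm:bounded-assignment} guarantees the exchange touches only $\mathcal{O}(r)$ partners per PE.

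First I would handle the \emph{one-time} initialization: local string sorting of the $\mathcal{O}(n/p)$ input strings costs $\mathcal{O}(n/p\log(n/p) + D/p)$ by the sequential string-sorting bound of Ellert~et~al.~\cite{DBLP:conf/europar/Ellert0S20}; this happens once and contributes the first overbraced term. Then, for a fixed level $\recursionlevel$, I would bound the three recursive phases. \textbf{Partitioning:} globally sorting the $\mathcal{O}(rk)$ local samples of length $\le \lmax$ via RQuick costs $\mathcal{O}(rk\lmax\log p(1+\beta\log\sigma) + \alpha\log^2 p)$ in expectation (by the structure of RQuick and the fact that samples are atomic objects of bounded length); the all-gather of the $r-1$ splitters costs $\mathcal{O}(\alpha\log p + \beta r\lmax\log\sigma)$, which is absorbed since $r\lmax = \mathcal{O}(\Nsquig/p)$; computing the local buckets by a merge-style scan of the sorted local array against the sorted splitters costs $\mathcal{O}(\Nsquig/p)$. \textbf{Assignment and exchange:} the assignment computation is dominated by the data exchange (cf.\ the footnote argument that reduces the $\mathcal{O}(r\log N)$ assignment volume below the exchange volume using $\log N = \mathcal{O}(\lmax\log\sigma)$ and the $\ge r\lmax$ imbalance already present), and the exchange moves $\mathcal{O}(\Nsquig/p)$ characters at $\log\sigma$ bits each over $\mathcal{O}(r)$ partners, i.e.\ $\exchangetilde{p^{\recursionlevel/k}}{\mathcal{O}(\Nsquig/p\log\sigma)}{\mathcal{O}(r)}$ on level $\recursionlevel$ — note the group size on level $\recursionlevel$ is $p^{\recursionlevel/k}$ (using the notation from the statement). \textbf{Merging:} the $\mathcal{O}(r)$ received sorted sequences, of $\mathcal{O}(\Nsquig/p)$ characters and $\mathcal{O}(\nsquig/p)$ strings in total, are merged (uninformed, since LCP-awareness gives no asymptotic gain) in $\mathcal{O}(\Nsquig/p + \nsquig/p\log r)$ time, which also dominates the bucket-computation cost.

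Finally I would sum over the $k$ levels. The $\mathcal{O}(\Nsquig/p)$ terms accumulate to $k\Nsquig/p$; the per-level merging term $\mathcal{O}(\nsquig/p\log r)$ sums to $\mathcal{O}(k\cdot\nsquig/p\log r) = \mathcal{O}(\nsquig/p\log p)$ since $k\log r = \log p$; the partitioning terms sum to $k(\alpha\log^2 p + k\sqrt[k]{p}\lmax\log p(1+\beta\log\sigma))$ after substituting $r = \sqrt[k]{p}$; and the $k$ exchanges contribute the final $k\cdot\exchangetilde{\cdot}{\cdot}{\cdot}$ summand, which by the $\exchangetilde{}{}{}$ convention is the sum of the $k$ per-level exchanges written via the dominant one. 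Putting these together gives exactly the claimed $\mathrm{T_{MS}}(n,N,\lmax)$. The main obstacle is not any single estimate but making sure the interacting parameter substitutions are consistent: one must keep $\Nsquig$ and $\nsquig$ (which themselves carry the $k^2r\lmax p$ and $k^2rp$ imbalance terms from \cref{theorem:character-based-sampling}) fixed across all levels rather than re-deriving a level-dependent bound, verify that the all-gather and bucket-computation costs are genuinely dominated at every level, and confirm that the group size exponent $p^{\recursionlevel/k}$ feeding into $\exchangetilde{}{}{}$ is handled correctly — these are routine but easy to get subtly wrong.
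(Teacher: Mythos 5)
Your proposal matches the paper's own argument essentially step for step: the paper proves the theorem by the same per-phase accounting (one-time local sort via Ellert et al., RQuick sample sorting plus splitter all-gather plus bucket scan for partitioning, exchange-dominated bounded assignment with $\mathcal{O}(\Nsquig/p\log\sigma)$ bits over $\mathcal{O}(r)$ partners, uninformed $r$-way merging in $\mathcal{O}(\Nsquig/p + \nsquig/p\log r)$), keeping the level-independent bounds $\mathcal{O}(\Nsquig/p)$ and $\mathcal{O}(\nsquig/p)$ from \cref{theorem:character-based-sampling} and \cref{thm:bounded-assignment} fixed and summing over the $k$ levels with $k\log r = \log p$. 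No substantive differences.
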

As we only draw $k\sqrt[k]{p}$ local samples and receive $\sqrt[k]{p}-1$ splitter strings for $k>1$ as opposed to $p$ samples and $p-1$ splitters in the single-level case
we no longer have to compensate the mediocre scalability of the single-level partition phase with a huge amount of data and are capable of sorting small to medium sized inputs on a large number of PEs.
Assuming the exchange primitive $\exchange{P}{h}{\sqrt[k]{p}}$ to run in $\mathcal{O}(\alpha \sqrt[k]{p} + \beta h)$ \cite{axtmann2015practical}, we achieve a latency in $\mathcal{O}(\alpha k\sqrt[k]{p})= o(\alpha p)$ at the cost of an $k$ times higher communication volume.
If we additionally assume $\lmax \le N/(k^2\sqrt[k]{p}p\log{p})$ and $k \le \log{p}/(2\log\log{p})$, we can state a simplified running time of multi-level MS in \cref{corollary:multi-level}.
\begin{corollary}
  \label{corollary:multi-level}
  With the above assumptions, we obtain a running time of multi-level MS in $\mathcal{O}\left(\frac{N}{p}\log{n} + \alpha k\sqrt[k]{p} + \beta k\frac{N}{p}\log{\sigma}\right)$ in expectation.
\end{corollary}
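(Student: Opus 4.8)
The plan is to simply substitute the stated structural assumptions into the bound of \cref{thm:multi_level:simple:complexity} and collapse terms. First I would deal with the parameters $\Nsquig$ and $\nsquig$. Under the assumption $\lmax \le N/(k^2\sqrt[k]{p}p\log p)$ we get $k^2 r \lmax p \le N/\log p = o(N)$, hence $\Nsquig = N + k^2 r \lmax p = \mathcal{O}(N)$; in particular $k\Nsquig/p = \mathcal{O}(kN/p)$ and the latter is absorbed into $\beta k \frac{N}{p}\log\sigma$ up to the $\log\sigma \ge 1$ factor (and into $\frac{N}{p}\log n$ for the non-$\beta$ part). Similarly, since $\nsquig = \frac{\lmax}{\lmin}(n + k^2 r p)$ and $n \le N/\lmin$, one has $\frac{\nsquig}{p}\log p = \mathcal{O}\!\big(\frac{\lmax}{\lmin}(n/p + k^2 r)\log p\big)$; the same smallness assumption on $\lmax$ together with $\lmin \ge 1$ forces this to be dominated by $\frac{N}{p}\log n$ as well (roughly, $\frac{\lmax}{\lmin}\frac{n}{p}\log p \le \frac{\lmax n}{p}\log n$ and $\lmax n \le N\cdot(\lmax/\lmin) \cdot \ldots$; more directly, $\lmax$ small makes $\frac{\lmax}{\lmin}k^2 r \log p$ negligible against $N/p$). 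I would also fold the local-sorting term: $\frac{n}{p}\log\frac{n}{p} + \frac{D}{p} = \mathcal{O}(\frac{N}{p}\log n)$ since $D \le N$ and $n \le N$.

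Next I would handle the partitioning term $k\big(\alpha\log^2 p + k\sqrt[k]{p}\,\lmax\log p(1+\beta\log\sigma)\big)$. The assumption $k \le \log p/(2\log\log p)$ gives $k\log^2 p = \mathcal{O}(\alpha^{-1}\cdot\alpha\log^3 p/\log\log p)$... more usefully, it is used to ensure $k^2 \le \log^2 p$-type slack so that $k\alpha\log^2 p = \mathcal{O}(\alpha k \sqrt[k]{p})$ once one checks $\log^2 p = \mathcal{O}(\sqrt[k]{p})$ under $k \le \log p/(2\log\log p)$ (indeed $\sqrt[k]{p} = 2^{\log p/k} \ge 2^{2\log\log p} = \log^2 p$). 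So the latency part of partitioning is absorbed into $\alpha k\sqrt[k]{p}$. For the $\lmax$-dependent partitioning term, the smallness of $\lmax$ makes $k^2\sqrt[k]{p}\,\lmax\log p(1+\beta\log\sigma)$ tiny: it is at most $N/(p\log p)\cdot(1+\beta\log\sigma)\cdot\log p = \mathcal{O}(\frac{N}{p}(1+\beta\log\sigma))$, which is absorbed into $\frac{N}{p}\log n + \beta k\frac{N}{p}\log\sigma$.

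Finally I would invoke the exchange-primitive assumption $\exchange{P}{h}{\sqrt[k]{p}} = \mathcal{O}(\alpha\sqrt[k]{p} + \beta h)$ (used already in the discussion preceding \cref{corollary:multi-level}), which is inherited by $\exchangetilde{\cdot}{\cdot}{\cdot}$ up to the $(1+o(1))$ factor. Then the assignment-and-exchange term becomes $k\cdot\mathcal{O}(\alpha\sqrt[k]{p} + \beta\frac{\Nsquig}{p}\log\sigma) = \mathcal{O}(\alpha k\sqrt[k]{p} + \beta k\frac{N}{p}\log\sigma)$ using $\Nsquig = \mathcal{O}(N)$. Summing the three surviving contributions gives $\mathcal{O}\!\left(\frac{N}{p}\log n + \alpha k\sqrt[k]{p} + \beta k\frac{N}{p}\log\sigma\right)$, as claimed. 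The only genuinely delicate point—the step I would be most careful about—is verifying that every $\nsquig$- and $\lmax$-carrying subterm is truly dominated; this hinges on the interplay between $\lmax \le N/(k^2\sqrt[k]{p}p\log p)$, $\lmin\ge 1$, and $k\le\log p/(2\log\log p)$, and it is worth writing those absorptions out explicitly rather than waving them through, since a careless bound on $\frac{\nsquig}{p}\log p$ via $\lmax/\lmin$ could secretly reintroduce a factor the corollary claims to have removed.
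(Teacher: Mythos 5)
Your overall route is the intended one: the paper offers no separate proof of \cref{corollary:multi-level}; it is meant to follow by substituting the additional assumptions into \cref{thm:multi_level:simple:complexity}. Most of your absorptions are correct: $\Nsquig=N+k^2\sqrt[k]{p}\lmax p=\mathcal{O}(N)$ under $\lmax\le N/(k^2\sqrt[k]{p}p\log p)$; $k\alpha\log^2 p=\mathcal{O}(\alpha k\sqrt[k]{p})$ because $p^{1/k}=2^{\log p/k}\ge 2^{2\log\log p}=\log^2 p$ for $k\le\log p/(2\log\log p)$; the $\lmax$-dependent partitioning term collapses to $\mathcal{O}\parens*{\frac{N}{p}(1+\beta\log\sigma)}$; and the exchange term becomes $\mathcal{O}(\alpha k\sqrt[k]{p}+\beta k\frac{N}{p}\log\sigma)$ under the stated assumption on the exchange primitive. (For $k\Nsquig/p$, prefer the absorption into $\frac{N}{p}\log n$ via $k\le\log p\le\log n$ rather than into the $\beta$-term, since nothing guarantees $\beta\log\sigma=\Omega(1)$.)

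The step you flag as delicate is, however, genuinely broken as sketched, and it does not close from the stated assumptions if you take the theorem's $\nsquig$ at face value. You attempt $\frac{\lmax}{\lmin}\frac{n}{p}\log p\le\frac{N}{p}\log n$ via a chain ending in ``$\lmax n\le N\cdots$'', but $\lmax n\ge N$ always holds (equality only for equal-length strings); with $\lmin=1$, almost all strings of length $1$ and one string of length $\lmax$ chosen to satisfy the smallness assumption, $\frac{\lmax}{\lmin}\frac{n}{p}\log p$ exceeds $\frac{N}{p}\log n$ by an unbounded factor. The corollary is still true, but to prove it you must bypass the crude bound $\nsquig=\frac{\lmax}{\lmin}(n+k^2rp)$ for this term: in each level the merging cost is $\mathcal{O}(\Nsquig/p+s\log r)$ where $s$ is the number of strings per PE, and $s$ is at most the number of characters per PE divided by $\lmin$, i.e.\ $s=\mathcal{O}(\Nsquig/(p\lmin))=\mathcal{O}(N/p)$ since $\lmin\ge 1$. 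Summing over the $k$ levels with $k\log r=\log p$ gives a merging contribution of $\mathcal{O}\parens*{\frac{N}{p}\log p}=\mathcal{O}\parens*{\frac{N}{p}\log n}$, which is exactly what the corollary needs; the remaining piece $\frac{\lmax}{\lmin}k^2\sqrt[k]{p}\log p=\mathcal{O}(N/p)$ you already handled correctly via the smallness of $\lmax$.
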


\section{Multi-Level Prefix Doubling Merge Sort}

\label{sec:multi_level_pd_string_sorting}
The distinguishing prefix of $S$ is usually much smaller than total number of characters $N$.
In a distributed algorithm, we can use this property to reduce the communication volume by only exchanging the distinguishing prefixes.
By doing so, instead of explicitly sorting the input strings we obtain the information on where to find the $\maketh{i}$ smallest string of the input.
This, however, is sufficient in many use cases where string sorting is used, e.g., for suffix sorting \cite{DBLP:journals/jacm/KarkkainenSB06}.

Bingmann~et~al. approximate the distinguishing prefix of each string by an upper bound in an iterative doubling process \cite{DBLP:conf/ipps/Bingmann0S20} using a distributed single-shot Bloom filter (dSBF)~\cite{sanders2013communication}.
In each round they hash prefixes with geometrically increasing length of the strings and globally check for uniqueness of the hash values.
If the hash value of a prefix with length $d$ of string $s$ is unique, we find $\distMath{s} \le d$ and $s$ no longer needs to participate in the process.
That way for each string $s$ an approximate distinguishing prefix with length $\distApproxMath{s} \ge \distMath{s}$ can be determined for each string in expected $\mathcal{O}(\log(\dmax))$ rounds assuming a constant false positive probability of the Bloom filter.
This can be achieved with expected latency in $\mathcal{O}(\alpha p \log\dmax)$ and expected bottleneck communication volume in $\mathcal{O}(n/p\log p) + o(D/p\log\sigma)$~\cite{DBLP:conf/ipps/Bingmann0S20}.\footnote{The latency can be reduced by increasing the communication volume by a factor $\Theta(\log{p})$~\cite{DBLP:conf/ipps/Bingmann0S20}.}
By employing a $k$-level Bloom filter for duplicate detection, we generalize this approach to arbitrary levels of indirection.
\begin{theorem}
	\label{thm:k-level-bloomfilter}
	Using communication on a $k$-dimensional
	grid, performing at most $\nmax$ operations (insertions, queries) per PE
	on a dSBF of size
	$m \ge en$ can be done in time
	$\mathcal{O}\left(k\left(\alpha
      p^{1/k}+\beta\nmax\log\frac{mp}{n} + \nmax\log{k}\right)\right)$
	in expectation and with probability $\ge 1 - 1/p^{\omega(1)}$
	assuming the total number of operations $n = \omega(k^2p^{1+1/k}\log{p})$ and additionally $m = \poly(n)$.
\end{theorem}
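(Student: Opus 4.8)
The plan is to generalize the grid routing used for multi-level atomic sorting and for the single-level dSBF of the cited works to the dSBF operations, tracking communication volume by gap encoding and controlling load imbalance with concentration bounds. First I would fix the layout: without loss of generality (after the usual rounding) let $P:=p^{1/k}$ be an integer and arrange the $p$ PEs as a $k$-dimensional grid of side length $P$, so PE $q$ has coordinates given by the base-$P$ digits $(q_1,\dots,q_k)$ of $q$. Split the bit array $\{0,\dots,m-1\}$ into $p$ contiguous ranges of $\lceil m/p\rceil$ bits; PE $q$ owns range $q$, so an operation with hash value $h$ has \emph{target} $\lfloor hp/m\rfloor$ with \emph{target digits} $(d_1,\dots,d_k)$. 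Route every message to its target one dimension at a time: in round $t=1,\dots,k$ the PEs form lines of $P$ PEs agreeing on all coordinates but the $t$-th, and a direct all-to-all inside each line moves each held message to the PE whose $t$-th coordinate equals $d_t$; after round $k$ each message is on its owner, which applies the insertion or query. Queries get a reverse pass of $k$ further rounds carrying back the one-bit answers; since forward routing is deterministic, each PE remembers (locally only) for every held message the incoming-stream index it arrived on, so no origin identifiers travel on the return trip. A direct all-to-all among $P=p^{1/k}$ PEs in which each sends/receives $h$ bits costs $\mathcal{O}(\alpha p^{1/k}+\beta h)$, which already yields the $\mathcal{O}(k\alpha p^{1/k})$ latency term; the rest is bounding $h$ per round.

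The load-balance argument is the step I expect to be the main obstacle. Assuming (as in the single-shot Bloom filter construction it builds on, and justified by $m=\poly(n)$ together with $m\ge en$) that $h$ is near-uniform on $\{0,\dots,m-1\}$, the target of each of the $n$ operations is uniform on $\{0,\dots,p-1\}$ up to a $1\pm o(1)$ factor, and more generally its first $t$ target digits are uniform on $\{0,\dots,P-1\}^t$. Then the number of messages a fixed PE holds after round $t$, and the number it forwards to a fixed next hop in round $t$, are \emph{sums of independent indicators over the $n$ original operations} — the observation that dissolves the apparent dependence between rounds is that a message's target digits are a fixed function of its i.i.d.\ hash value, so one never has to condition on the outcomes of earlier rounds. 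Their means are $\mathcal{O}(\nmax)$ and $\mathcal{O}(\nmax/P)$ respectively, so a Chernoff bound together with a union bound over the $\mathcal{O}(k)$ rounds (both directions), the $p$ PEs and the $\le P$ sub-buckets per PE per round keeps all of these counts within a constant factor of their means with probability $1-1/p^{\omega(1)}$ as long as $\nmax/P=\omega(\log p)$, i.e.\ $n=\omega(k^2p^{1+1/k}\log p)$, which is precisely the hypothesis (the $k^2$ factor leaving ample slack).

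Finally the volume. In round $t$ a PE holds $\mathcal{O}(\nmax)$ hash values sharing their first $t-1$ target digits, hence lying in a single interval of width $m/P^{t-1}$; after sorting it splits them by the $t$-th digit into $\le P$ runs, the run for a given next hop lying in a subinterval of width $m/P^t$. Transmitting such a run of $g$ values as its sorted gap sequence under a universal variable-length code costs $\mathcal{O}(g\cdot\max(1,\log(m/(P^tg))))$ bits plus $\mathcal{O}(\log\nmax)$ bits to announce $g$; summing over the $\le P$ runs with $\sum g=\mathcal{O}(\nmax)$ and $g=\mathcal{O}(\nmax/P)$ from the previous step and using concavity of the logarithm, the per-PE volume in round $t$ is $\mathcal{O}(\nmax\log(mp/n)+P\log\nmax)$, with the digit stripping only shrinking later rounds. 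Since $\nmax=\omega(k^2P\log p)$ comfortably dominates $P\log\nmax$, the metadata term is absorbed, leaving $\mathcal{O}(\beta\nmax\log(mp/n))$ communication plus $\mathcal{O}(\nmax\log k)$ local work (sorting and merging the runs, de/encoding gaps, and the $\mathcal{O}(k)$-round bookkeeping) per round; multiplying by the $2k$ rounds, adding the $\mathcal{O}(k\alpha p^{1/k})$ latency and the owners' $\mathcal{O}(\nmax)$ work for applying the operations, and taking the failure probability from the previous step gives exactly the claimed expected running time and success probability.
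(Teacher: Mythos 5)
Your routing scheme, the one-bit reverse pass with local tables, the latency accounting, and the gap/Elias--Fano encoding argument with concavity all match the paper's proof. The genuine gap is in the load-balance step. You claim the number of messages a PE holds or forwards is a ``sum of independent indicators over the $n$ original operations'' because each message's target is a fixed function of its i.i.d.\ hash value. But in the intended application (and in the setting the theorem is meant to cover) many PEs perform operations on the \emph{same} element -- that is exactly what a duplicate-detection filter is queried with -- and all of those operations share one hash value, so their routing indicators are identical rather than independent. A single popular element contributes up to $p^{1/k}$ (per group, per round) perfectly correlated balls to one bin, and your Chernoff-plus-union-bound argument, and with it the claimed sufficient condition $\nmax/P=\omega(\log p)$, breaks down. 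The paper handles this with a dedicated balls-into-bins-with-duplicates lemma: by a majorization argument the worst case is all $p'$ PEs of a group holding the same set of values, the received count is then $p'$ times a binomial, and concentration within a factor $(1+1/k)$ requires $\mbar=\omega(k^2p'\log p)$ -- which is where the hypothesis $n=\omega(k^2p^{1+1/k}\log p)$ comes from; the $k^2$ you dismiss as ``ample slack'' is needed so that the per-round deviation $1/k$ compounds to only $(1+1/k)^k=O(1)$ over the $k$ rounds.

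A second, related omission: the paper's induction relies on the invariant that a duplicate hash value is forwarded only once per round. This keeps the received values locally unique (a hypothesis of the lemma) and bounds the multiplicity of any value within the next round's group by $r$; without deduplication the multiplicities, and hence the imbalance, can grow from round to round, and your direct ``argue from the original operations'' shortcut does not obviously repair this because the dependence on which duplicates were merged reintroduces conditioning on earlier rounds. Your local-work bound of $\mathcal{O}(\nmax\log k)$ per round is also asserted rather than justified (the paper obtains it via integer sorting, using $m=\poly(n)$), but that is minor compared to the independence issue. To fix the proposal you would need to add the per-round deduplication invariant and replace the independent-indicator concentration by an argument that tolerates multiplicities up to $p^{1/k}$, e.g.\ the paper's Lemma on balls into bins with duplicates.
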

A proof of \cref{thm:k-level-bloomfilter} can be found in \cref{appendix:distributed-duplicate-detection}.
A problem of using Bloom filters in the prefix doubling process is that the precondition on the overall number of operations required in \cref{thm:k-level-bloomfilter} might not hold when more and more strings drop out of the process because their distinguishing prefixes have already been determined.
We therefore switch to duplicate detection using (atomic) hypercube quicksort for sorting the hash values once there are too few strings left. 
Combining these approaches results in \cref{thm:prefix-doubling}.
\begin{theorem}
  \label{thm:prefix-doubling}
  For each string $s \in \mathcal{S}$ with $\distMath{s} \ge \log{p}/\log{\sigma}$ an approximation $\distApproxMath{s}$ with $\expectedValue{\distApproxMath{s}} = \mathcal{O}(\distMath{s})$ can be computed in time
    \[
      \mathcal{O}\left( \overbrace{\alpha k\sqrt[k]{p}\log{\dmax}}^{\mathrm{latency}} + \overbrace{\beta k \left(\frac{n}{p}\log{p}+ \frac{D}{p}\log{\sigma}\right)}^{\mathrm{communication \ volume}}  + \overbrace{k\frac{n}{p}\log{k}\log\log{\sigma} + \frac{D}{p}}^{\mathrm{local \ work}}\right)
	\]
	in expectation.
	We assume a balanced distribution of strings and their distinguishing prefixes, i.e., $\Theta(n/p)$ strings and $\Theta(D/p)$ per PE, and an overall number of strings $n = \mathcal{O}(\poly(p))$.
    Additionally, we assume $n/p = \omega(k^2\sqrt[k]{p}\log{p}\log\log{p})$ and $k \le \log{p}/(2\log\log{p})$.
\end{theorem}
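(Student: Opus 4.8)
The plan is to organize the prefix-doubling process into rounds $j = 1, 2, \dots$\ In round $j$ every string $s$ that is still \emph{active} hashes its $\ell_j$-prefix, where the lengths $\ell_j$ increase geometrically starting from $\ell_1 = \Theta(\log p / \log \sigma)$ -- the shortest prefix whose hash can plausibly be unique among $n = \poly(p)$ strings. The active strings insert these hashes into a $k$-level dSBF of size $m = \Theta(n_j) \ge e n_j$, where $n_j$ is the current number of active strings, and then query it, retiring every string whose hash is unique with $\distApproxMath{s} := \ell_j$. As soon as $n_j$ falls below $\Theta(k^2 p^{1+1/k}\log p)$, so that the precondition of \cref{thm:k-level-bloomfilter} no longer holds, I would switch to gathering the $\bigO{n_j}$ surviving hashes and detecting duplicates among them with atomic RQuick, continuing the doubling in this cheaper regime until no string is active.

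For correctness I would first note that a unique $\ell_j$-prefix hash implies that this prefix is distinguishing, hence $\distApproxMath{s} \ge \distMath{s}$; and then, following Bingmann~et~al.~\cite{DBLP:conf/ipps/Bingmann0S20}, argue that once $\ell_j \ge \distMath{s}$ the string survives round $j$ only through a false positive of the dSBF (or a spurious collision in the RQuick phase), an event of probability $\le c < 1$ for $m \ge e n_j$ and independent across rounds. Therefore $\distApproxMath{s} \le \ell_{j^\ast + G}$, where $j^\ast$ is the first round reaching length $\distMath{s}$ and $G$ is stochastically dominated by a geometric variable; since the $\ell_j$ grow geometrically this yields $\expectedValue{\distApproxMath{s}} = \bigOMath{\ell_{j^\ast}} = \bigOMath{\distMath{s}}$, where the hypothesis $\distMath{s} \ge \log p / \log \sigma = \Theta(\ell_1)$ is exactly what is needed to start the geometric chain at the right scale.

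For the running time, the quantity I would control is $\sum_j n_j$, the total number of (string, active-round) incidences. Because a string with $\distMath{s} \ge \ell$ contributes at least $\ell$ to $D$, at most $\bigO{D/\ell_j}$ strings reach round $j$ for a legitimate reason, while the geometric-independence argument makes the expected number of false-positive survivors summed over all rounds $\bigO{n}$; combined with $\sum_j 1/\ell_j = \bigO{1/\ell_1} = \bigO{\log\sigma/\log p}$ this gives $\expectedValue{\sum_j n_j} = \bigOMath{n + D\log\sigma/\log p}$. After a cheap rebalancing of the active strings at the start of each round (so that each PE holds $\bigO{n_j/p}$ of them), \cref{thm:k-level-bloomfilter} with $\nmax = \bigO{n_j/p}$ and $\log(mp/n_j) = \bigO{\log p}$ bounds round $j$ by $\bigOMath{k\parens*{\alpha p^{1/k} + \beta(n_j/p)\log p + (n_j/p)\log k}}$, plus $\bigO{n_j\ell_j/p}$ local work for incrementally hashing the prefixes. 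Summing over the $\bigO{\log\dmax}$ expected rounds, the start-up term contributes $\bigOMath{\alpha k\sqrt[k]{p}\log\dmax}$; the $\beta(n_j/p)\log p$ terms collapse, via the bound on $\expectedValue{\sum_j n_j}$, to $\bigOMath{\beta k\parens*{\tfrac{n}{p}\log p + \tfrac{D}{p}\log\sigma}}$; the incremental hashing totals $\bigO{D/p}$ because each distinguishing-prefix character is hashed $\bigO{1}$ times; and the $\nmax\log k$ contributions accumulate -- together with the $\bigO{\log\log\sigma}$ refinement needed to resolve strings near the shortest prefix lengths -- to $\bigOMath{k\tfrac{n}{p}\log k\log\log\sigma}$. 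For the RQuick tail, each round sorts $\bigO{n_j/p}$ constant-length hash keys and costs $\bigOMath{\alpha\log^2 p + \dots}$ by \cref{thm:rquick-runningtime}; since this regime is entered only for small $n_j$ and $k \le \log p/(2\log\log p)$ forces $p^{1/k} \ge \log^2 p$, the $\bigO{\log\dmax}$ such rounds are dominated by the latency and local-work terms already present.

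I expect the last paragraph to be the main obstacle: showing that the costs of the $\bigO{\log\dmax}$ rounds telescope to the claimed bounds \emph{without} an extra $\log\dmax$ factor on the communication-volume and local-work terms -- which hinges entirely on the $\expectedValue{\sum_j n_j} = \bigOMath{n + D\log\sigma/\log p}$ estimate and the geometric decay of $n_j$ once distinguishing prefixes are exceeded -- and on calibrating the dSBF-to-RQuick switch so that neither regime dominates. A secondary technicality is keeping the active strings balanced across PEs as they retire, which needs a redistribution step whose cost must be shown negligible and absorbed into the bounds above.
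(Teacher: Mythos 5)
Your high-level plan coincides with the paper's: geometric doubling starting at $\Theta(\log p/\log\sigma)$, duplicate detection with the $k$-level dSBF of \cref{thm:k-level-bloomfilter}, and a switch to atomic hypercube quicksort once too few strings remain; your approximation argument and your communication-volume accounting are sound. The genuine gap is in the local-work accounting. Your master estimate $\expectedValue{\sum_j n_j} = \bigOMath{n + D\log\sigma/\log p}$ is a correct upper bound, but multiplying it by the per-operation local cost $k\log k$ yields $k\frac{n}{p}\log k + \frac{k\log k\,\log\sigma}{\log p}\cdot\frac{D}{p}$, and the second term is \emph{not} $\bigOMath{k\frac{n}{p}\log k\log\log\sigma + \frac{D}{p}}$ in general: for $k=\Theta(\log p/\log\log p)$, $\log\sigma=\Theta(\log p)$ and $\davg=\Theta(\log p)$ it is $\Theta(\frac{n}{p}\log^2 p)$, exceeding the claimed local work by roughly a factor $\log p/\log\log p$. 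The reason is that your per-round count $\bigO{D/\ell_j}$ grossly overestimates $\min(n, D/\ell_j)$ in early rounds; the cancellation that rescues the communication term (the $\log p$ bits per operation absorbing $1/\ell_1=\log\sigma/\log p$) has no analogue for the $\log k$ local cost. The paper therefore argues round-regime by round-regime: it charges all $n/p$ strings only for the $\bigO{\log\log\sigma}$ rounds up to prefix length $\min(\davg,\log p)$ — which is exactly where the $k\frac{n}{p}\log k\log\log\sigma$ term originates — and for rounds between $\log p$ and $\davg$ it uses that their number is $\log(\davg/\log p)\le\davg/\log p$ together with $k\log k=\bigO{\log p}$ (from $k\le\log p/(2\log\log p)$) to absorb the local work into $D/p$, with geometric decay of participants beyond $\davg$. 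Your sketch needs this finer split; the lumped incidence bound alone cannot deliver the stated theorem.

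Two secondary issues. The quicksort tail is not simply ``dominated'': each such round costs $\bigOMath{\mathcal{X}_l\log^2 p(\beta+1) + \beta\log^3 p}$ by the paper's duplicate-detection bound, and taming these over $\bigO{\log\dmax}$ rounds requires its own case analysis on $\dmax$ versus $\log^3 p$ and on $\davg$ versus $\log^2 p$ (the paper introduces the threshold prefix length $\log^3 p$ precisely for this), while the assumption $n/p=\omega(k^2\sqrt[k]{p}\log p\log\log p)$ — unused in your sketch — is what pays for early rounds in which the active count drops below the Bloom-filter operation threshold. Also, your per-round ``cheap rebalancing of the active strings'' is neither costed nor needed: the paper leaves strings in place and routes only hash values, using the assumed balanced distribution of strings and of $D$ to bound the expected per-PE operation count; physically redistributing active strings (or their current prefixes) every round could itself incur communication comparable to the entire budget.
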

A proof for \cref{thm:prefix-doubling} can be found in \cref{appendix:section:prefix-doubling}.
Since we assume all strings to be unique, we can also bound the average distinguishing prefix length $D/n = \Omega(\log{n} / \log{\sigma}) = \Omega(\log{p} / \log{\sigma})$.
Hence, there are only few ($\le n/\sigma$) strings with small distinguishing prefixes for which \cref{thm:prefix-doubling} does not yield an (expected) constant factor approximation of the actual distinguishing prefix.
Therefore, we find the expected value of the sum $D_{\approx}$ of the approximate distinguishing prefixes determined with \cref{thm:prefix-doubling} to be in $\mathcal{O}(D).$
In conjunction with the assumed balanced distribution of $D$ over the PEs, we find $\mathcal{O}(\Dsquig/p)$ with $\Dsquig = D + k^2r\dmax p$ as an upper bound on the expected number of characters per PE in the multi-level merge sort algorithm when executed on the approximated distinguishing prefixes only.
Furthermore, since $\dmax = \Omega(\log{n}/\log{\sigma}) = \Omega(\log{p}/\log{\sigma}) $, and we therefore approximate $\dmax$ up to an expected constant factor, we find $\mathrm{T_{MS}}=(n, D, \dmax)$, as an upper bound for the running time of the merge sort part of multi-level \textbf{p}refix \textbf{d}oubling \textbf{m}erge \textbf{s}ort (PDMS) in \cref{thm:multi_level:pdms}.
\begin{theorem}%
	\label{thm:multi_level:pdms}

	Multi-level PDMS with $r=\sqrt[\leftroot{2}k]{p}$, using character-based sampling and
	assignment, and assuming the preconditions of \cref{thm:prefix-doubling}, runs in expected time\fullstop{}
    \begin{equation*}
		\begin{gathered}
          \underbrace{\mathcal{O}\parens*{\overbrace{\log{\dmax}\parens*{\alpha k\sqrt[k]{p}}}^{\textrm{add. latency}} + \overbrace{k\beta\frac{n}{p}\log{p}}^{\textrm{add. comm. volume}} + \overbrace{k\frac{n}{p}\log{k}\log\log{\sigma}}^{\textrm{add. local work}}}}_\text{prefix doubling} + \mathrm{T_{MS}}(n,D,\dmax)\text{.}
		\end{gathered}
	\end{equation*}
\end{theorem}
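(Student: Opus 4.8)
The plan is to assemble the running time by combining the cost of the prefix-doubling phase with the cost of running multi-level MS on the approximated distinguishing prefixes, and to argue that the parameter substitution $N \mapsto D$, $\lmax \mapsto \dmax$ in $\mathrm{T_{MS}}$ is legitimate. First I would invoke \cref{thm:prefix-doubling}: under the stated preconditions, each string $s$ with $\distMath{s} \ge \log p/\log\sigma$ receives an approximation $\distApproxMath{s}$ with $\expectedValue{\distApproxMath{s}} = \mathcal{O}(\distMath{s})$, at the cost of the three terms displayed there (latency $\alpha k \sqrt[k]{p}\log\dmax$, communication volume $\beta k(n/p \log p + D/p\log\sigma)$, and local work $k (n/p)\log k \log\log\sigma + D/p$). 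These are exactly the ``prefix doubling'' summands in the claimed bound, so the first half of the statement is immediate; the work is in the second half.

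Next I would handle the strings with tiny distinguishing prefixes. Since all strings are unique, $\distMath{s} = \Omega(\log n/\log\sigma) = \Omega(\log p/\log\sigma)$ for all but at most $n/\sigma$ strings (those are the only ones whose distinguishing prefix can be shorter than $\log_\sigma n$), and for these exceptional strings we trivially bound $\distApproxMath{s} \le \dmax$. Summing, $\expectedValue{D_{\approx}} \le \sum_{\distMath{s}\ge \log_\sigma p} \expectedValue{\distApproxMath{s}} + (n/\sigma)\dmax = \mathcal{O}(D) + (n/\sigma)\dmax = \mathcal{O}(D)$, where the last step uses $\dmax = \mathcal{O}(D)$ (since $\dmax$ is the length of one distinguishing prefix and $D$ is the sum of all $n$ of them) and $n/\sigma \le n$; more carefully one notes $(n/\sigma)\dmax \le (n/\sigma)\cdot\max_s\distMath{s} \le \sum_s \distMath{s} = D$ because $n/\sigma$ is at most the number of terms in that sum minus one once $\sigma \ge 2$. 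Hence the expected total length of the exchanged (approximate) distinguishing prefixes is $\mathcal{O}(D)$, and by the assumed balanced distribution of $D$ each PE holds $\mathcal{O}(D/p)$ such characters in expectation.

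Then I would feed these approximated prefixes into multi-level MS and apply \cref{thm:multi_level:simple:complexity} with $N$ replaced by $\expectedValue{D_\approx} = \mathcal{O}(D)$ and $\lmax$ replaced by $\dmax$. Here I need two small justifications: (i) the character-based sampling and bounded-assignment bounds of \cref{theorem:character-based-sampling} and \cref{thm:bounded-assignment} depend on the longest string present, which is now at most $\dmax$, so $\Nsquig$ becomes $\Dsquig = D + k^2 r\dmax p$ and $\nsquig$ is governed by $n$ and $\dmax/\lmin$; and (ii) since $\dmax = \Omega(\log p/\log\sigma)$, the prefix-doubling phase approximates $\dmax$ itself to within an expected constant factor, so using $\dmax$ rather than $\widehat{d}_\approx$ in the bound only changes constants. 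With these in hand, $\mathrm{T_{MS}}(n, D, \dmax)$ is a valid upper bound for the merge-sort part, and the theorem follows by adding it to the prefix-doubling cost. One technical point to flag: the running time of \cref{thm:multi_level:simple:complexity} is stated for a worst-case character count, whereas here $D_\approx$ is a random variable, so strictly I should either argue that $\mathrm{T_{MS}}$ is (up to constants) concave/monotone in its character argument and apply Jensen/linearity of expectation, or carry the expectation through each phase individually — the data exchange $\exchangetilde{\cdot}{\cdot}{\cdot}$ being the only term where this needs a moment's care, since $\exchange{P}{h}{r}$ is bounded below by $\alpha r + \beta h$ and one wants $\expectedValue{\exchange{P}{h(D_\approx)}{r}} = \mathcal{O}(\exchange{P}{h(\expectedValue{D_\approx})}{r})$.

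I expect the main obstacle to be precisely this last point — transferring the \emph{worst-case} guarantee of \cref{thm:multi_level:simple:complexity} to the \emph{expected} setting where the character count $D_\approx$ is random and, moreover, the per-PE distribution of $D_\approx$ is only balanced in expectation rather than deterministically. Resolving it cleanly requires either a concentration argument for $D_\approx$ (using that the Bloom-filter rounds behave independently enough) or a careful phase-by-phase re-derivation with expectations pushed inside each $\mathcal{O}(\cdot)$, together with a remark that the $k^2 r\dmax p$ additive imbalance term in $\Dsquig$ dominates any lower-order fluctuation so that the simplification $D_\approx = \mathcal{O}(D)$ in expectation suffices.
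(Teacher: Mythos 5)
Your overall architecture is the same as the paper's: charge the three "prefix doubling" terms to \cref{thm:prefix-doubling}, argue $\expectedValue{D_\approx}=\mathcal{O}(D)$, and then run multi-level MS on the approximated prefixes so that $\mathrm{T_{MS}}(n,D,\dmax)$ (with $\Nsquig$ replaced by $\Dsquig = D+k^2r\dmax p$, using the balanced distribution of $D$ and the fact that $\dmax=\Omega(\log p/\log\sigma)$ is itself approximated within an expected constant factor) bounds the merge-sort part. Your closing remark about pushing expectations through the worst-case bound of \cref{thm:multi_level:simple:complexity} is a fair observation about a point the paper treats informally, not a divergence.

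However, there is a genuine error in your treatment of the strings with small distinguishing prefixes. You bound their total contribution by $(n/\sigma)\dmax$ and then claim $(n/\sigma)\dmax \le \sum_s \distMath{s} = D$ "because $n/\sigma$ is at most the number of terms in that sum minus one." That inequality is false in general: $D \ge \dmax$ only accounts for a single term, and nothing forces the average distinguishing prefix to be as large as $\dmax/\sigma$. For instance, with $\sigma=2$, take two strings sharing a common prefix of length $L \gg n\log_\sigma n$ and $n-2$ strings with distinguishing prefixes of length $\Theta(\log_\sigma n)$; then $D = \Theta(L + n\log_\sigma n)$ while $(n/\sigma)\dmax = \Theta(nL) \gg D$. (Your auxiliary claim $\distApproxMath{s}\le\dmax$ for these strings is also not justified as stated, since the doubling process starts at $l^{\mathrm{init}}=\Theta(\log p/\log\sigma)$ and can overshoot, though this only costs constants.) The correct argument, which is the one the paper relies on, does not use $\dmax$ at all: for a string with $\distMath{s} < \log p/\log\sigma$ the doubling process terminates, in expectation, after the first tested prefix length plus a constant number of extra rounds (constant false-positive probability), so $\expectedValue{\distApproxMath{s}} = \mathcal{O}(\log p/\log\sigma) = \mathcal{O}(D/n)$ by the uniqueness-based bound $D/n=\Omega(\log p/\log\sigma)$. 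Hence even all $\le n/\sigma$ such strings together contribute $\mathcal{O}(D)$ in expectation, which is what you need to conclude $\expectedValue{D_\approx}=\mathcal{O}(D)$; with that repair the rest of your proposal goes through.
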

Note that the $\mathcal{O}(\beta kD/p\log{\sigma} + D/p)$ part of the running time of the prefix doubling process in \cref{thm:prefix-doubling} is subsumed by $\mathrm{T_{MS}}(n,D,\dmax)$ and thus not explicitly stated in \cref{thm:multi_level:pdms}.

\section{Experimental Evaluation}
\label{sec:evaluation}
We now discuss the experimental evaluation of the following distributed-memory algorithms.
\begin{description}[nolistsep,nosep]
	\item[$\ms_k$] Our new multi-level string merge sort with $k$ levels of recursion, see \cref{sec:multi_level_string_sorting}.
	\item[$\pdms_k$] Our new multi-level doubling string merge sort including prefix approximation using grid-wise Bloom filter with $k$ levels of recursion, see \cref{sec:multi_level_pd_string_sorting}.
	\item[$\LcpRQuick$] \RQuick~\cite{DBLP:conf/ipps/Bingmann0S20} with our string-specific optimization, see \cref{sec:algorithmic_building_blocks}.
\end{description}
The single-level variants $\ms_1$, $\pdms_1$ and $\RQuick$ are implementations by Bingmann~et~al.~\cite{DBLP:conf/ipps/Bingmann0S20} that we improved slightly.
We also include the state-of-the-art shared-memory parallel algorithm \psV~\cite{bingmann2013parallel}.
We ran \psV\ on an AMD Epyc Rome 7702P CPU with 64 cores
(\qty{2}{\giga\hertz} base and \qty{3.5}{\giga\hertz} boost frequency) equipped with \qty{1024}{\giga\byte} DDR4 ECC RAM.
All distributed-memory experiments were performed on SuperMUC-NG consisting of \num{6336} nodes (\num{792} nodes per island).
Each node is equipped with two Intel Skylake Xeon Platinum 8174 CPUs (24 cores each, \qty{3.1}{\giga\hertz} base frequency) and \qty{96}{\giga\byte} RAM.
Communication between nodes uses a $\qty[per-mode=single-symbol]{100}{\giga\bit\per\second}$ Omni-Path network.
All algorithms are implemented in C++ and compiled with GCC~11.2.0 with flags \texttt{-O3} and \texttt{-march=native}.
We use the KaMPIng~\cite{KaMPIngArXiv} MPI bindings for our implementations and Open MPI~v4.0.7 for interprocess communication.
Reported times are the average of five runs excluding the first iteration (MPI warm-up phase).

\begin{table}[b]
	\begin{center}
		\caption{Characteristics of real-world data sets used for the strong scaling experiment.}%
		\label{tbl:evaluation:real_world_sets}
		\begin{threeparttable}
			\begin{tabular}{lrrrrrrr}
				\toprule
				                  & \multicolumn{1}{c}{$n$}
				                  & \multicolumn{1}{c}{$N$}
				                  & \multicolumn{1}{c}{$N/n$}
				                  & \multicolumn{1}{c}{$L/n$}
				                  & \multicolumn{1}{c}{$D/N$}
				                  & \multicolumn{1}{c}{$\lmax$}
				\\
				\midrule
				\textsc{CCrawl}   & \qty{2.13}{\giga\nothing}
				                  & \qty{100}{\giga\nothing}
				                  & 46.98
				                  & 31.27
				                  & 0.726
				                  & \qty{1.04}{\mega\nothing}
				\\
				\textsc{Wiki}     & \qty{1.42}{\giga\nothing}
				                  & \qty{97.7}{\giga\nothing}
				                  & 68.59
				                  & 25.82
				                  & 0.415
				                  & \qty{2.07}{\mega\nothing}
				\\
				\textsc{WikiText} & \qty{0.97}{\giga\nothing}
				                  & \qty{81.7}{\giga\nothing}
				                  & 84.21
				                  & 25.27
				                  & 0.336
				                  & \qty{2.07}{\mega\nothing}
				\\
				\bottomrule
			\end{tabular}
		\end{threeparttable}
	\end{center}
\end{table}

All variants use string-based regular sampling with a sampling factor of $2$.
We use \LcpRQuick\ to sort the samples.
For simplicity, we use a grid-wise group (string) assignment in the implementation of our multi-level algorithms.
Here, the PEs are arranged in a $p'\times r$ grid and PEs exchange buckets only along the rows, i.e., PE $i$ sends its bucket for subgroup $j$ along its row to the PE in the $\maketh{j}$ column.
Also, for all-to-all exchanges a simple $k$-dimensional grid all-to-all is used which provides a latency in $\bigO{\alpha \sqrt[k]{p}}$.
LCP compression is used during string exchange phases for data sets where significant common prefixes can be expected.

Multi-level variants always ensure one group per node on the final level
of sorting, i.e., $k$-level variants fall back to $k-1$ or $k-2$ levels for $p<2^{k-1}48$.
For three-level variants, group sizes for the first two levels are chosen such that splitting
factors are as close as possible.%

For strong-scaling experiments, we use real-world data sets, see \cref{tbl:evaluation:real_world_sets} for details.
\begin{description}[nolistsep,nosep]
	\item[\textbf{\textsc{CommonCrawl (CCrawl)}}]
	      consists of the first \qty{100}{\giga\byte} of WET files from the Sep./\allowbreak Oct. 2023 Common Crawl archive
	      (\url{https://index.commoncrawl.org/CC-MAIN-2023-40/}).
	      The WET format consist mostly of plain text with small meta data headers.
	\item[\textbf{\textsc{Wikipedia (Wiki)}}]
	      consists of a dump, from \DTMdate{2023-12-20}, of all pages in the English Wikipedia
	      in XML format \emph{without} edit history
	      (\url{https://dumps.wikimedia.org/enwiki/20231220/}).

	\item[\textbf{\textsc{WikipediaText (WikiText)}}]
	      are from dumps of \textsc{Wikipedia} without any XML metadata.%
\end{description}

\iffigures
	\begin{figure}[t]
		\centering
		\pgfplotsset{
    np ratio legend/.style={
      legend columns=1,
            legend style={at={(-2.5, -0.26)}, anchor=north},
            legend entries={$\ms_1$\\$\pdms_1$\\$\ms_2$\\$\pdms_2$\\$\ms_3$\\$\pdms_3$\\
                    $\RQuick$\\$\LcpRQuick$\\},
        },
}

\begin{tikzpicture}[baseline]
    \begin{groupplot}[
            footnotesize,
            group style={
                    group size=4 by 1,
                    xticklabels at=all,
                    ylabels at=edge left,                    
                    yticklabels at=all,
                    horizontal sep=2em,
                    vertical sep=5em,
                },
            height=.175\textheight, width=.21\textwidth,
            scale only axis, enlarge x limits=0.04,
            enlarge y limits=0.03,
            xtick=data, xmode=log, log basis x=2, xmajorgrids,
            xticklabel=\pgfmathparse{2^\tick}\pgfmathprintnumber{\pgfmathresult},
            x tick label style={rotate=90,/pgf/number format/.cd,fixed,precision=0},
            xlabel={$\text{nodes}\times48/\textnormal{cores}$},
            ylabel={$\text{wall time}/\unit{\second}$},
            label style={font=\footnotesize},
            title style={font=\footnotesize},
            cycle list name=colorListUpToThree,
        ]

        \nextgroupplot[title={$n/p = 10^4$}, ymin=0, ymax=0.6, ytick distance=0.1,
        y tick label style={/pgf/number format/.cd,fixed,fixed zerofill=true,precision=1}]

        \addplot coordinates { (4,0.0505905) (8,0.0662462) (16,0.102683) (32,0.216874) (64,0.449523) (128,1.16063) (256,1.17907) (512,1.95185) (1024,3.79291) };
        \addplot coordinates { (4,0.0418138) (8,0.0557095) (16,0.102141) (32,0.201617) (64,0.433726) (128,1.56009) (256,2.97662) (512,7.05657) (1024,18.6557) };
        \addplot coordinates { (4,0.0560272) (8,0.0587485) (16,0.068184) (32,0.0689761) (64,0.0794039) (128,0.129826) (256,0.174955) (512,0.379632) (1024,0.860719) };
        \addplot coordinates { (4,0.0373176) (8,0.0376604) (16,0.0470978) (32,0.0493042) (64,0.0603443) (128,0.0826471) (256,0.148347) (512,0.298922) (1024,0.66346) };
        \addplot coordinates { (4,0.0897979) (8,0.0910808) (16,0.0919553) (32,0.0936831) (64,0.106478) (128,0.131268) (256,0.163788) (512,0.266722) (1024,0.531366) };
        \addplot coordinates { (4,0.0474277) (8,0.0466564) (16,0.049892) (32,0.0539536) (64,0.0600533) (128,0.0780923) (256,0.109479) (512,0.19862) (1024,0.409279) };

        \addplot coordinates { (4,0.254327) (8,0.291817) (16,0.331052) (32,0.375952) (64,0.444785) (128,0.502336) (256,0.591295) (512,0.662481) (1024,0.86462) };
        \addplot coordinates { (4,0.226327) (8,0.26956) (16,0.300978) (32,0.342181) (64,0.407616) (128,0.459926) (256,0.88805) (512,0.989355) (1024,0.813704) };

        \nextgroupplot[
            title={$n/p = 10^5$},
            ymin=0, ymax=3.2, ytick distance=0.4,
            y tick label style={/pgf/number format/.cd,fixed,fixed zerofill=true,precision=1},
            np ratio legend,
            legend to name={leg:np_ratio_legend}
        ]

        \addplot coordinates { (4,0.644801) (8,0.677757) (16,0.813842) (32,0.832994) (64,1.0635) (128,1.75911) (256,2.88921) (512,5.53908) (1024,12.8824) };
        \addplot coordinates { (4,0.59318) (8,0.606499) (16,0.647423) (32,0.753289) (64,0.98564) (128,1.64705) (256,3.72144) (512,11.1687) (1024,24.9355) };
        \addplot coordinates { (4,0.831777) (8,0.828691) (16,0.839111) (32,0.921098) (64,0.955699) (128,1.01874) (256,1.05302) (512,1.22557) (1024,1.76392) };
        \addplot coordinates { (4,0.633879) (8,0.631919) (16,0.646015) (32,0.662322) (64,0.673849) (128,0.715472) (256,0.788839) (512,0.945914) (1024,1.30923) };
        \addplot coordinates { (4,1.22442) (8,1.20322) (16,1.13623) (32,1.14768) (64,1.19633) (128,1.27094) (256,1.33497) (512,1.50795) (1024,1.87613) };
        \addplot coordinates { (4,0.781282) (8,0.759182) (16,0.730639) (32,0.725067) (64,0.740623) (128,0.761897) (256,0.80277) (512,0.901721) (1024,1.07139) };

        \addplot coordinates { (4,2.88197) (8,3.2947) (16,3.70603) (32,4.14237) (64,4.87946) (128,5.411) (256,6.28294) (512,6.90788) (1024,9.06809) };
        \addplot coordinates { (4,2.64504) (8,3.03981) (16,3.40672) (32,3.78964) (64,4.48999) (128,4.99303) (256,5.79771) (512,6.38561) (1024,8.51642) };

        \nextgroupplot[title={$n/p = 10^6$}, ymin=0, ymax=12, ytick distance=2]

        \addplot coordinates { (4,5.20592) (8,5.43991) (16,5.62522) (32,5.78779) (64,6.57597) (128,8.23696) (256,9.33346) (512,12.3119) (1024,24.1115) };
        \addplot coordinates { (4,4.62864) (8,4.77387) (16,4.83676) (32,5.12225) (64,5.35129) (128,6.57427) (256,8.54862) (512,15.2255) (1024,31.2713) };
        \addplot coordinates { (4,7.17933) (8,7.1421) (16,7.21363) (32,7.37185) (64,7.87024) (128,8.39283) (256,8.3063) (512,8.41634) (1024,11.6387) };
        \addplot coordinates { (4,5.23343) (8,5.23599) (16,5.29987) (32,5.38873) (64,5.41789) (128,5.47008) (256,5.61729) (512,5.84161) (1024,6.97526) };
        \addplot coordinates { (4,11.3501) (8,11.0116) (16,10.3207) (32,10.3167) (64,10.5077) (128,11.4675) (256,11.4457) (512,11.8219) (1024,14.4682) };
        \addplot coordinates { (4,6.65732) (8,6.44196) (16,6.14305) (32,6.1952) (64,6.11161) (128,6.17939) (256,6.32011) (512,6.53512) (1024,7.37365) };

        \nextgroupplot[group/empty plot]
      \end{groupplot}
      \node at ($(group c4r1)-(.75,0)$) {\ref*{leg:np_ratio_legend}};
\end{tikzpicture}
		\caption[%
			Overall times for the weak scaling experiment with variable
			$n/p$ ratio.
		]{%
			Average sorting times (weak scaling) using \textsc{DNData} with
			$\ell=500$ and $D/N=0.5$.
		}
		\label{fig:evaluation:np_ratio}
	\end{figure}
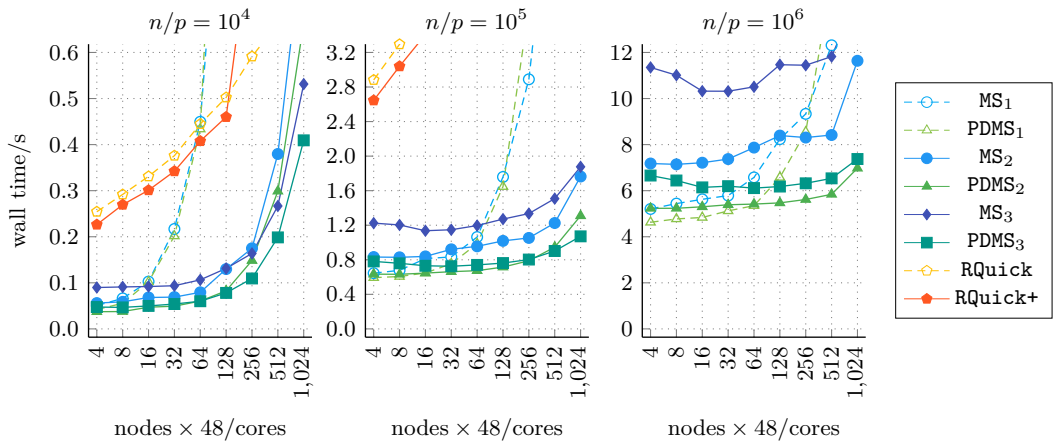
\fi

For our weak-scaling experiments we use the \emph{DNGenerator} string generator \cite{schimek2019distributed, DBLP:conf/ipps/Bingmann0S20} to generate string sets with configurable $D/N$ ratio (denoted by $\textsc{DNData}$).
This allows us to influence running times
of local sorting, the effectiveness of LCP compression, and the length of distinguishing prefixes.

Additional results are given in \cref{appendix:sec:additional_results} for the following experiments.

\subsection{Sorting Small to Medium Sized String Sets}
In this weak-scaling experiment, we evaluate the algorithms on \textsc{DNData} with a string length of $500$ characters, $D/N = 0.5$, and $n/p \in \braces{10^4, 10^5, 10^6}$.
See \cref{fig:evaluation:np_ratio} for the results.
The largest input is near the realistic limit for this system with roughly 2 GB RAM per PE.
We could not run the \RQuick{} variants on it, due to their memory consumption.

The running times broadly confirm the expected relation between input size and scaling behavior of algorithms.
Two-level merge sort significantly outperforms the single-level version on all input sizes for
sufficiently large values of $p$.
For small inputs, adding a third level leads to further improvements from 256 cores on.
As expected, the improvement is most obvious for the smallest inputs with $n/p=10^4$.
Here, the single-level algorithms scale roughly linearly with the number of PEs, as the running times
approximately double for every doubling of $p$.
For $\ms_1$ the scaling behavior can mostly be attributed to the time required for partitioning,
with $\Theta(p)$ samples on each PE needing to be sorted.
For $\pdms_1$ a significant amount of time is also spent in the distinguishing prefix approximation.
The \RQuick{} variants perform significantly worse than our (multi-level) merge sort algorithms.

\subsection{Influence of D/N Ratios}
\iffigures
	\begin{figure*}[t]
		\centering
		\pgfplotsset{
    dn ratio time style/.style={
            ymin=0, ymax=6,
            ytick distance=1
        },
    dn ratio comm style/.style={
            ymin=0, ymax=3.0,
            ytick distance=0.5,
            y tick label style={
                    /pgf/number format/.cd,
                    fixed,fixed zerofill,precision=1
                },
        }
}

\begin{tikzpicture}[baseline]
    \begin{groupplot}[
            footnotesize,
            group style={
                    group size=5 by 1,
                    x descriptions at=edge bottom,
                    y descriptions at=edge left,
                    horizontal sep=.5em,
                    vertical sep=1ex,
                },
            xmajorgrids, scale only axis, enlargelimits=0.05,
            height=.175\textheight, width=.1685\textwidth,
            xlabel={$\text{nodes}\times48/\textnormal{cores}$},
            xmin=4, xmax=128,
            xtick={4, 8, 16, 32, 64, 128}, xmode=log, log basis x=2,
            xticklabel=\pgfmathparse{2^\tick}\pgfmathprintnumber{\pgfmathresult},
            x tick label style={rotate=90,/pgf/number format/.cd,fixed,precision=0},
            label style={font=\footnotesize}, title style={font=\footnotesize},
            cycle list name=colorListUpToTwoWithPSV,
        ]

        \nextgroupplot[
            title={$D/N = 0.0$},
            ylabel={$\text{wall time}/\unit{\second}$},
            dn ratio time style,
        ]

        \addplot coordinates { (4,0.411154) (8,0.470807) (16,0.535027) (32,0.735658) (64,0.883471) (128,1.40988) };
        \addplot coordinates { (4,0.10501) (8,0.127011) (16,0.134759) (32,0.17004) (64,0.225518) (128,0.40711) };

        \addplot coordinates { (4,0.694129) (8,0.713781) (16,0.79631) (32,0.789455) (64,0.860264) (128,0.975862) };
        \addplot coordinates { (4,0.128453) (8,0.132479) (16,0.144491) (32,0.152783) (64,0.164886) (128,0.187977) };
        
        \addplot coordinates { (4,2.10119) (8,2.48125) (16,2.88245) (32,3.39542) (64,3.82968) (128,4.65064) };
        \addplot coordinates { (4,2.01704) (8,2.40037) (16,2.77633) (32,3.29132) (64,3.71572) (128,4.52193) };

        \addplot coordinates { (4,0.117574) (8,0.209896) (16,0.411806) (32,0.810622) (64,1.59666) };

        \nextgroupplot[title={$D/N = 0.25$}, dn ratio time style]

        \addplot coordinates { (4,0.526474) (8,0.569324) (16,0.648955) (32,0.769335) (64,0.969114) (128,1.53645) };
        \addplot coordinates { (4,0.350438) (8,0.374366) (16,0.398711) (32,0.472877) (64,0.619943) (128,1.14218) };

        \addplot coordinates { (4,0.76341) (8,0.785021) (16,0.829133) (32,0.837794) (64,0.91247) (128,0.994602) };
        \addplot coordinates { (4,0.372621) (8,0.383426) (16,0.386335) (32,0.403596) (64,0.417763) (128,0.444609) };
        
        \addplot coordinates { (4,2.45876) (8,2.86151) (16,3.25544) (32,3.76394) (64,4.21841) (128,5.05895) };
        \addplot coordinates { (4,2.31128) (8,2.70483) (16,3.08396) (32,3.55773) (64,3.99077) (128,4.81148) };

        \addplot coordinates { (4,0.567435) (8,0.781142) (16,1.55513) (32,2.86373) (64,5.65101) };

        \nextgroupplot[
            title={$D/N = 0.5$},
            dn ratio time style,
            legend columns=10,
            legend style={at={(0.5,-0.35)},anchor=north,font=\small},
            legend entries={$\ms_1$\\$\pdms_1$\\$\ms_2$\\$\pdms_2$\\\RQuick\\\LcpRQuick\\\psV\\}
        ]

        \addplot coordinates { (4,0.647558) (8,0.702012) (16,0.795281) (32,0.866696) (64,1.05171) (128,1.677) };
        \addplot coordinates { (4,0.585251) (8,0.607831) (16,0.665228) (32,0.78684) (64,1.00089) (128,1.69378) };

        \addplot coordinates { (4,0.825957) (8,0.848163) (16,0.843843) (32,0.940804) (64,0.931586) (128,1.00233) };
        \addplot coordinates { (4,0.634171) (8,0.651667) (16,0.650813) (32,0.653211) (64,0.68046) (128,0.721836) };

        \addplot coordinates { (4,2.88478) (8,3.30099) (16,3.71971) (32,4.23956) (64,4.71348) (128,5.56879) };
        \addplot coordinates { (4,2.63058) (8,3.02017) (16,3.43249) (32,3.90504) (64,4.31173) (128,5.11533) };

        \addplot coordinates { (4,1.00276) (8,1.37545) (16,2.85422) (32,4.86187) (64,9.59154) };

        \nextgroupplot[title={$D/N = 0.75$}, dn ratio time style]

        \addplot coordinates { (4,0.805456) (8,0.841667) (16,0.888476) (32,1.00922) (64,1.24356) (128,1.82874) };
        \addplot coordinates { (4,0.920259) (8,0.980497) (16,1.03966) (32,1.19289) (64,2.17865) (128,2.45614) };

        \addplot coordinates { (4,0.936042) (8,0.937984) (16,0.960939) (32,1.00701) (64,1.0321) (128,1.05578) };
        \addplot coordinates { (4,1.05769) (8,1.06319) (16,1.07639) (32,1.11342) (64,1.12643) (128,1.20645) };
        
        \addplot coordinates { (4,3.36251) (8,3.81432) (16,4.22188) (32,4.75319) (64,5.23164) (128,6.11473) };
        \addplot coordinates { (4,2.99865) (8,3.38219) (16,3.77709) (32,4.2515) (64,4.70226) (128,5.49384) };

        \addplot coordinates { (4,1.45562) (8,1.99027) (16,4.14262) (32,7.02431) (64,13.7173) };

        \nextgroupplot[title={$D/N = 1.0$}, dn ratio time style]

        \addplot coordinates { (4,0.921488) (8,0.930348) (16,0.972435) (32,1.09786) (64,1.3449) (128,1.93506) };
        \addplot coordinates { (4,1.06646) (8,1.08639) (16,1.19635) (32,1.31346) (64,2.08762) (128,2.64836) };

        \addplot coordinates { (4,0.989823) (8,0.989213) (16,0.989678) (32,1.01819) (64,1.01818) (128,1.06388) };
        \addplot coordinates { (4,1.13222) (8,1.14529) (16,1.14494) (32,1.16056) (64,1.15862) (128,1.23573) };
        
        \addplot coordinates { (4,3.77862) (8,4.22839) (16,4.69885) (32,5.22327) (64,5.76321) (128,6.61689) };
        \addplot coordinates { (4,3.2951) (8,3.70894) (16,4.09922) (32,4.56944) (64,5.00077) (128,5.81749) };

        \addplot coordinates { (4,1.97212) (8,2.62969) (16,5.33747) (32,9.00977) (64,17.4628) };

    \end{groupplot}
\end{tikzpicture}
		\caption[%
			Overall sorting times for the weak scaling experiment with variable $D/N$ ratio.
		]{%
			Average sorting times (weak scaling) using \textsc{DNData} with
			$\ell=500$ and $n/p=10^5$.
		}
		\label{fig:evaluation:dn_ratio}
        \vspace*{-.125cm}
	\end{figure*}
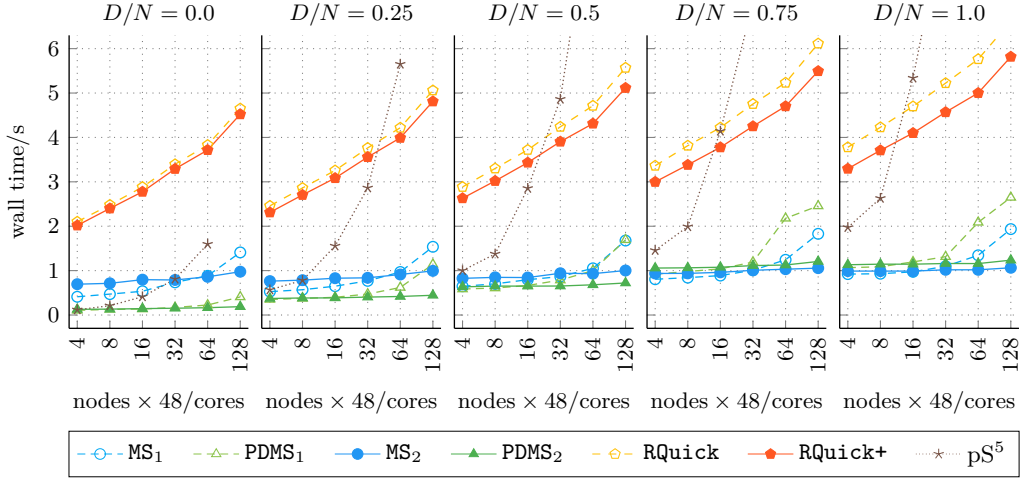
\fi
Our first experiment already shows that multi-level merge sort exhibits improved scaling
properties. %
Now, we have a closer look on the influence of the $D/N$ ratio on the running times of the different algorithms.
As before, we use strings with a length of $500$ characters and $10^5$ strings per PE.
\cref{fig:evaluation:dn_ratio} shows the average running times.
We evaluate $\ms$ and $\pdms$ on one and two levels and the $\RQuick$ variants.
Three-level variants are not part of this experiment as the additional level only yields clearly better performance for $\ge 512$ compute nodes which we did not include in any further experiments due to computing budget constraints.

The influence of $D/N$ ratio is clearly visible for $\ms_k$ and $\pdms_k$, as variants with prefix approximation are superior on instances with ratio up to $0.5$.
For ratios $0.75$ and $1$, the prefix approximation encompasses the whole string and is detrimental to the running time.

LCP compression is highly effective due to the nature of \textsc{DNData} inputs.
As before, multi-level variants outperform single-level counterparts, usually with a crossover
point at \num{32} nodes.
The gap seemingly increases for larger $D/N$ ratios, e.g. for $\ms_k$ on \num{128} nodes the
speedups are \num{1.56} and \num{1.78} for ratios \num{0} and \num{1} respectively.
This can partially be explained if communication volume is considered.
On fewer PEs, two-level variants cause roughly twice the communication volume because string
exchange phases dominate.
At $D/N=0$, sending \qtylist{0.5;1}{\kilo\byte} per string is roughly equivalent to exchanging
every string once or twice respectively.
Communication increases for single-level variants with the number of PEs as partitioning requires
more samples to be sorted.
String exchanges send fewer characters for larger $D/N$ ratios due to LCP compression, sample
sorting remains roughly constant.

The \RQuick{} variants perform worse than the merge sort based algorithms.
However, we can also see that for an increasing $D/N$ ratio improving the algorithms by using LCP-aware merging pays off with \LcpRQuick{} being up to $20 \%$ faster for $D/N = 1$ than plain \RQuick.

The running times of \psV (\emph{shared-memory} using 64 cores) are given for the input of all $p$ processors.
We could not run \psV{} on a node of SuperMUC-NG as these are only equipped with \qty{96} GB RAM.
With $D/N \ge 0.25$, we need $384$ cores to match the performance of \psV{} and only $192$ cores for $D/N \ge 0.5$.
Data with $D/N \approx 0$ is difficult for distributed algorithms since the time spent in local sorting hardly compensates for the for communication overheads and start-up latencies.
Here, we only need \num{1536} cores to be on par with \psV{}.
This is of interest when string sorting is part of more complex distributed tasks as it indicates that sorting the strings directly on the distributed system is faster than transferring them to a sufficiently large shared-memory machine from a very modest number of cores on.

\subsection{Evaluation on Real-World Data}
To evaluate our algorithms on real-world data we use a strong-scaling experiments.
On all three data sets we can observe that the two-level algorithms scale better than their single-level counterparts and finally outperform them.
On the largest PE configuration utilizing \num{256} compute nodes, $\pdms_2$ ist the fastest algorithm on all three data sets.
This highlights the usefulness of our multi-level approaches---even on very skewed real-world inputs.
However, the differences of the running times are less pronounced.
Also, $\pdms_2$ fails on \textsc{CommonCrawl} and $\ms_2$ fails on \textsc{Wikipedia} when using only $4$ compute nodes, due to imbalances of the data.
This shows that string inputs can be inherently hard to partition.

While different sampling and group assignment techniques may improve character balance, the
\textsc{CommonCrawl} data set contains many duplicated strings (e.g., standard
legal dis\-claimers) which are always assigned to a single PE and cannot be shortened by prefix
approximation, resulting in imbalances that we cannot prevent.

\iffigures
	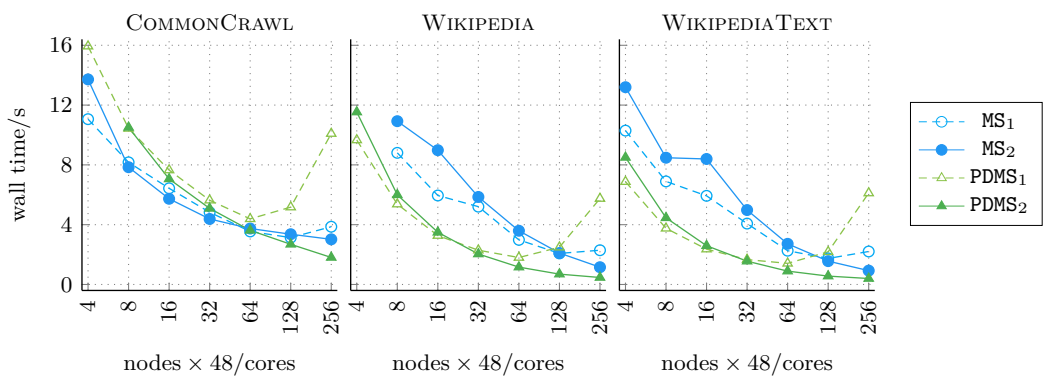
\begin{figure}
		\centering
		\pgfplotsset{
    strong scaling time/.style={
            ymin=0, ymax=16,
            ytick distance=4,
            enlarge y limits=0.025,
            height=.15\textheight,
            width=.24\textwidth,
            y tick label style={/pgf/number format/.cd,fixed,precision=0},
        },
    strong scaling rel/.style={
            ymin=0, ymax=3,
            ytick distance=0.5,
            enlarge y limits=0.035,
            height=.13\textwidth,
            y tick label style={/pgf/number format/.cd,fixed,fixed zerofill,precision=1},
        },
    strong scaling legend/.style={
            legend columns=1,
            legend style={at={(2.45, .75)}, anchor=north},
            legend entries={$\ms_1$, $\ms_2$, $\pdms_1$, $\pdms_2$, $\pdms_2^\triangledown$},
        }
}

\begin{tikzpicture}[baseline]
    \begin{groupplot}[
            footnotesize,
            group style={
                    group size=3 by 1,
                    x descriptions at=edge bottom,
                    y descriptions at=edge left,
                    horizontal sep=0.5em,
                    vertical sep=1ex,
                },
            xmajorgrids, scale only axis,
            enlarge x limits=0.025,
            width=.13\textwidth, xmin=4, xmax=256,
            xlabel={$\text{nodes}\times48/\textnormal{cores}$},
            xtick={4, 8, 16, 32, 64, 128, 256}, xmode=log, log basis x=2,
            xticklabel=\pgfmathparse{2^\tick}\pgfmathprintnumber{\pgfmathresult},
            x tick label style={rotate=90,/pgf/number format/.cd,fixed,precision=0},
            label style={font=\footnotesize}, title style={font=\footnotesize},
            cycle list name=colorListUpToTwoRealWorld,
        ]

        \nextgroupplot[title={\textsc{CommonCrawl}}, ylabel={$\text{wall time}/\unit{\second}$}, strong scaling time]

        \addplot coordinates { (4,11.0586) (8,8.17032) (16,6.43309) (32,4.8306) (64,3.54603) (128,3.134) (256,3.86334) };
        \addplot coordinates { (4,13.7181) (8,7.84432) (16,5.73222) (32,4.38236) (64,3.74246) (128,3.3618) (256,3.02066) };
        \addplot coordinates { (4,15.937) (8,10.4361) (16,7.65281) (32,5.63597) (64,4.38796) (128,5.17303) (256,10.0928) };
        \addplot coordinates { (8,10.5073) (16,7.05837) (32,5.08861) (64,3.62205) (128,2.68828) (256,1.80702) };

        \nextgroupplot[title={\textsc{Wikipedia}}, strong scaling time, strong scaling legend]

        \addplot coordinates { (8,8.80545) (16,5.95178) (32,5.20566) (64,2.98417) (128,2.07758) (256,2.29079) };
        \addplot coordinates { (8,10.9228) (16,8.98173) (32,5.85496) (64,3.59287) (128,2.10247) (256,1.15557) };
        \addplot coordinates { (4,9.6626) (8,5.37327) (16,3.27154) (32,2.29173) (64,1.80233) (128,2.4906) (256,5.7465) };
        \addplot coordinates { (4,11.5339) (8,5.99633) (16,3.49105) (32,2.03873) (64,1.1612) (128,0.692381) (256,0.467425) };

        \nextgroupplot[title={\textsc{WikipediaText}}, strong scaling time]

        \addplot coordinates { (4,10.293) (8,6.89383) (16,5.92974) (32,4.07387) (64,2.25966) (128,1.74101) (256,2.20631) };
        \addplot coordinates { (4,13.1958) (8,8.47844) (16,8.39492) (32,4.97866) (64,2.71963) (128,1.56116) (256,0.926633) };
        \addplot coordinates { (4,6.87536) (8,3.75643) (16,2.36202) (32,1.66499) (64,1.41277) (128,2.18468) (256,6.11441) };
        \addplot coordinates { (4,8.49946) (8,4.45971) (16,2.59265) (32,1.55821) (64,0.893607) (128,0.560526) (256,0.398876) };
    \end{groupplot}
\end{tikzpicture}
		\caption{Average running times (strong scaling) using real-world inputs (see \cref{tbl:evaluation:real_world_sets}).}
		\label{fig:evaluation:strong_scaling}
        \vspace*{-.225cm}
	\end{figure}
\fi

\section{Conclusion And Future Work}
We demonstrate---in theory and practice---that string sorting can be scaled to very large number of processors.
Our best algorithm, a multi-level prefix-doubling merge sort, only requires internal work and communication volume close to the optimum (the total length of all distinguishing prefixes) per level.
In practice, all our multi-level algorithms outperform their single-level counterparts on a wide range of inputs on up to \num{49152} cores (from a modest number of cores on).
This is especially of importance in scenarios where string sorting is part of a distributed application, i.e., it is not feasible to sort the data on a large shared-memory machine because of the transfer costs.
Hence, we see our work as an important building block to enable more complex string-processing tasks  at a massively parallel scale.

One problem we want to tackle in the future is suffix sorting based on sorting strings of equal length~\cite{DBLP:journals/jacm/KarkkainenSB06}.
To this end, we plan to extend our algorithms to support \emph{space-efficient} string sorting.
Some inputs are highly compressible with lots of overlapping strings, e.g., the suffixes of a text with length $n$ with a combined length of approximately $n^2/2$ can be represented using only $n$ characters.
However, due to the scarcity of main memory on most supercomputers, we cannot easily materialize \emph{all} strings at the same time during sorting.

\bibliography{short}

\clearpage

\section{Distributed Partitioning}
Due to the multidimensionality of the string sorting problem, determining balanced partitions is far more challenging than in atomic sorting.
Some of the steps of our merge sort algorithm depend on the number of strings in the local string array, while others depend on the number of characters or the size of the distinguishing prefix.
We therefore adapt \emph{string-based} and \emph{character-based} partitioning~\cite{DBLP:conf/ipps/Bingmann0S20} to our multi-level approach.
These schemes bound the number of strings and characters, respectively.
The general approach is to draw and globally sort a number of samples on each PE.
Then, $r-1$ splitters $f_j$ are chosen from the global sample array.
Since all strings are locally sorted before the partitioning step, we can make use of a \emph{regular} sampling approach \cite{DBLP:journals/pc/LiLSS93, DBLP:journals/jpdc/ShiS92} in which the samples are drawn equidistantly.
\subsection{String Based Partitioning}
\label{appendix:sec:string_based_partitioning}
On the $\maketh{\recursionlevel}$ recursion level, there are $r^{\recursionlevel-1}$ groups of PEs working on independent sorting problems, see \cref{fig:merge-sort-scheme}.
We now describe the partitioning process from the point of view of one such group.
Let $\StringSet{S'}$ be the concatenation of the local string arrays of the PEs of one such group of size $p'$.
Unlike for single-level MS, now, we cannot assume an equal number of strings on each PE as from the second level of recursion on, this number is subject to the result of a previous partitioning round which is not exact.
For multi-level MS, string-based partitioning consists of the following two steps:\\

\begin{description}
	\item[Local Sampling:] Let $v > 0$ be the \emph{sampling} factor.
	      In total, there will be $p'(v+1)$ samples drawn from the local string arrays.
	      To simplify the discussion, we assume $\abs{\StringSet{S'}}$ to be divisible by $p'(v+1)$.
          This can be lifted when allowing for a small additional imbalance factor $(1 + 1/k)$ (see \cref{appendix:partitioning-general-case}).
	      Let $\omega = \abs{\StringSet{S'}}/(p'(v+1))$.
	      PE $i$ then draws $\ceil{\abs{\StringSetIndex{S}{i}}/\omega} - 1$ samples $\mathcal{V}_i$ from its local string array spaced as evenly as possible.
	      $\mathcal{V}$ is the union of all local samples.
	      If $\abs{\mathcal{V}} < p'(v+1)$, then the first $p'(v+1) - \abs{\mathcal{V}}$ PEs draw one additional sample.
	      This ensures that at most \(\omega\) strings are between two local samples on each PE.
	      Also, the global number of samples is a multiple of $p'$ and of $r$ as we find $p' = r^{k+1-\recursionlevel}$ on recursion level $\recursionlevel$.

	\item[Splitter Computation:] The samples $\mathcal{V}$ are globally sorted using hypercube quicksort.
	      Then $r-1$ splitters $f_j = \mathcal{V}[j|\mathcal{V}|/r -1]$ for $0 < j < r$ are determined using a prefix-sum.
	      Subsequently, the $r-1$ splitters are communicated to all PEs using an all-gather operation.
\end{description}

Although we do not necessarily draw the same number of samples from each PE, we can apply the adapted \emph{sample-density} lemma:
\begin{lemma}[{\cite[Lemma 1.1]{DBLP:conf/ipps/Bingmann0S20}}]
	\label{appendix:lemma:sample-density-string}
	For $i = 0, \ldots, (p'-1)$ let $\StringSet{S}''_i = \{s \in \StringSet{S}_i' \mid a \le s \le b \}$ be an arbitrary contiguous subarray of $\StringSet{S}_i'$.
	If $|\StringSetIndex{S}{i}'' \cap \StringSetIndex{V}{i}| = k$, then $|\StringSetIndex{S}{i}''| \le (k+1)\omega$ with $\omega = |\StringSet{S}'| / (r(v+1))$.
\end{lemma}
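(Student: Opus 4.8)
The plan is to reduce the lemma to a single structural property of the regular-sampling construction — a \emph{per-PE spacing invariant} — and then obtain the bound by a short interval-counting argument, exactly as in the single-level analysis of \cite{DBLP:conf/ipps/Bingmann0S20}. The only thing that genuinely needs care in the multi-level setting is that this invariant survives even though the PEs no longer hold equally many strings.

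\emph{Step 1 (spacing invariant).}
First I would record that on PE $i$ the sample set $\StringSet V_i$ partitions $\StringSet S_i'$ into $\abs{\StringSet V_i}+1$ consecutive segments — the run before the first sample and, for every sample, the run from that sample (inclusive) up to the next sample (exclusive), or up to the end of $\StringSet S_i'$ — each of which contains at most $\omega$ strings. This follows directly from the construction: PE $i$ draws $\ceil{\abs{\StringSet S_i'}/\omega}-1$ samples spaced as evenly as possible with target distance $\omega$, and the global top-up step, in which the first $p'(v+1)-\abs{\StringSet V}$ PEs draw one additional sample, only ever \emph{adds} samples, which can only split segments further. The key observation is that this is a purely local statement about PE $i$ and never uses $\abs{\StringSet S_i'}$ being identical across PEs; this is precisely where the multi-level argument departs from \cite{DBLP:conf/ipps/Bingmann0S20}.

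\emph{Step 2 (interval counting).}
Fix $i$ and let $\StringSet S_i''$ be the contiguous subarray with $\abs{\StringSet S_i'' \cap \StringSet V_i} = k$; write $x_1 < \dots < x_k$ for the positions of these $k$ samples inside $\StringSet S_i'$. I would split $\StringSet S_i''$ into the $k+1$ consecutive pieces lying before $x_1$, the pieces starting at $x_j$ and ending just before $x_{j+1}$ for $1 \le j < k$, and the piece from $x_k$ to the right end of $\StringSet S_i''$. Because $\StringSet S_i''$ is contiguous, $x_1$ is the leftmost and $x_k$ the rightmost sample it meets, and the $x_j$ are consecutive in $\StringSet V_i$; hence each of these $k+1$ pieces is contained in one of the segments from Step~1 and therefore has at most $\omega$ strings. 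Summing yields $\abs{\StringSet S_i''} \le (k+1)\omega$, and substituting $\omega = \abs{\StringSet S'}/(r(v+1))$ gives the claimed bound.

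\emph{Main obstacle.}
I expect the difficulty to be bookkeeping rather than conceptual: one must pin down the convention for the two boundary runs (before the first and after the last sample of PE $i$) and decide on which side each segment carries its delimiting sample, so that the $k+1$ pieces really do cover $\StringSet S_i''$ disjointly with each piece of size $\le\omega$ and the clean factor $(k+1)$ — not $(k+1)\omega + \Theta(k)$ — comes out. It is also worth explicitly dispatching the degenerate case $\abs{\StringSet S_i'} \le \omega$, where PE $i$ draws no sample at all: then $k=0$ and $\abs{\StringSet S_i''} \le \abs{\StringSet S_i'} \le \omega = (k+1)\omega$, so the bound still holds.
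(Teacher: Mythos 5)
Your argument is correct, and it is worth noting that the paper itself does not prove this lemma at all: it is imported verbatim from Bingmann et al.\ (Lemma~1.1) with only the remark that unequal local array sizes do not invalidate it. Your two steps -- the per-PE spacing invariant (each of the $\abs{\mathcal{V}_i}+1$ half-open runs delimited by local samples holds at most $\omega$ strings, a purely local fact unaffected by the top-up samples, which only refine the partition) followed by covering the contiguous subarray $\mathcal{S}_i''$ by $k+1$ such runs -- is exactly the standard regular-sampling density argument that justifies the citation, and your half-open convention correctly avoids any spurious $+\Theta(k)$ term. One small point to tidy up: the lemma as printed defines $\omega = \abs{\mathcal{S}'}/(r(v+1))$, whereas the sampling construction (and the subsequent use of the lemma in the bucket-size theorem) works with sampling distance $\abs{\mathcal{S}'}/(p'(v+1))$; your Step~1 invariant holds for the latter, so your final substitution silently swaps the two. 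Since $p'\ge r$ on every level, the bound you actually prove is the stronger one and implies the stated inequality, but you should either prove the bound with the construction's $\omega$ and note that it dominates, or observe that the $r$ in the statement is an artifact of the single-level original where the number of parts equals $p$.
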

This gives us bounds for the number of strings per bucket.
\begin{theorem}
	\label{appendix:theorem:string-sampling-r-streams}
	Using string-based partitioning with a sampling factor of $v$, every bucket $\Bucket^j$ contains at most $\left (1+\frac{r}{v} \right ) \frac{\abs{\StringSet{S}'}}{r}$ strings.
\end{theorem}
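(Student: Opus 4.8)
The plan is to apply the sample-density lemma (\cref{appendix:lemma:sample-density-string}) to every local bucket $\mathcal{B}_i^j$ separately and then sum the resulting per-PE bounds over the $p'$ PEs of the group. The unequal number of strings (and hence samples) on the PEs---which from the second recursion level on we can no longer avoid---does no harm here, because the lemma only ever needs $\mathcal{B}_i^j$ to be a \emph{contiguous} subarray of the sorted local array $\StringSet{S}_i$, which it is by construction.

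First I would pin down the exact global sample count. Summing the $\ceil{\abs{\StringSet{S}_i}/\omega}-1$ samples contributed by the PEs and using $\ceil{x}-1\in[x-1,x)$ together with $\omega=\abs{\StringSet{S'}}/(p'(v+1))$, the count before the fill-up step lies in $[p'v,\,p'(v+1))$, so after fill-up we have exactly $\abs{\mathcal{V}}=p'(v+1)$; since $p'=r^{k+1-\recursionlevel}$ on recursion level $\recursionlevel$, $\abs{\mathcal{V}}$ is a multiple of $r$ and the splitter ranks $f_j=\mathcal{V}[j\abs{\mathcal{V}}/r-1]$ are well defined. Because all input strings, hence all samples, are distinct, the sorted array $\mathcal{V}$ has pairwise distinct entries, so the half-open interval $(f_j,f_{j+1}]$---with the sentinels $f_0=-\infty$, $f_r=\infty$ handling the first and last bucket---captures exactly the $\abs{\mathcal{V}}/r$ samples $\mathcal{V}[j\abs{\mathcal{V}}/r],\dots,\mathcal{V}[(j+1)\abs{\mathcal{V}}/r-1]$. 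Writing $k_i^j:=\abs{\mathcal{B}_i^j\cap\mathcal{V}_i}$, this gives $\sum_{i=0}^{p'-1}k_i^j=\abs{\mathcal{V}}/r=p'(v+1)/r$ for every bucket index $j$.

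Then I would combine the pieces. For each PE $i$, \cref{appendix:lemma:sample-density-string} yields $\abs{\mathcal{B}_i^j}\le(k_i^j+1)\omega$ (trivially true, since $0\le\omega$, when the local bucket is empty). Summing over the $p'$ PEs and using that the $\StringSet{S}_i$ are disjoint, so that $\abs{\mathcal{B}^j}=\sum_i\abs{\mathcal{B}_i^j}$, I obtain
\[
  \abs{\mathcal{B}^j}\;\le\;\omega\Bigl(p'+\sum_{i} k_i^j\Bigr)\;=\;\omega\,p'\Bigl(1+\tfrac{v+1}{r}\Bigr)\;=\;\frac{\abs{\StringSet{S'}}}{v+1}+\frac{\abs{\StringSet{S'}}}{r}\;\le\;\frac{\abs{\StringSet{S'}}}{v}+\frac{\abs{\StringSet{S'}}}{r}\;=\;\Bigl(1+\tfrac{r}{v}\Bigr)\frac{\abs{\StringSet{S'}}}{r},
\]
which is the claimed bound (the $\tfrac{1}{v+1}$ term even shows the estimate is slightly conservative).

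I expect the only delicate part to be the bookkeeping around the fill-up step: showing that the sample count is raised to exactly $p'(v+1)$ so that the splitter ranks split $\mathcal{V}$ into $r$ equally sized blocks, checking that each PE picked to draw an extra sample indeed still has an unused string available (which follows from $\omega\ge1$ and $\abs{\mathcal{V}}\ge p'v$ before fill-up, so that at most $p'$ extra samples are ever needed), and treating the two boundary buckets through the $\pm\infty$ sentinels. Beyond this, the argument is just the per-PE application of the sample-density lemma followed by the summation above, so I do not foresee a real obstacle.
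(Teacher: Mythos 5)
Your proposal is correct and follows essentially the same route as the paper's own proof: apply the sample-density lemma per PE to get $\abs{\mathcal{B}_i^j}\le(\abs{\mathcal{V}_i^j}+1)\omega$, use that each splitter block contains $p'(v+1)/r$ samples, and sum over the $p'$ PEs to obtain $\parens*{1+\frac{r}{v}}\frac{\abs{\StringSet{S}'}}{r}$. The extra bookkeeping you supply (exact fill-up to $p'(v+1)$ samples and sentinel handling) only makes explicit details the paper leaves implicit.
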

\begin{proof}
    The proof transfers from the one of a similar claim \cite[Theorem 2]{DBLP:conf/ipps/Bingmann0S20}.
	By counting the number of sample strings that are contained within a bucket $\StringSet{B}^j$ and applying \cref{appendix:lemma:sample-density-string}, we obtain the claimed bounds.
    Let $\mathcal{V}_i^j=\mathcal{B}_i^j\cap\mathcal{V}_i$ be the samples on PE $i$ that fall into the
	$\maketh{j}$ bucket.
	Using \cref{appendix:lemma:sample-density-string} it follows that
	$\mathcal{B}_i^j\leq\parens{|\mathcal{V}_i^j|+1}\omega$.
	By definition in step \ref{alg:multi_level:simple:partition} of the partition algorithm, splitters $f_j$ and $f_{j+1}$ are separated by
	at most $p'(v+1)/r-1$ elements and thus $\sum_{i=0}^{p'-1} \abs{\mathcal{V}_i^j} = p'(v+1)/r$ when including
	$f_{j+1}$.
	Using the definition of $\mathcal{B}^j$ we find \fullstop{}

	\begin{align*}
		\abs{\mathcal{B}^j} & = \sum_{i=0}^{p'-1} \abs{\mathcal{B}_i^j}
		\leq \sum_{i=0}^{p'-1} \parens*{\abs{\mathcal{V}_i^j}+1}\omega                            \\
		                    & \leq \omega \parens*{\frac{p'(v+1)}{r} + p'}
		= \frac{\abs{\StringSet{S}'}}{p'(v+1)}\parens*{\frac{p'(v+1)}{r} + p'}                    \\
		                    & = \frac{\abs{\StringSet{S}'}}{r} + \frac{\abs{\StringSet{S}'}}{v+1}
		\leq \frac{\abs{\StringSet{S}'}}{r} + \frac{\abs{\StringSet{S}'}}{v} = \left (1+\frac{r}{v} \right ) \frac{\abs{\StringSet{S}'}}{r}
	\end{align*}
	\par \vspace{-1.7\baselineskip}
	\qedhere
\end{proof}

To summarize, using string-based regular sampling yields buckets with a maximum imbalance of $n/v$ strings.
Choosing $v = \Theta(r)$ implies a number of strings per bucket in $\Theta(\mathcal{S}'/r)$.
\begin{lemma}[\cref{lemma:string-based-sampling}]
	On recursion level $\recursionlevel$ with $r = \sqrt[k]{p}$ in step 2) of multi-level MS, string-based regular sampling with sampling factor $v$ yields a maximum bucket size of
	\[
		|\StringSet{B}^j| \le \left(1 + \frac{r}{v} \right)^\recursionlevel \frac{n}{r^\recursionlevel}.
	\]
\end{lemma}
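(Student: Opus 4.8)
The plan is to prove \cref{lemma:string-based-sampling} by induction on the recursion level $\recursionlevel$, using \cref{appendix:theorem:string-sampling-r-streams} as the single-level building block that governs what one partitioning round does to the maximum number of strings held by any group. Let $m_{\recursionlevel}$ denote an upper bound on $\abs{\StringSet{S}'}$, the total number of strings in any group after level $\recursionlevel$ has finished. The base case is $\recursionlevel = 0$: before any partitioning, the whole input lives in a single group, so $m_0 = n$, which matches $\left(1+\frac{r}{v}\right)^0 \frac{n}{r^0} = n$.

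For the inductive step, I would assume $m_{\recursionlevel-1} \le \left(1+\frac{r}{v}\right)^{\recursionlevel-1}\frac{n}{r^{\recursionlevel-1}}$ and examine a single group entering level $\recursionlevel$. That group contains at most $m_{\recursionlevel-1}$ strings, i.e.\ $\abs{\StringSet{S}'} \le m_{\recursionlevel-1}$. Applying \cref{appendix:theorem:string-sampling-r-streams} to this group, every bucket $\StringSet{B}^j$ it produces satisfies $\abs{\StringSet{B}^j} \le \left(1+\frac{r}{v}\right)\frac{\abs{\StringSet{S}'}}{r} \le \left(1+\frac{r}{v}\right)\frac{m_{\recursionlevel-1}}{r}$. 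Each such bucket becomes (after the assignment and exchange steps) the input string set of a level-$\recursionlevel$ group, so $m_{\recursionlevel} \le \left(1+\frac{r}{v}\right)\frac{m_{\recursionlevel-1}}{r}$. Substituting the inductive hypothesis gives
\[
  m_{\recursionlevel} \le \left(1+\frac{r}{v}\right)\frac{1}{r}\left(1+\frac{r}{v}\right)^{\recursionlevel-1}\frac{n}{r^{\recursionlevel-1}} = \left(1+\frac{r}{v}\right)^{\recursionlevel}\frac{n}{r^{\recursionlevel}},
\]
which closes the induction, and since $\abs{\StringSet{B}^j}$ on level $\recursionlevel$ is itself bounded by $\left(1+\frac{r}{v}\right)\frac{m_{\recursionlevel-1}}{r} = m_{\recursionlevel}$, the claimed bucket-size bound follows.

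The subtle point — and the thing I would be most careful about — is the bookkeeping of what "a group's string count after level $\recursionlevel-1$" means versus "the input to a partitioning round at level $\recursionlevel$." \cref{appendix:theorem:string-sampling-r-streams} is stated with $\abs{\StringSet{S}'}$ as a free quantity, so it applies verbatim regardless of whether the strings were perfectly balanced on the $p'$ PEs of the group; this is exactly why the single-level lemma was phrased that way, and it is what lets the imbalance factors $(1+r/v)$ simply multiply across levels rather than interact in a more complicated way. I would also note explicitly that the assignment/exchange step (\cref{alg:multi_level:simple:exchange}) neither creates nor destroys strings, so $\StringSet{B}^j$ at level $\recursionlevel$ is literally the $\StringSet{S}'$ of a group at level $\recursionlevel$ — no loss there. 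Beyond that the proof is a routine geometric telescoping, and one line of prose after the displayed recurrence suffices to interpret the term $(1+r/v)^{\recursionlevel}$ as the compounding imbalance, motivating the choice $v = \Theta(kr)$ announced after the lemma statement.
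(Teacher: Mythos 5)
Your proposal is correct and follows essentially the same route as the paper's own proof: an induction over the recursion levels in which \cref{appendix:theorem:string-sampling-r-streams} is applied once per level with $\abs{\StringSet{S}'}$ bounded by the previous level's bucket bound, so the factors $\left(1+\frac{r}{v}\right)$ multiply geometrically. Your only deviations are presentational (starting the induction at level $0$ instead of level $1$, and spelling out that the assignment/exchange step preserves string counts), which the paper leaves implicit.
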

\begin{proof}
	We give a proof by induction. For the first level of recursion, \cref{appendix:theorem:string-sampling-r-streams} directly yields a maximum bucket size of
	\[
		|\StringSet{B}^j| \le \left(1 + \frac{r}{v} \right) \frac{n}{r}.
	\]
	Applying the bounds for recursion level $\recursionlevel-1 \rightsquigarrow \recursionlevel$ to the same theorem  results in
	\[
		\abs{\StringSet{B}^j} \le \left(1 + \frac{r}{v} \right) \frac{\left(1 + \frac{r}{v} \right)^{\recursionlevel-1} \frac{n}{r^{\recursionlevel-1}}}{r}  = \left(1 + \frac{r}{v} \right)^\recursionlevel \frac{n}{r^\recursionlevel}.
	\]
\end{proof}

The term $(1+r/v)^{\recursionlevel}$ signifies that the imbalance between the buckets  multiplies with each level of recursion.
Therefore, we need to choose $v=\Theta(kr)$ for any number of \(k\) levels to keep the term asymptotically constant during the entire sorting process.
For single-level MS, $v = \Theta(r) = \Theta(p)$ samples per PE are sufficient. %
Note the large difference between drawing $kr$ samples per PE (on average) here and $r^k = p$ samples in the single-level case.
Using the assignment strategy described in \cref{section:assignment-strategies}, which equally distributes the strings over the PEs in each group, we arrive at \cref{thm:string-based-sampling}.

\begin{theorem}[String-Based Sampling, \cref{thm:string-based-sampling}]
	\label{appendix:thm:string-based-sampling}
	Using a sampling factor of $v = \Theta(kr) = \Theta(k\sqrt[k]{p})$ the number of strings per PE is in $\bigO{n/p}$ in each level of the algorithm.
\end{theorem}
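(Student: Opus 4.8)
The plan is to combine \cref{lemma:string-based-sampling} with the bounded-assignment guarantee of \cref{thm:bounded-assignment}, and to observe that the accumulated imbalance factor telescopes to a constant once $v=\Theta(kr)$. First I would unfold \cref{lemma:string-based-sampling}: after the partitioning step on recursion level $\recursionlevel$, every bucket $\StringSet{B}^j$ destined for one of the $r$ subgroups satisfies $|\StringSet{B}^j|\le\parens*{1+\tfrac{r}{v}}^{\recursionlevel}\tfrac{n}{r^{\recursionlevel}}$. Substituting $v=\Theta(kr)$ gives $r/v=\bigO{1/k}$, hence $\parens*{1+\tfrac{r}{v}}^{\recursionlevel}\le\parens*{1+\tfrac{c}{k}}^{k}\le e^{c}=\Theta(1)$ using $\recursionlevel\le k$ and the elementary bound $(1+x/k)^{k}\le e^{x}$. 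Therefore $|\StringSet{B}^j|=\bigO{n/r^{\recursionlevel}}$ uniformly over all levels $\recursionlevel\in\{1,\dots,k\}$.

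Next I would invoke \cref{thm:bounded-assignment}: in the string-based case the bounded-assignment algorithm spreads the bucket $\StringSet{B}^j$ evenly over the $p''=p'/r=r^{k-\recursionlevel}$ PEs of subgroup $j$, so each such PE receives $|\StringSet{B}^j|/p''=\bigO{n/r^{\recursionlevel}}/r^{k-\recursionlevel}=\bigO{n/r^{k}}=\bigO{n/p}$ strings. Since this is exactly the local load every PE holds at the start of level $\recursionlevel+1$, and since the estimate holds for every $\recursionlevel$, the number of strings per PE stays in $\bigO{n/p}$ throughout, which is the claim. For $\recursionlevel=k$ the subgroups have size one, so the bucket bound is already the per-PE bound, and the initialization level before any partitioning has $\Theta(n/p)$ strings per PE by the balanced-input assumption.

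The only real subtlety is the exponent bookkeeping: the factor $r^{-\recursionlevel}$ coming from \cref{lemma:string-based-sampling} must cancel against the $r^{k-\recursionlevel}$ PEs over which the assignment distributes the bucket, leaving $r^{-k}=1/p$; and one must check that the geometric blow-up $\parens*{1+\tfrac{r}{v}}^{\recursionlevel}$ remains constant, which is precisely why $v$ has to grow linearly in $k$ rather than the $v=\Theta(r)$ that suffices for a single level. Beyond $(1+x/k)^{k}\le e^{x}$ no heavy estimates are needed. (If one drops the divisibility assumption used in \cref{lemma:string-based-sampling}, the extra $(1+1/k)$ factor noted in \cref{appendix:sec:string_based_partitioning} is likewise absorbed into the constant, since $(1+1/k)^{k}\le e$.)
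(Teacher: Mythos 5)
Your proposal is correct and follows essentially the same route the paper takes: apply \cref{lemma:string-based-sampling}, note that $v=\Theta(kr)$ makes the accumulated imbalance factor $(1+r/v)^{\recursionlevel}\le(1+c/k)^{k}=\Theta(1)$, and then let the bounded assignment of \cref{thm:bounded-assignment} spread each bucket evenly over the $p''=r^{k-\recursionlevel}$ PEs of its subgroup so that $|\StringSet{B}^j|/p''=\bigO{n/p}$. Your version is in fact more explicit about the exponent bookkeeping and the boundary levels than the paper, which only sketches this argument in the surrounding text.
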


\subsection{Partitioning Without Loss of Generality}
\label{appendix:partitioning-general-case}

Let $\omega = \abs{\mathcal{S'}}/(p'(v+1))$ and $\omega$ is not an integer, i.e., $\lceil \omega \rceil = 1 + \lfloor \omega \rfloor$.
We then use $p'(v+1)$ (potentially overlapping) samples of size $\lceil \omega \rceil$.

\begin{theorem}
	Using string-based partitioning with a sampling factor of $v$ and $\abs{S'} = \Omega(p'(v+1)k)$, every bucket $\Bucket^j$ contains at most $\left (1+\frac{r}{v} \right ) \frac{\abs{\StringSet{S}'}}{r}(1 + \frac{1}{k})$ strings.
\end{theorem}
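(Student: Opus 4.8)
The plan is to mirror the proof of \cref{appendix:theorem:string-sampling-r-streams} exactly, tracking the single extra error term that arises because we now sample (possibly overlapping) blocks of size $\lceil\omega\rceil$ rather than exactly $\omega$. First I would re-examine the sample-density lemma (\cref{appendix:lemma:sample-density-string}): with blocks of size $\lceil\omega\rceil$, a contiguous subarray $\mathcal{S}''_i$ containing $k$ samples of PE $i$ satisfies $|\mathcal{S}''_i|\le(k+1)\lceil\omega\rceil$ by the same counting argument (between any two consecutive samples, and before the first and after the last, there are at most $\lceil\omega\rceil$ strings; overlap only helps). So the whole derivation goes through with $\omega$ replaced by $\lceil\omega\rceil\le\omega+1$.

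Next I would carry out the bucket-size computation. As before, let $\mathcal{V}_i^j=\mathcal{B}_i^j\cap\mathcal{V}_i$; by the choice of splitters $f_j=\mathcal{V}[j|\mathcal{V}|/r-1]$ we still have $\sum_{i=0}^{p'-1}|\mathcal{V}_i^j|=p'(v+1)/r$ when counting $f_{j+1}$, and by the (modified) density lemma $|\mathcal{B}_i^j|\le(|\mathcal{V}_i^j|+1)\lceil\omega\rceil$. Summing,
\begin{align*}
  |\mathcal{B}^j|
  &\le \lceil\omega\rceil\parens*{\frac{p'(v+1)}{r}+p'}
   \le (\omega+1)\parens*{\frac{p'(v+1)}{r}+p'}\\
  &= \parens*{\frac{|\mathcal{S}'|}{r}+\frac{|\mathcal{S}'|}{v+1}}
     + \parens*{\frac{p'(v+1)}{r}+p'}
   \le \parens*{1+\frac{r}{v}}\frac{|\mathcal{S}'|}{r}
     + \frac{2p'(v+1)}{r}\cdot r\cdot\frac{1}{r}\cdot(\dots),
\end{align*}
so the only new contribution is the additive $\lceil\omega\rceil/\omega$-factor, i.e.\ at most $(p'(v+1)/r+p')$ extra strings. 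I would then bound this additive term against the main term $\parens*{1+\frac{r}{v}}\frac{|\mathcal{S}'|}{r}$ using the hypothesis $|\mathcal{S}'|=\Omega(p'(v+1)k)$: since $\frac{p'(v+1)}{r}+p'\le\frac{2p'(v+1)}{r}$ (using $r\le v$, so $v+1\ge r$) and $|\mathcal{S}'|\ge c\,p'(v+1)k$ for a suitable constant, we get $\frac{p'(v+1)}{r}+p'\le\frac{2}{ck}\cdot\frac{|\mathcal{S}'|}{r}\le\frac{1}{k}\cdot\frac{|\mathcal{S}'|}{r}\le\frac{1}{k}\parens*{1+\frac{r}{v}}\frac{|\mathcal{S}'|}{r}$ for $c$ large enough. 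Adding this to the main bound yields the claimed $\parens*{1+\frac{r}{v}}\frac{|\mathcal{S}'|}{r}\parens*{1+\frac{1}{k}}$.

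The main obstacle I anticipate is purely bookkeeping: getting the inequalities $\lceil\omega\rceil\le\omega+1$, $r\le v$ (which holds since $v=\Theta(kr)$), and $|\mathcal{S}'|=\Omega(p'(v+1)k)$ to combine so that the slack term is dominated by $\frac{1}{k}$ times the leading term, and making sure the "$\Omega$" constant in the hypothesis is chosen to match whatever constant the chain of inequalities demands. There is no conceptual difficulty beyond that — the overlapping-block trick only inflates each inter-sample gap by at most one string, and the assumption $|\mathcal{S}'|=\Omega(p'(v+1)k)$ is precisely what is needed to absorb the resulting $\mathcal{O}(p'(v+1)/r)$ additive imbalance into a $(1+1/k)$ multiplicative factor. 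A final remark I would add is that this $(1+1/k)$ factor, compounded over the $k$ levels of recursion exactly as in \cref{lemma:string-based-sampling}, contributes only a constant factor $(1+1/k)^k\le e$, so the asymptotic guarantees of \cref{thm:string-based-sampling} are unaffected.
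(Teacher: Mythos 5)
Your proposal is correct and follows essentially the same route as the paper's own proof: apply the sample-density lemma with $\lceil\omega\rceil\le\omega+1$, and absorb the resulting additive term $p'(v+1)/r+p'$ into a $(1+1/k)$ multiplicative factor using the hypothesis $\abs{\StringSet{S}'}=\Omega(p'(v+1)k)$ (the paper writes this as $1\le\omega/k$ and factors, rather than comparing the slack to the leading term as you do). The only cosmetic differences are the placeholder in your second display line and your extra use of $v+1\ge r$, which is harmless since $v=\Theta(kr)$ in all applications.
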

\begin{proof}
	Let $\mathcal{V}_i^j=\mathcal{B}_i^j\cap\mathcal{V}_i$ be the samples on PE $i$ that fall into the
	$\maketh{j}$ bucket.
	Using \cref{appendix:lemma:sample-density-string} it follows that
	$\abs{\mathcal{B}_i^j}\leq\parens{|\mathcal{V}_i^j|+1}\lceil \omega \rceil$.
	By definition in step \ref{alg:multi_level:simple:partition} of the partition algorithm, splitters $f_j$ and $f_{j+1}$ are separated by
	at most $p'(v+1)/r-1$ elements and thus $\sum_{i=0}^{p'-1} \abs{\mathcal{V}_i^j} = p'(v+1)/r$ when including
	$f_{j+1}$.
	Using the definition of $\mathcal{B}^j$ we find \fullstop{}

	\begin{align*}
		\abs{\mathcal{B}^j} & = \sum_{i=0}^{p'-1} \abs{\mathcal{B}_i^j}
		\leq \sum_{i=0}^{p'-1} \parens*{\abs{\mathcal{V}_i^j}+1}\lceil \omega \rceil                                                                     \\
		                    & \leq \lceil \omega \rceil \parens*{\frac{p'(v+1)}{r} + p'} = (\lfloor \omega \rfloor + 1) \parens*{\frac{p'(v+1)}{r} + p'} \\
		                    & \le \omega \parens*{\frac{p'(v+1)}{r} + p'} +\parens*{\frac{p'(v+1)}{r} + p'}                                              \\
		                    & \le \omega \parens*{\frac{p'(v+1)}{r} + p'} +\frac{\omega}{k}\parens*{\frac{p'(v+1)}{r} + p'}                              \\
		                    & = \omega \parens*{\frac{p'(v+1)}{r} + p'}\parens*{1 + \frac{1}{k}}                                                         \\
		                    & = \frac{\abs{\StringSet{S}'}}{p'(v+1)}\parens*{\frac{p'(v+1)}{r} + p'} \parens*{1 + \frac{1}{k}}                           \\
		                    & = \left(\frac{\abs{\StringSet{S}'}}{r} + \frac{\abs{\StringSet{S}'}}{v+1}\right)\parens*{1 + \frac{1}{k}}
		\leq \parens*{\frac{\abs{\StringSet{S}'}}{r} + \frac{\abs{\StringSet{S}'}}{v}} \parens*{1 + \frac{1}{k}}                                         \\
		                    & = \left (1+\frac{r}{v} \right )\parens*{1 + \frac{1}{k}} \frac{\abs{\StringSet{S}'}}{r}
	\end{align*}
	\par \vspace{-1.7\baselineskip}
	\qedhere
\end{proof}

\subsection{Character Based Sampling}
\label{appendix:sec:character_based_partitioning}
We generalize \emph{character}-based regular sampling \cite{DBLP:conf/ipps/Bingmann0S20} to our multi-level approach to achieve tighter bounds on the number of characters per PE than the conservative $\mathcal{O}(\lmax n/p)$.
Now, each PE of the considered group draws $\ceil{\norm{\StringSet{S'}_i}/ \omega'} - 1$ equally spaced samples from its character array with sampling distance $\omega'= \norm{\StringSet{S'}}/(p'(v+1))$.
To arrive at the final string samples, we shift the sampled character positions by at most $\lmax - 1$ characters to beginning of the string.
If the total number of samples is smaller than $p'(v+1)$, the first PEs draw one additional sample.

Bingmann~et~al. proposed to shift the sampled character positions by at most $\lmax -1$ positions to the beginning of the next string.
Furthermore, they required $\ell < \omega'$.
We will show that this restriction is not necessary and therefore slightly adapt their proof of {\cite[Lemma~2.2]{DBLP:conf/ipps/Bingmann0S20}}.
Note that our approach might result in sampling the same string multiple times.
In this case, the string must be unified by concatenating its (local) index which requires at most $\mathcal{O}(\log(n/p)/\log{\sigma})$ additional characters.
As we assume all strings to be unique, this is in $\mathcal{O}(\lmax)$.

Again, we state a slightly adapted version of the \emph{sample density} lemma for character-based sampling:
\begin{lemma}
	[{\cite[Lemma~2.2]{DBLP:conf/ipps/Bingmann0S20}}]%
	\label{lemma:sample-density-character}
	Let $\mathcal{S}_i''=\braces{s\in\mathcal{S}_i' \mid a\leq s\leq b}$ be an arbitrary contiguous
	subsequence of $\mathcal{S}_i'$ for $i\in\braces{0, \dots, p'-1}$.
	With $\abs{\mathcal{S}_i'\cap\mathcal{V}_i}=k$ it must hold that
	$\norm{\mathcal{S}_i''} \leq(k+1)\parens{\omega'+\lmax}$.
\end{lemma}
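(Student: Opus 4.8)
The goal is to reprove Bingmann et al.'s character-based sample-density lemma (their Lemma~2.2) while dropping the assumption $\ell<\omega'$, i.e.\ while allowing strings longer than the sampling distance. I would first recall the sampling construction on a fixed PE $i$: its local character array is cut at equidistant positions of spacing $\omega'$, and each cut position is moved forward by at most $\lmax-1$ characters to the next string boundary; the string starting there is the corresponding sample. The only new phenomenon compared to the restricted case is that a string of length larger than $\omega'$ contains several consecutive cut positions, all of which are moved onto the \emph{same} sample (this is why the procedure may pick a string several times and why de-duplication is needed). Since de-duplicating repeated picks only decreases $\abs{\mathcal{V}_i}$, it suffices to prove the bound on $\norm{\mathcal{S}_i''}$ in terms of the number $k$ of \emph{distinct} samples that fall into the contiguous subsequence $\mathcal{S}_i''$.

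The combinatorial core is a charging argument. Write $\mathcal{S}_i''=t_1,\dots,t_q$ and let $[a,b)$ be the character range it occupies, so $\norm{\mathcal{S}_i''}=b-a$. For a sample $t$, call the string immediately preceding $t$ its \emph{generator}; by construction the generator of a sample must contain at least one cut position, and since the predecessor map is injective, the $k$ samples in $\mathcal{S}_i''$ have $k$ pairwise distinct generators. Now every cut position lying in $[a,b)$ sits inside some $t_j$: if $j\le q-1$ then $t_j$ generates the sample $t_{j+1}\in\mathcal{S}_i''$, so $t_j$ is one of those $k$ generators; if $j=q$ the cut lies in the last string $t_q$, whose sample escapes $\mathcal{S}_i''$. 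Hence all cut positions in $[a,b)$ are contained in at most $k+1$ strings — the $\le k$ generators inside $\mathcal{S}_i''$ together with $t_q$ — each of length at most $\lmax$. A two-sided count of cut positions in $[a,b)$ then finishes the argument: on one side, an interval of length $\norm{\mathcal{S}_i''}$ contains at least $\lfloor\norm{\mathcal{S}_i''}/\omega'\rfloor$ equidistant cut positions; on the other side the previous paragraph bounds this by $(k+1)$ times the maximal number $\lceil\lmax/\omega'\rceil$ of cut positions in a single length-$\lmax$ string. Clearing denominators yields $\norm{\mathcal{S}_i''}\le(k+1)(\omega'+\lmax)$. Equivalently, one can phrase the same idea geometrically: $[a,b)$ decomposes into an alternation of the $\le k+1$ charged strings and maximal runs of sample-free strings, each run necessarily shorter than $\omega'$ (a character range of length $\ge\omega'$ contains a cut position), and summing lengths gives the bound.

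The main obstacle is precisely the case of strings longer than $\omega'$, which the original proof sidestepped: one has to check that collapsing several cut positions onto a single sample never weakens the estimate, and one has to be careful about the two boundary strings of $\mathcal{S}_i''$ — the first string has no predecessor inside the subsequence and the last string's sample lies outside it — so that the count of charged strings comes out to $k+1$ rather than something larger. Once that bookkeeping is pinned down, the remaining floor/ceiling estimates are routine.
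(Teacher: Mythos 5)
Your charging argument is a genuinely different route from the paper's proof, which never looks at which strings contain cut positions: it simply observes that consecutive shifted sample positions are at most $\omega'-1+\lmax$ characters apart (consecutive cuts are $\omega'$ apart and each moves by at most $\lmax$ characters), splits $\mathcal{C}(\mathcal{S}_i'')$ at the first characters of the $k$ samples into $k+1$ pieces, and bounds each piece by that quantity. Before your route proves the lemma as stated, two things need attention. First, you analyse a different construction: this paper shifts each sampled character position \emph{backward} to the beginning of the string containing it, and keeps duplicates by disambiguating them with the local index (so $k$ counts samples with multiplicity); the forward shift to the next string is Bingmann et al.'s variant, from which the paper explicitly departs. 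Under the paper's construction the ``generator'' of a sample is the sample itself, so every cut-containing string of $\mathcal{S}_i''$ is one of the $k$ samples --- there is no extra $t_q$, no predecessor map, and no injectivity to verify.

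Second, neither of your closing computations actually delivers the constant $(k+1)(\omega'+\lmax)$. Writing $L=\norm{\mathcal{S}_i''}$, the floor/ceiling count only gives $\lfloor L/\omega'\rfloor \le (k+1)\lceil\lmax/\omega'\rceil$ and hence $L\le(k+1)(\omega'+\lmax)+\omega'$; the lost additive $\omega'$ does not ``clear''. The geometric decomposition, with up to $k+1$ charged strings and up to $k+2$ cut-free runs of length up to $\omega'-1$, gives $(k+1)\lmax+(k+2)(\omega'-1)$, which exceeds the claimed bound whenever $\omega'>k+2$; to rescue it you must note that the $(k+1)$-st charged string can only be $t_q$, in which case the sequence ends in a charged string and at most $k+1$ runs remain. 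There is also an unhandled boundary case: if a cut falls exactly on a string start, its sample is that string rather than its successor, two distinct strings can then feed the same sample, and the injectivity you invoke fails. All of this is repairable and asymptotically irrelevant downstream, but as written your argument only establishes roughly $(k+2)(\omega'+\lmax)$. If you instead run your ``charged strings versus cut-free runs'' decomposition on the paper's backward-shift construction, you get at most $k$ charged strings and $k+1$ runs, and $k\lmax+(k+1)(\omega'-1)\le(k+1)(\omega'+\lmax)$ falls out immediately --- arguably more transparently than the paper's own case distinction on $k=0,1,\ge 2$.
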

\begin{proof}
  Initially, there have been at most $\omega' - 1$ characters between two sampled positions.
  Since each sampled character position has been moved at most $\lmax$ characters, there are at most $\omega' -1 + \lmax$ characters between two shifted sampled strings.
  If $k=0$, then all elements of $\mathcal{S}_i''$ are either contained between two consecutive samples of $\mathcal{V}_i$ or smaller than the first sample or greater than the last.  Thus, $\norm{\mathcal{S}_i''} \le \omega' - 1 + \lmax$ (or $\norm{\mathcal{S}_i''} \le \omega' + \lmax$ in the latter case).
  For $k=1$, let $c$ be the first character of the sample string contained in $\mathcal{S}_i''$.
  We can split the character array $\mathcal{C}(\mathcal{S}_i'')$ of $\mathcal{S}_i''$ into two character arrays $\mathcal{C}_< +  c + \mathcal{C}_>$.
  For each of the two character arrays, $k=0$ applies and we thus find $\norm{\mathcal{S}_i''} \le (\omega' -1 + \lmax) + 1 + (\omega' + \lmax) \le 2(\omega' + \lmax)$.
  For $k \ge 2$, we can split the corresponding character array of $\mathcal{S}_i''$ into $(k+1)$ subarrays of which the first contains at most $\omega' -1 + \lmax$ characters,
  the last one at most $\omega' + \lmax$ characters and all in between at most $\omega' - 1 + \lmax$ characters.
  Adding the first character of the $k$ sample strings then yields $\norm{\mathcal{S}''_i} \le (k+1)(\omega' + \lmax)$.
\end{proof}

We now arrive at the following bounds for character-based partitioning into $r$ buckets.

\begin{theorem}
	\label{thm:techniques:partition:regular:character:bucket_size}
	Using character-based regular sampling with a sampling factor of $v$, every bucket $\mathcal{B}^j$
	obtained using the partition algorithm contains at most $\parens*{1+\frac{r}{v}} \frac{\norm{\mathcal{S'}}}{r}+ \parens*{1 + \frac{v+1}{r}}p'\lmax$ characters.
\end{theorem}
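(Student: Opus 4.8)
The plan is to follow the template of the proof of \cref{appendix:theorem:string-sampling-r-streams} almost line for line, replacing the string‑counting sample‑density bound \cref{appendix:lemma:sample-density-string} by the character‑counting one \cref{lemma:sample-density-character}. Fix a bucket index $j$ and, on each PE $i$ of the group, let $\mathcal{V}_i^j = \mathcal{B}_i^j \cap \mathcal{V}_i$ be the local (string) samples that fall into bucket $j$. Since $\mathcal{B}_i^j = \{s \in \mathcal{S}_i' : f_j < s \le f_{j+1}\}$ is a contiguous subsequence of $\mathcal{S}_i'$, \cref{lemma:sample-density-character} applies and gives $\norm{\mathcal{B}_i^j} \le (\abs{\mathcal{V}_i^j}+1)(\omega'+\lmax)$ with $\omega' = \norm{\mathcal{S}'}/(p'(v+1))$.

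Next I would sum over the $p'$ PEs of the group. After the optional rounding step (the first PEs drawing one extra sample) the group holds exactly $\abs{\mathcal{V}} = p'(v+1)$ globally sorted samples and $f_j = \mathcal{V}[j\abs{\mathcal{V}}/r - 1]$, so $f_j$ and $f_{j+1}$ are separated by at most $p'(v+1)/r - 1$ samples and hence $\sum_i \abs{\mathcal{V}_i^j} = p'(v+1)/r$ once $f_{j+1}$ itself is counted. Consequently $\sum_i (\abs{\mathcal{V}_i^j}+1) = p'(v+1)/r + p'$ and
\[
\norm{\mathcal{B}^j} \;=\; \sum_{i=0}^{p'-1}\norm{\mathcal{B}_i^j} \;\le\; (\omega'+\lmax)\parens*{\frac{p'(v+1)}{r}+p'}.
\]
It then remains to distribute this product over the $\omega'$ and $\lmax$ summands: the $\omega'$‑term reproduces the string‑based estimate, $\omega'\parens*{\frac{p'(v+1)}{r}+p'} = \frac{\norm{\mathcal{S}'}}{r} + \frac{\norm{\mathcal{S}'}}{v+1} \le \parens*{1+\frac{r}{v}}\frac{\norm{\mathcal{S}'}}{r}$, while the $\lmax$‑term contributes $\lmax\parens*{\frac{p'(v+1)}{r}+p'} = \parens*{1+\frac{v+1}{r}}p'\lmax$; adding the two yields precisely the claimed bound.

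The calculation is routine, so the issues requiring care are bookkeeping rather than genuine obstacles. First, one must check that the segment count $p'(v+1)/r + p'$ is exactly right and that the ``extra sample'' step can only increase $\abs{\mathcal{V}}$ and thus only shrink buckets, so the bound survives the imbalanced subcase. Second — and this is the substantive difference from \cref{appendix:theorem:string-sampling-r-streams} — the per‑segment shifting error $\lmax$ is incurred once for each of these $p'(v+1)/r + p'$ segments, so it surfaces as an \emph{additive} $(1+(v+1)/r)p'\lmax$ term rather than being absorbed multiplicatively; I would also stress that invoking \cref{lemma:sample-density-character} is legitimate even when $\lmax \ge \omega'$ (which is exactly why the restriction $\ell < \omega'$ of Bingmann et al. is no longer needed), and that oversampling a string repeatedly is harmless, since unifying it by appending its local index costs only $\mathcal{O}(\log(n/p)/\log\sigma) = \mathcal{O}(\lmax)$ extra characters and does not affect the stated bound.
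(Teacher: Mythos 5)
Your proposal is correct and follows essentially the same route as the paper's proof: apply the character-based sample-density lemma per PE to get $\norm{\mathcal{B}_i^j}\le(\abs{\mathcal{V}_i^j}+1)(\omega'+\lmax)$, sum using $\sum_i\abs{\mathcal{V}_i^j}=p'(v+1)/r$, and split the resulting product into the $\omega'$- and $\lmax$-contributions to obtain $\parens*{1+\frac{r}{v}}\frac{\norm{\mathcal{S'}}}{r}+\parens*{1+\frac{v+1}{r}}p'\lmax$. The paper's argument is exactly this calculation, so no gap remains.
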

\begin{proof}
	As for its string-based counterpart, by counting the number of sample strings that are contained within a bucket $\StringSet{B}^j$ and applying \cref{lemma:sample-density-character}, we obtain the claimed bounds on the maximum number of characters per bucket.
    Let $\mathcal{V}_i^j=\mathcal{B}_i^j\cap\mathcal{V}_i$ be the local sample buckets as in the proof
	of Theorem~\ref{appendix:theorem:string-sampling-r-streams}.
	Using Lemma~\ref{lemma:sample-density-character} yields the analogous bound
	$\abs{\mathcal{B}_i^j}\leq\parens*{\abs{\mathcal{V}_i^j}+1}\parens{\omega'+\lmax}$.
	The equalities for $\mathcal{V}_i^j$ and $\mathcal{B}^j$ remain identical which suffices to obtain
	the stated bound.
	\begin{align*}
		\norm{\mathcal{B}^j} & = \sum_{i=0}^{p'-1} \norm{\mathcal{B}_i^j}
		\leq \sum_{i=0}^{p'-1} \parens*{\abs{\mathcal{V}_i^j}+1}\parens{\omega'+\lmax}                                            \\
		                     & \le \parens*{\frac{p'(v+1)}{r}+p'}\parens*{\frac{\norm{\mathcal{S'}}}{p'(v+1)}+\lmax}              \\
		                     & \le \frac{\norm{\mathcal{S'}}}{r}+\frac{\norm{\mathcal{S'}}}{v}+\parens*{\frac{p'(r+v+1)}{r}}\lmax \\
		                     & = \parens*{1+\frac{r}{v}} \frac{\norm{\mathcal{S'}}}{r}+ \parens*{1 + \frac{v+1}{r}}p'\lmax
	\end{align*}
	\qedhere
\end{proof}

Note that the term $\parens{1 + \frac{v+1}{r}}p'\lmax$ in
Theorem~\ref{thm:techniques:partition:regular:character:bucket_size} can be simplified to
$\Theta(p'\lmax)$ if $v=\Theta(r)$.
Hence, the bound on the number of characters in a bucket $\mathcal{B}^j$ does not only depend on the number of samples but also on the number of PEs as a (character-)shift of up to $\lmax - 1$ might occur on every PE.
In \cref{lemma:multi_level:complexity:character_bucket_size}, we give bounds on the number of characters per bucket over the course of our algorithm when using character-based partitioning.
\begin{lemma}[\cref{lemma:multi_level:complexity:character_bucket_size}]
	\label{appendix:lemma:multi_level:complexity:character_bucket_size}
	On recursion level $\recursionlevel$ with $r=\sqrt[\leftroot{2}k]{p}$ in
	step~\ref{alg:multi_level:simple:partition} of multi-level MS using character-based regular
	sampling with a sampling factor of $v$, each bucket contains at most \(\norm{\mathcal{B}^j}\leq \parens*{1+\frac{r}{v}}^{\recursionlevel}\parens*{\frac{N}{r^\recursionlevel}	+ \recursionlevel \parens*{1+\frac{v+1}{r}}\frac{p}{r^{\recursionlevel-1}}\lmax}\) characters.
\end{lemma}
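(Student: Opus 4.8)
The plan is to prove \cref{appendix:lemma:multi_level:complexity:character_bucket_size} by induction on the recursion level $\recursionlevel$, using the single-level bound of \cref{thm:techniques:partition:regular:character:bucket_size} once per level --- exactly the way \cref{lemma:string-based-sampling} is obtained from \cref{appendix:theorem:string-sampling-r-streams}. The only bookkeeping one has to get right is that on level $\recursionlevel$ a group performs its partition step while still consisting of $p' = r^{k+1-\recursionlevel} = p/r^{\recursionlevel-1}$ PEs, and that the array $\StringSet{S'}$ it partitions is precisely one of the buckets produced on level $\recursionlevel-1$; the redistribution performed by the assignment step (\cref{thm:bounded-assignment}) changes which PE holds which characters but not the total $\norm{\StringSet{S'}}$, so it plays no role for this lemma.

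For the base case $\recursionlevel = 1$ we have $p' = p$ and $\norm{\StringSet{S'}} = N$, so \cref{thm:techniques:partition:regular:character:bucket_size} directly yields $\norm{\mathcal{B}^j} \le (1+r/v)\,N/r + (1+(v+1)/r)\,p\,\lmax$; since $(1+r/v) \ge 1$ the additive $\lmax$-term may be multiplied by $(1+r/v)$, which is exactly the claimed bound for $\recursionlevel = 1$. For the step $\recursionlevel-1 \rightsquigarrow \recursionlevel$, I would apply \cref{thm:techniques:partition:regular:character:bucket_size} with $p' = p/r^{\recursionlevel-1}$ and then substitute the inductive hypothesis for $\norm{\StringSet{S'}}$. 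Dividing the hypothesis by $r$ turns $N/r^{\recursionlevel-1}$ into $N/r^{\recursionlevel}$ and $(\recursionlevel-1)(1+(v+1)/r)\,p\,\lmax/r^{\recursionlevel-2}$ into $(\recursionlevel-1)(1+(v+1)/r)\,p\,\lmax/r^{\recursionlevel-1}$, and multiplying by $(1+r/v)$ raises the leading factor to $(1+r/v)^{\recursionlevel}$. The single fresh additive term $(1+(v+1)/r)\,p\,\lmax/r^{\recursionlevel-1}$ contributed by the level-$\recursionlevel$ character shift is then bounded above by $(1+r/v)^{\recursionlevel}$ times itself and folded into the bracket, bumping the coefficient of the $\lmax$-term from $\recursionlevel-1$ to $\recursionlevel$, which matches the statement exactly.

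I do not expect a genuine obstacle: the argument is a clean single-variable induction whose inductive engine is already established. The two points that need care are (i) using $p' = p/r^{\recursionlevel-1}$ (the group size \emph{before} the level-$\recursionlevel$ split) rather than $p/r^{\recursionlevel}$, and (ii) the slightly wasteful but harmless use of the factor $(1+r/v)$ when absorbing the fresh $\lmax$-imbalance introduced on level $\recursionlevel$ --- this is what lets the geometric sum over levels $1,\dots,\recursionlevel$ of per-level shift contributions collapse into the single clean term $\recursionlevel(1+r/v)^{\recursionlevel}(1+(v+1)/r)\,p\,\lmax/r^{\recursionlevel-1}$ appearing in the lemma. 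Finally, if one also wishes to drop the divisibility assumption on $\omega'$, the general-case argument of \cref{appendix:partitioning-general-case} carries over and contributes only an additional $\bigO{1}$ factor over all $k$ levels, which I would handle exactly as there.
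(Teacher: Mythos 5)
Your proposal is correct and follows essentially the same route as the paper: an induction on $\recursionlevel$ that applies \cref{thm:techniques:partition:regular:character:bucket_size} once per level with group size $p' = p/r^{\recursionlevel-1}$, uses $(1+r/v)\ge 1$ to fold the additive $\lmax$-term into the bracket in the base case, and absorbs the fresh per-level shift term by the same factor to raise the coefficient from $\recursionlevel-1$ to $\recursionlevel$. No discrepancies worth noting.
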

\begin{proof}
	We give a proof by induction.
	The base case of $\recursionlevel=1$ can be derived immediately from
	Theorem~\ref{thm:techniques:partition:regular:character:bucket_size} as follows:\fullstop{}
	\begin{equation*}
		\norm{\mathcal{B}^j}
		\leq \parens*{1 + \frac{r}{v}}\frac{N}{r}+ \parens*{1+\frac{v+1}{r}}p\lmax
		\leq \parens*{1 + \frac{r}{v}}\parens*{\frac{N}{r}+ \parens*{1+\frac{v+1}{r}}p\lmax}
	\end{equation*}
    as $r/v\geq 0$ and therefore $(1 + r/v) \ge 1$ holds.
	We now proceed with the inductive case $\recursionlevel-1\leadsto \recursionlevel$ by applying the same theorem again on each
	group.
	Note that the value of $p'$ needs to be changed to reflect the current group size, i.e.,~
	$p/r^{\recursionlevel-1}$.
	With that we find:\fullstop{}
	\begin{align*}
         & \norm{\mathcal{B}^j} \le \parens*{1+\frac{r}{v}}\frac{\mathrm{Induction \ Hypothesis \ for \ }(\recursionlevel - 1)}{r} + \parens*{1+\frac{v+1}{r}}\frac{p}{r^{\recursionlevel-1}}\lmax            \\
         &= \parens*{1+\frac{r}{v}}\frac{\parens*{1+\frac{r}{v}}^{\recursionlevel-1}\parens*{\frac{N}{r^{\recursionlevel-1}} + (\recursionlevel-1)\parens*{1+\frac{v+1}{r}}\frac{p}{r^{\recursionlevel-2}}\lmax}}{r} + \parens*{1+\frac{v+1}{r}}\frac{p}{r^{\recursionlevel-1}}\lmax \\
         & = \parens*{1+\frac{r}{v}}^\recursionlevel\parens*{\frac{N}{r^\recursionlevel} + (\recursionlevel-1) \parens*{1+\frac{v+1}{r}}\frac{p}{r^{\recursionlevel-1}}\lmax}  +  \parens*{1+\frac{v+1}{r}}\frac{p}{r^{\recursionlevel-1}}\lmax                                                                            \\
         & \leq \parens*{1+\frac{r}{v}}^\recursionlevel\parens*{\frac{N}{r^\recursionlevel} + (\recursionlevel-1) \parens*{1+\frac{v+1}{r}}\frac{p}{r^{\recursionlevel-1}}\lmax} +  \underbrace{\parens*{{1+\frac{r}{v}}}^{\recursionlevel}}_{\ge 1}\parens*{1+\frac{v+1}{r}}\frac{p}{r^{\recursionlevel-1}}\lmax                                                                            \\
         & = \parens*{1+\frac{r}{v}}^\recursionlevel\parens*{\frac{N}{r^\recursionlevel} + \recursionlevel\parens*{1+\frac{v+1}{r}}\frac{p}{r^{\recursionlevel-1}}\lmax} \\
	\end{align*}
\end{proof}

The additional term depending on $\lmax$ stems from the shifting to the beginning the of the strings,
By distributing the characters equally over the PEs in each group up to additional $\mathcal{O}(\lmax)$ characters (see \cref{section:assignment-strategies}), we can limit the maximum number of characters per PE in \cref{appendix:theorem:character-based-sampling}.%
\begin{theorem}[Character-Based Sampling, \cref{theorem:character-based-sampling}]
	\label{appendix:theorem:character-based-sampling}
	Using a sampling factor in $\Theta(kr)$ the maximum number of character per PE is in
	\[
		\mathcal{O}\parens*{\frac{N}{p} + k^2r\lmax}.
	\]
\end{theorem}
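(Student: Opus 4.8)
The plan is to feed the per-bucket bound of \cref{appendix:lemma:multi_level:complexity:character_bucket_size} into the bounded-assignment guarantee of \cref{thm:bounded-assignment} and then collapse everything using $v=\Theta(kr)$. First I would fix a recursion level $\recursionlevel\in\{1,\dots,k\}$ and recall that just before the partitioning step on this level the PEs of a group ($p'=p/r^{\recursionlevel-1}$ of them) are about to be split into $r$ subgroups of $p''=p/r^{\recursionlevel}$ PEs each. By \cref{appendix:lemma:multi_level:complexity:character_bucket_size} the bucket $\mathcal{B}^j$ destined for subgroup $j$ satisfies
\[
  \norm{\mathcal{B}^j}\;\le\;\left(1+\tfrac{r}{v}\right)^{\recursionlevel}\!\left(\tfrac{N}{r^{\recursionlevel}}+\recursionlevel\left(1+\tfrac{v+1}{r}\right)\tfrac{p}{r^{\recursionlevel-1}}\lmax\right),
\]
and by \cref{thm:bounded-assignment} the assignment spreads these characters over the $p''$ PEs of subgroup $j$ up to an additive $\lmax$. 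Dividing by $p''=p/r^{\recursionlevel}$ turns the $1/r^{\recursionlevel}$ in the bucket bound into $1/p$ and $p/r^{\recursionlevel-1}$ into $r$, so a PE on level $\recursionlevel$ holds at most $\left(1+\tfrac{r}{v}\right)^{\recursionlevel}\!\left(\tfrac{N}{p}+\recursionlevel\left(1+\tfrac{v+1}{r}\right)r\,\lmax\right)+\lmax$ characters.

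Next I would simplify the $v$-dependent quantities. With $v=\Theta(kr)$ we get $r/v=\Theta(1/k)$, hence $\left(1+\tfrac{r}{v}\right)^{\recursionlevel}\le\left(1+\Theta(1/k)\right)^{k}=\Theta(1)$ since $\recursionlevel\le k$; and $1+\tfrac{v+1}{r}=\Theta(k)$, so $\recursionlevel\left(1+\tfrac{v+1}{r}\right)r=\Theta(k^2r)$, again using $\recursionlevel\le k$. Plugging these in (and absorbing the trailing $\lmax$, which is dominated by $k^2r\lmax$) gives a per-PE character count in $\mathcal{O}(N/p+k^2r\lmax)$ on every level $\recursionlevel$; taking the maximum over the $k$ levels leaves the bound unchanged, which is the claim.

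The part that needs care --- rather than any real obstacle --- is the bookkeeping of how many PEs a bucket is split across: $\mathcal{B}^j$ carries the compounded imbalance of all previous partitioning rounds (the $(1+r/v)^{\recursionlevel}$ factor and the $\recursionlevel$-fold $\lmax$ term already built into \cref{appendix:lemma:multi_level:complexity:character_bucket_size} through its induction), yet it is then divided among only $p/r^{\recursionlevel}$ PEs, and it is precisely this division by $r^{\recursionlevel}$ that cancels the $r^{\recursionlevel}$ in the denominator of the bucket bound. The second thing to watch is that one must take $v=\Theta(kr)$ and not merely $\Theta(r)$: with $v=\Theta(r)$ the factor $(1+r/v)^{\recursionlevel}$ could grow like $e^{\Theta(k)}$ over $k$ levels, whereas the extra factor $k$ in the sampling rate pins it at $\Theta(1)$ throughout. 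Everything else is the routine substitution above.
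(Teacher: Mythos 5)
Your proposal is correct and follows essentially the same route the paper intends: it combines the per-bucket bound of \cref{appendix:lemma:multi_level:complexity:character_bucket_size} with the bounded-assignment guarantee of \cref{thm:bounded-assignment} (division by $p''=p/r^{\recursionlevel}$ plus an additive $\lmax$) and then simplifies with $v=\Theta(kr)$, which is exactly the computation the paper leaves implicit when it states \cref{appendix:theorem:character-based-sampling}. The algebra (cancellation of $r^{\recursionlevel}$, $(1+r/v)^{\recursionlevel}=\Theta(1)$, and $\recursionlevel(1+(v+1)/r)r=\mathcal{O}(k^2r)$) checks out.
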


For the single-level case with $k=1$, this is equivalent to $O(N/p + p\lmax)$ which is the bound in the original algorithm \cite{DBLP:conf/ipps/Bingmann0S20}.
For $k > 1$, we even have an improvement over the single-level algorithm.
Since we assume $k$ in $\bigO{\log p/\log\log{p}}$, we find $k^2r\lmax = \mathcal{O}(\log^2(p)\sqrt[k]{p}\lmax) = o(p\lmax)$.
This may seem counter-intuitive at first as we introduce a potential imbalance already in the first recursion level.
However, the subsequent assignment step distributes this imbalance equally over $p'/r$ PEs.

\section{Distributed Duplicate Detection}
\label{appendix:distributed-duplicate-detection}
A key building block of the computation of the approximate distinguishing prefixes approach is distributed duplicate detection.
Bingmann et al.~\cite{DBLP:conf/ipps/Bingmann0S20} use a distributed single-shot Bloom filter \cite{sanders2013communication} for this task, which allows for approximate membership queries.
Conceptually, a distributed single-shot Bloom filter of size $m$ is a bit array of size $m$ which is equally distributed over the $p$ PEs.
To insert an element $e$, the element is hashed to a random position $h(e)$ within the interval $[0,m)$ and the corresponding bit is set on the PE responsible for this position, where $h$ denotes a random hash function.
For querying whether an element $e$ is contained in the filter, one simply has to check whether the corresponding bit at position $h(e)$ is set.
This may result in a false positive result.
A false positive query probability $f^+$ for a single-shot Bloom filter with $n$ inserted elements can be achieved by having a size $m \ge nf^+$ \cite{sanders2013communication}.
In our setting, we assume that operations are executed in batches, i.e., each PE $i$ has $n_i$ insertion or query operations.
The hash values associated with the operations are compressed before sending, for example using Elias-Fano encoding \cite{Elias74,Fano71}.
It can be shown that the expected communication volume for each insertion/query operation is in $\log{p} + \mathcal{O}(1)$ bits for Bloom filters with constant false positive probability.
For more details, we refer to \cite{sanders2013communication}.

A problem of distributed (single-level) single-shot Bloom filters is that they have a latency (at least) linear in $p$ since messages are delivered directly \footnote{Note that to mitigate this, Bingmann~et~al. propose hypercube all-to-all communication, which reduces the latency to $\mathcal{O}(\alpha \log{p})$ at the cost of a communication volume increased by a logarithmic factor.}.
As another restriction, we find that the expected communication volume in $\mathcal{O}(\log{p})$ per operation is only guaranteed if the overall number of queries is large.

To mitigate both problems, we propose to apply our $k$-level communication scheme also to Bloom filters, which reduces their latency to $\mathcal{O}(\alpha k\sqrt[k]{p})$ at the cost of a communication volume of $\mathcal{O}(k\log{p})$ bits per operation.
Furthermore, in \cref{appendix:k-level-bloomfilter} we will show that $k$-level Bloom filters only require $\omega(k^2p^{1+1/k}\log{p})$ operations for an expected communication volume in $\mathcal{O}(k\log{p})$.

However, since the number of strings participating in the prefix doubling process decreases from iteration to iteration, we might eventually end up with less queries from some iteration on.
When this is the case, we will use (atomic) hypercube quicksort for duplicate detection.
An upper bound for the running time for this approach will be given in \cref{appendix:thm:qsort_duplicate}.
\subsection{Multi-Level Bloom filters}

Before discussing the running time of multi-level distributed single-shot Bloom filters in \cref{appendix:k-level-bloomfilter}, we will prove \cref{appendix:lemma:balls-into-bins-with-duplicates}, which analyzes a balls-into-bins with duplicates scenario.
\cref{appendix:lemma:balls-into-bins-with-duplicates} will then be used as a building block in the proof of \cref{appendix:k-level-bloomfilter}.

Let $M$ be a multiset consisting of $m$ elements on a group of $p'$ processors, where each PE holds $\mbar = m/p'$ elements.
Furthermore, we assume the elements to be locally unique.
Let $S$ denote the set of unique elements within $M$ and for $e \in S$ let $c_e$ denote the number of occurrences of $e$ within $M$.
Each local element is now assigned uniformly at random to one of the $p'$ PEs.
\begin{lemma}
	\label{appendix:lemma:balls-into-bins-with-duplicates}
	Let $X$ denote the maximum number of elements a PE receives in the above-described process.
	For $k \ge 1$ and $\mbar = \omega(k^2p'\log{p})$ with $p \ge p'$, we find
	\[
		\mathrm{Pr}[X > \mbar(1+1/k)] \le \frac{1}{p^{\omega(1)}}.
	\]
\end{lemma}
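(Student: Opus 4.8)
The plan is to bound, for a single destination PE, the probability that it receives more than $\mbar(1+1/k)$ distinct elements, and then take a union bound over the $p'\le p$ destinations --- which preserves a failure probability in $p^{-\omega(1)}$. Fix a destination $j$ and let $X_j$ denote the number of distinct elements it receives, so that $X=\max_j X_j$. The reason one cannot simply invoke a Chernoff bound is that the merging of duplicates makes $X_j$ a nonlinear function of the routing decisions; I would therefore use the bounded-differences inequality (McDiarmid's inequality), which only needs independence of the underlying decisions together with a small coordinate sensitivity.

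Two facts drive the argument. First, $\expectedValue{X_j}\le\mbar$: if PE $j$ counted received elements \emph{with} multiplicity it would get a $\mathrm{Bin}(m,1/p')$ number of copies, of mean $m/p'=\mbar$, and deduplication only decreases the count; equivalently $\expectedValue{X_j}=\sum_{e\in S}\bigl(1-(1-1/p')^{c_e}\bigr)\le\sum_e c_e/p'=\mbar$. Second, $X_j$ is determined by at most $m$ independent uniform choices (one per locally-unique element being routed --- and it is immaterial for this bound whether all copies of a given value are routed together or independently), and re-routing a single one of them changes $X_j$ by at most $1$, since it can only insert or delete that value from PE $j$'s received set.

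Feeding these into McDiarmid's inequality gives, for any $t>0$,
\[
  \mathrm{Pr}\bigl[X_j \ge \expectedValue{X_j}+t\bigr] \le \exp\!\left(-\frac{2t^2}{m}\right).
\]
Taking $t=\mbar/k$ makes $\expectedValue{X_j}+t\le\mbar(1+1/k)$, so with $m=\mbar p'$,
\[
  \mathrm{Pr}\bigl[X_j > \mbar(1+1/k)\bigr] \le \exp\!\left(-\frac{2\mbar^2}{k^2 m}\right) = \exp\!\left(-\frac{2\mbar}{k^2 p'}\right).
\]
The hypothesis $\mbar=\omega(k^2p'\log p)$ makes the exponent $\omega(\log p)$, so this probability is $p^{-\omega(1)}$; a union bound over the $p'\le p$ destinations then yields $\mathrm{Pr}[X>\mbar(1+1/k)]\le p'\cdot p^{-\omega(1)}=p^{-\omega(1)}$.

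I expect the only delicate points to be (i) pinning the bounded-difference constant to exactly $1$, so that the sum of squared coordinate sensitivities is $m$ rather than something larger --- this is precisely what matches the hypothesis $\mbar=\omega(k^2p'\log p)$ to the available slack $\mbar/k$ --- and (ii) arguing the expectation bound in the presence of duplicates, i.e.\ that deduplication never increases the count; local uniqueness of the input is a convenient (though, for this estimate, inessential) structural assumption here, guaranteeing $c_e\le p'$.
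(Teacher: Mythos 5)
There is a genuine gap: you are bounding a different random variable than the lemma requires. In the process the lemma describes, the destination of an element is determined by its (hash) value, so all $c_e$ copies of a value $e$ held by different PEs of the group are sent to the \emph{same} uniformly random PE; the quantity $X_j$ the paper needs is the number of elements PE $j$ receives counted \emph{with multiplicity}, $X_j=\sum_{e\in S}c_e\mathbbm{1}_{e,j}$ with $\Pr[\mathbbm{1}_{e,j}=1]=1/p'$, since this is the received communication volume in a round of the $k$-level Bloom filter (deduplication happens only after reception, when forwarding). Your $X_j$ is the number of \emph{distinct} values received, and you even allow copies to be routed independently; this is a much smaller quantity. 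In the extremal instance where every PE holds the same $\mbar$ values, your distinct count has mean $\mbar/p'$, while the received volume is $p'$ times it with mean $\mbar$ — so a bound of $\mbar(1+1/k)$ on the distinct count says essentially nothing about the quantity the lemma (and the induction in the Bloom filter theorem) uses. The remark that it is ``immaterial whether copies are routed together or independently'' is exactly where the difficulty is hidden: the correlation of identical copies is the whole point of the ``balls-into-bins with duplicates'' setting.

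Your McDiarmid argument also does not transfer to the correct quantity under the stated hypothesis. With one independent uniform choice per distinct value, re-routing value $e$ changes $\sum_{e}c_e\mathbbm{1}_{e,j}$ by $c_e\le p'$, so the bounded-difference constants are $c_e$, the sum of their squares can be as large as $p'm$, and McDiarmid only yields $\exp\bigl(-2\mbar/(k^2p'^2)\bigr)$, which would require the stronger assumption $\mbar=\omega(k^2p'^2\log p)$. The paper instead reduces to the worst case in which every multiplicity equals $p'$ (every PE holds the same set), where $X_j=p'Y_j$ with $Y_j\sim\mathrm{Binom}(\mbar,1/p')$, and a multiplicative Chernoff bound gives $\exp\bigl(-\mbar/(3p'k^2)\bigr)$, exactly matching $\mbar=\omega(k^2p'\log p)$; a Bernstein-type inequality applied directly to the weighted sum, exploiting the per-value success probability $1/p'$, would work as well, but plain bounded differences does not. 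Your expectation estimate and the final union bound over the $p'\le p$ destinations are fine, but the core concentration step must be redone for the with-multiplicity count.
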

\begin{proof}(Outline)
	In our proof outline, we mainly reiterate the idea behind the proof of \cite[Lemma 3]{sanders2013communication}.
	For each (unique) element $e \in S$, let $\mathbbm{1}_{e,j}$ be an indicator variable with $\mathbbm{1}_{e,j} = 1$ iff element $e$ is sent to PE $j$.
	As the elements are assigned uniformly at random, we have $\mathrm{Pr}[\mathbbm{1}_{e,j} = 1] = 1 /p' $.
	Let $X_j$ denote the number of elements received by PE $j$.
	We find $X_j = \sum_{e\in S}c_e\mathbbm{1}_{e,j}$.
    Since $\sum_{e \in S} c_e = m$, we find $\expectedValue{X_j} = m/p'$ for the expected number of elements received by PE $j$ by the linearity of expectation.
	With the same arguments as in \cite{sanders1996competitive} one can show that $X_j$ is least sharply concentrated around its mean when all elements have a multiplicity $c_e = p'$,
	i.e., each PE has the same set of elements.
	It is therefore sufficient to show a bound on $\mathrm{Pr}[Y_j > \mbar/p'(1+1/k)]$ for this worst-case setting in which we have $X_j = p'Y_j$, where $Y_j$ denotes the number of unique elements received by PE $j$.

	As $Y_j \sim\mathrm{Binom}(\mbar,1/p')$, we can use a Chernoff-bound \cite[Theorem 4.4]{mitzenmacher05} to obtain
	\[
		\textrm{Pr}\left[Y_j > \frac{\mbar}{p'}\parens*{1 + \frac{1}{k}}\right] \le \exp\parens*{-\frac{\mbar}{3p'k^2}}
	\]
	as an upper bound on the probability that PE $j$ receives more than $\mbar/p'(1+1/k)$ (unique) elements for $k \ge 1$.
	By the union bound argument, the probability that any of the $p'$ PE receives more than $\mbar/p(1+1/k)$ (unique) elements is at most
	\begin{equation*}
		p'\exp\left(-\frac{\mbar}{3p'k^2}\right) \le p'\exp(-\omega(\log{p'}))  = \frac{p'}{p^{\omega(1)}} = \frac{1}{p^{\omega(1)}}
	\end{equation*}
	for $\mbar = \omega(p'k^2\log{p})$. Since $X_j = p'Y_j$ for all PE $j$, we overall find $\mathrm{Pr}[X > \mbar(1 + 1/k)] \le \frac{1}{p^{\omega(1)}}$.
\end{proof}

We now use \cref{appendix:lemma:balls-into-bins-with-duplicates} as a building block to show a bound on the running time for a multi-level Bloom filter.

\begin{theorem}[\cref{thm:k-level-bloomfilter}]
	\label{appendix:k-level-bloomfilter}
	Using communication on a $k$-dimensional
	grid, performing at most $\nmax$ operations (insertions, queries) per PE
	on a distributed single-shot Bloom Filter of size
	$m \ge en$ can be done in time
	$\mathcal{O}\left(k\left(\alpha
      p^{1/k}+\beta\nmax\log\frac{mp}{n} + \nmax\log{k}\right)\right)$
	in expectation and with probability $\ge 1 - 1/p^{\omega(1)}$
	assuming the total number of operations $n = \omega(k^2p^{1+1/k}\log{p})$ and additionally $m = \poly(n)$.
\end{theorem}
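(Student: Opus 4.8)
The plan is to analyze one round of communication on each of the $k$ dimensions of the grid, bound the per-PE load (number of hash values routed through a PE) using the balls-into-bins-with-duplicates bound of \cref{appendix:lemma:balls-into-bins-with-duplicates}, and then sum over the $k$ dimensions. First I would set up the $k$-dimensional grid so that the $p$ PEs form a grid with side length $p^{1/k}$ in each of the $k$ dimensions. Routing a batch of operations to their target positions in the size-$m$ bit array is done in $k$ phases; in phase $t$, each PE forwards every pending operation to the PE agreeing with the target in the first $t$ coordinates. Because the hash function $h$ maps elements uniformly into $[0,m)$, in each phase an operation is sent to one of the $p^{1/k}$ PEs along the current dimension uniformly at random, independently of everything else. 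The crucial observation is that after the initial (local) deduplication of hash values, the multiset of values held across the PEs on a line of the grid is exactly the kind of input handled by \cref{appendix:lemma:balls-into-bins-with-duplicates}: locally unique elements with arbitrary global multiplicities, redistributed uniformly at random.

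Next I would apply \cref{appendix:lemma:balls-into-bins-with-duplicates} with $p' = p^{1/k}$ and $\mbar$ equal to the current per-PE load. Starting from at most $\nmax$ operations per PE, the lemma gives that after one redistribution each PE holds at most $\nmax(1+1/k)$ operations with probability $\ge 1 - 1/p^{\omega(1)}$, provided $\nmax p^{1/k} = \omega(k^2 p^{1/k}\log p)$, i.e. $\nmax = \omega(k^2\log p)$; this follows from the hypothesis $n = \omega(k^2 p^{1+1/k}\log p)$ since $\nmax \ge n/p$ in the balanced case (and the per-PE load only needs to be bounded, so we can feed the bound forward). Iterating over all $k$ phases, the load after phase $t$ is at most $\nmax(1+1/k)^t \le \nmax(1+1/k)^k \le e\,\nmax = \mathcal{O}(\nmax)$, and a union bound over the $k \le p$ phases keeps the failure probability in $1/p^{\omega(1)}$. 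Hence every PE sends and receives $\mathcal{O}(\nmax)$ hash values in each phase. Each value, after Elias--Fano (or similar) compression, costs $\log(m/(\text{values per PE})) + \mathcal{O}(1) = \mathcal{O}(\log(mp/n) + \log k)$ bits, where the $\log k$ term absorbs the $(1+1/k)^t$ blow-up in density; with $m = \poly(n)$ the $\log(mp/n)$ term is $\Theta(\log p)$-bounded as needed. Each of the $k$ phases is a single all-to-all along a line of $p^{1/k}$ PEs, costing $\alpha p^{1/k} + \beta \cdot \mathcal{O}(\nmax)\cdot \mathcal{O}(\log(mp/n)+\log k)$, plus $\mathcal{O}(\nmax\log k)$ local work for re-compression; multiplying by $k$ phases (and adding a symmetric return pass of query answers, which is dominated) gives the claimed bound $\mathcal{O}(k(\alpha p^{1/k} + \beta\nmax\log\tfrac{mp}{n} + \nmax\log k))$. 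The correctness of the membership answers and the false-positive guarantee for $m \ge en$ follow directly from the single-level analysis of \cite{sanders2013communication}, since the grid routing only changes \emph{how} values reach their home PE, not \emph{which} bit is set or tested.

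The main obstacle I expect is controlling the \emph{duplicate structure} as it propagates through successive phases. The clean statement of \cref{appendix:lemma:balls-into-bins-with-duplicates} assumes locally unique elements; after the first phase, a PE may legitimately hold several copies of the same hash value (received from different senders), so before each subsequent phase one must re-deduplicate locally and argue that this only decreases the load — and that the worst case for concentration (all multiplicities equal to the line length) is preserved under the recursion. A second, more technical point is that the load bound $\nmax(1+1/k)^t$ must be fed into the precondition of the lemma at phase $t$; one has to check that $\nmax(1+1/k)^t \cdot p^{1/k} = \omega(k^2 p^{1/k}\log p)$ still holds, which it does since $(1+1/k)^t \ge 1$ and the phase-$0$ precondition already gives $\nmax = \omega(k^2\log p)$. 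Finally, one should be slightly careful that the $\beta$ term genuinely scales as $\log(mp/n)$ rather than $\log(mp/n) + \log k$ in the theorem statement as written — the extra $\log k$ is the density blow-up and is present in the stated bound via the separate $\nmax\log k$ term in the local-work component; I would double-check the bookkeeping so that the communication cost of the compressed values is charged consistently.
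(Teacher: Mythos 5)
Your proposal follows essentially the same route as the paper's proof: $k$ rounds of routing along the grid dimensions, an inductive application of the balls-into-bins-with-duplicates lemma with group size $p'=p^{1/k}$ giving a per-PE load of $\nmax(1+1/k)^t\le e\nmax$ with probability $1-1/p^{\omega(1)}$ after a union bound, local deduplication so that each hash value still occurs at most $p^{1/k}$ times per group in the next round, Elias--Fano compression yielding $\mathcal{O}(\log(mp/n))$ bits per operation, and a one-bit-per-query return pass. Two small bookkeeping remarks: the precondition of the lemma is on the \emph{per-PE} load, i.e.\ $\nmax=\omega(k^2p^{1/k}\log p)$ rather than the $\nmax=\omega(k^2\log p)$ you wrote down, but this is exactly what $\nmax\ge n/p=\omega(k^2p^{1/k}\log p)$ delivers, so nothing breaks; and the paper charges the $\mathcal{O}(\nmax\log k)$ local-work term to integer-sorting the received hash values in each round (which is precisely what implements the deduplication you flag as the main obstacle), not to re-compression --- the $(1+1/k)^t\le e$ density blow-up costs only $\mathcal{O}(1)$ extra bits per value and needs no $\log k$ in the $\beta$ term.
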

\begin{proof}(Outline)
	We focus on the part that is different from~\cite{sanders2013communication}.
	We route operations to their recipient PE determined by the hash value of the corresponding elements using a $k$-dimensional grid.
	This takes $k$ rounds of communication.
	In each of these rounds, there are $r = \sqrt[k]{p}$ messages per PE incurring a latency overhead of $\mathcal{O}(\alpha \sqrt[k]{p})$.
	In each round, we have the invariant that the received operations, i.e., hash values that are either queried or eventually inserted, are increasingly sorted.
	If a hash value occurs multiple times, it is forwarded only once.
	Due to the grid communication scheme, in each round $\recursionlevel$ with $0 \le \recursionlevel < k$, PEs exchange messages only with PEs located within the same row for dimension $\recursionlevel$ of the $k$-dimensional grid.
	These PEs constitute a group $g$.
    In the following, we will show that the probability that any PE in round $\recursionlevel$ receives more than $\nmax(1 + 1/k)^\recursionlevel$ hash values is in $1/p^{\omega(1)}$.

	We first consider the number of operations received by a PE $j$ in the first round and then show the claim by induction.
	Without loss of generality, we assume that each PE stores $\nmax$ operations.
	While we can assume that hash values associated with the (insert/query) operations are locally unique, we have the situation that they can occur multiple (at most $r$) times across the PEs within a PE-group $g$.
	By \cref{appendix:lemma:balls-into-bins-with-duplicates}, we know that each PE in $g$ receives more than $\nmax(1 + 1/k)$ operations with probability $\le 1/p^{\omega(1)}$.
	This holds since we assume $\nmax \ge n/p = \omega(k^2r\log{p})$ and therefore fulfill the condition on $\mbar$ in \cref{appendix:lemma:balls-into-bins-with-duplicates} with $p' = r$.
	As there are at most $\mathcal{O}(p)$ such groups, by the union bound, the probability that a PE in any group receives more than $\nmax(1+1/k)$ hash values is also $\le 1/p^{\omega(1)}$.
	Since duplicate hash values are only forwarded once, the assumption that each hash value occurs at most $r$ times within a PE group also holds for the following dimension.

	Let us assume that in round $0 < \recursionlevel < k-1$ a PE has received at most $\nmax (1+1/k)^\recursionlevel$ hash values with probability $\le 1/p^{\omega(1)}$ and each hash values occurs at most $r$ times
	in the PE group $g$ for round $\recursionlevel+1$.
	We apply \cref{appendix:lemma:balls-into-bins-with-duplicates} with $\mbar = \nmax (1+1/k)^\recursionlevel$ and $p' = r$ and therefore have $\mbar = \nmax (1+1/k)^\recursionlevel \ge \nmax \ge n/p = \omega(k^2r\log{p})$.
	Using the union bound argument as before, we find the event that any PE receives more than $\mbar(1+1/k) = \nmax (1+1/k)^\recursionlevel(1+1/k) = \nmax(1+1/k)^{\recursionlevel+1}$ occurs with probability in $1/p^{\omega(1)}$.
	Again as duplicate values are forwarded only once, we also fulfill the second inductive assumption.

	Thus, overall, we can conclude that any PE sends or receives at most $\nmax(1+k)^\recursionlevel \le \nmax(1+k)^k \le \nmax e$ hash values with probability $\ge 1 - 1/p^{\omega(1)}$ in round $\recursionlevel$.

	We now argue about the required communication volume in bits when a PE exchanges at most $\nmax e$ hash values.

	With Elias-Fano encoding \cite{Elias74,Fano71}, encoding $x$ out of $m$ one-bits can be achieved using at most $\encoding{m}{x} = x\left(\log\frac{m}{x}+2\right)$ bits.
    Therefore, a message containing $x$ operations on a Bloom filter of size $m$ can be encoded with at most $\encoding{m}{x}$ bits. 
	In each round, each PE sends (and analogously receives) $r$ messages containing $n_1, \dots, n_r$ operations with $\sum_{j=1}^r {n_j} = n'$ and we find $n' \le e\nmax$ with high probability.
	These messages can be encoded using
	$ \sum_{j=1}^r \encoding{\frac{m}{r}}{n_j}$ bits
	\footnote{Note that we can use $m/r$ as the universe size within the encoding, as each PE only receives hash values from an interval of size $m/r$.
  This could even be generalized to $m/r^{\recursionlevel+1}$ for round $\recursionlevel$.}.
	Due to the concavity of $\encoding{\frac{m}{r}}{\cdot}$, we obtain the following bound on the communication volume
	\begin{align*}
		\sum\limits_{j=1}^r \encoding{\frac{m}{r}}{n_j} & \le r \cdot \encoding{\frac{m}{r}}{\frac{n'}{r}} \le r \cdot \encoding{\frac{m}{r}}{\frac{e\nmax}{r}} \\
		                                                & = e\nmax  \left(\log{\frac{m}{e\nmax}} + 2\right)                                                     \\
		                                                & = \mathcal{O}\left(\nmax\left(\log{\frac{mp}{n}} + 1\right)\right)
	\end{align*}
	with high probability as $\encoding{\frac{m}{r}}{x}$ is monotonically increasing in its second component for $x \le 4m/(er)$ and we have $n/p \le \nmax \le n$ and $m \ge en$.
	The claim about the expected running time follows directly for $n$ being polynomial in $p$, for even larger $n$ one has to refine the estimation in \cref{appendix:lemma:balls-into-bins-with-duplicates}.

	The answers (yes or no) to the queries are sent on the reverse route using a single bit per request.
	Routing them to the PE that originally sent each request is possible by keeping appropriate tables mapping bit positions to sender PEs.
	These tables are stored locally and need no additional communication volume.

	The local work in each round is dominated by (integer-) sorting the received hash values.
    This can be achieved in $\mathcal{O}(\nmax\log{k})$ time with high probability \cite{DBLP:journals/tcs/KirkpatrickR84} as we have at most $\mathcal{O}(\nmax)$ values on a PE in any round with high probability and assume $m = \poly(n)$.
	The initial sorting of the at most $\nmax$ requires one additional sorting step.
    Thus, we end up with $\mathcal{O}(\nmax k \log{k})$ local work with high probability.
\end{proof}
\subsection{Duplicate Detection with Atomic Hypercube Quicksort}
In the following, we will describe how atomic hypercube quicksort \cite{axtmann2015practical, DBLP:phd/dnb/Axtmann21} can be used for distributed duplicate detection.
\begin{theorem}
	\label{appendix:thm:qsort_duplicate}
	For $n = \mathcal{O}(\poly(p))$ (atomic) hypercube quicksort \cite[Theorem 6.6]{DBLP:phd/dnb/Axtmann21} can be used to perform distributed duplicate detection with constant false positive probability $f^+$
	in time
	\[
		\mathcal{O}\parens*{\nmax\log{p} + \log^2{p}(\alpha + \beta(\nmax + \log{p})}
	\] in expectation and with probability $\ge 1-p^{c}$ for any constant $c > 0$
	where $\nmax$ is the maximum number of elements located on a single PE.
\end{theorem}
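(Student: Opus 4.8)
The plan is to reduce distributed duplicate detection to sorting short integer keys with atomic hypercube quicksort, and then to piggyback the (yes/no) answers onto the data movement performed by the sorter. First I would replace each input element by a pair consisting of a random hash value of it -- drawn as in the distributed single-shot Bloom filter~\cite{sanders2013communication} into a range of size $m$ large enough that the per-element false positive probability is $\falsePos$ -- together with the index of the PE (and local slot) it originates from. Since $\falsePos$ is constant we may take $m = \Theta(n) = \poly(p)$, so a key fits in $\mathcal{O}(\log n) = \mathcal{O}(\log p)$ bits, and tagging at most doubles the key length. Correctness is then immediate: an element is reported as a duplicate iff its hash value occurs at least twice among all $n$ hashes, a true duplicate is never missed, and by the standard collision bound a unique element is misreported with probability at most $\falsePos$.

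Next I would sort the tagged keys by hash value using atomic robust hypercube quicksort~\cite[Theorem 6.6]{DBLP:phd/dnb/Axtmann21}; if $p$ is not a power of two, the usual RQuick preprocessing (PEs with index $\ge 2^{\lfloor\log p\rfloor}$ ship their data to a partner) at most doubles $\nmax$. The key step is to invoke RQuick's robustness guarantee: with probability $\ge 1 - p^{-c}$ every PE holds only $\mathcal{O}(\nmax)$ keys throughout, so the $\Theta(\log p)$ hypercube rounds (plus the randomizing redistribution) each move $\mathcal{O}(\nmax)$ keys of $\mathcal{O}(\log p)$ bits and broadcast $\mathcal{O}(\log p)$-bit pivots, while local sorting costs $\mathcal{O}(\nmax\log\nmax) = \mathcal{O}(\nmax\log p)$ because $\nmax \le n = \poly(p)$. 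Plugging these quantities into~\cite[Theorem 6.6]{DBLP:phd/dnb/Axtmann21} accounts for the $\mathcal{O}(\nmax\log p)$ local-work term, the $\mathcal{O}(\alpha\log^2 p)$ latency term, and the $\mathcal{O}(\beta(\nmax+\log p)\log^2 p)$ volume term. After the sort, equal hashes are globally consecutive, so each PE determines the status of all but (at most) the first and last of its keys by one linear scan; the two boundary keys are resolved by a single exchange of boundary keys between consecutive PEs in the sorted order, costing $\mathcal{O}(\alpha + \beta\log p)$, which is subsumed.

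Finally I would return each one-bit answer to the PE recorded in its tag. Doing this with a single destination-grouped all-to-all would cost $\Omega(\alpha p)$ latency in the worst case, so instead I would route the answers back along the \emph{reverse} of the exchange schedule executed by the sorter: during sorting each PE keeps, for every round (including the initial randomizing permutation and the power-of-two preprocessing), a table mapping received slots back to the sending PE -- exactly as in the proof of \cref{thm:k-level-bloomfilter} -- and replaying these $\mathcal{O}(\log p)$ rounds in reverse order delivers every bit home, moving only $\mathcal{O}(\nmax)$ bits per PE per round, so its cost is dominated by the sort. Adding the three contributions yields the stated bound. The high-probability statement is inherited from RQuick; for the expectation one observes that the failure event has probability $p^{-c}$ whereas, since $n = \poly(p)$, even a maximally imbalanced execution is polynomially bounded, so choosing $c$ large enough makes the failure contribution vanish.

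I expect the main obstacle to be exactly this last step together with the imbalance control it relies on: delivering the answers within the latency budget forces us to treat hypercube quicksort not as a black box but to record and invert its $\Theta(\log p)$ rounds of communication, and for both the forward sort and the backward routing to stay within $\mathcal{O}(\nmax\log p)$ work and $\mathcal{O}(\beta(\nmax+\log p)\log^2 p)$ volume we must be certain that RQuick's robustness really keeps every PE's load at $\mathcal{O}(\nmax)$ with the required probability even though the keys carry duplicate hash values -- which, if anything, only reduces the spread of the load distribution, as quantified in \cref{appendix:lemma:balls-into-bins-with-duplicates}.
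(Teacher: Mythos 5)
Your proposal is correct and follows essentially the same route as the paper's proof: hash into a range of size $\Theta(n)$, tag each hash with its origin (the paper uses a global index from a prefix sum, you use PE index plus local slot, which is equivalent and likewise makes the sorted keys unique), sort with atomic hypercube quicksort while recording per-round routing tables, detect duplicates by a local scan plus boundary exchanges, send one-bit answers back along the reversed schedule, and handle the expectation by noting the failure event has polynomially small probability while the worst case is only $\poly(p)$. No substantive differences to report.
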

\begin{proof}
	As a first step, we map each element to an integer in \([0,nf^+)\) using a random uniform hash function \(h\).
	It is then sufficient to globally check whether there are duplicate hash values to achieve duplicate detection with false positive probability $f^{+}$ \cite{sanders2013communication}.

	After hashing each local element $e_j$, we construct hash value/index pairs $(h(e_j), j)$ which are then subsequently sorted using atomic hypercube quicksort \cite{DBLP:phd/dnb/Axtmann21}.
	Note that $j$ denotes the global index which can be computed using a distributed prefix sum over the number of elements on each PE.
	Since all indices are unique, the hash value/index pairs are also unique.
	During the sorting process, we keep track of the path a pair takes on each level by storing this information into a local table.
	Once the pairs are (lexicographically) sorted, a scan over the local elements and two additional message exchanges for the first and last local element suffices to identify duplicates.
	To communicate the result back, we replace the hash value within each hash value/index pair with a single bit indicating whether the element is unique or not.
	Then the pairs are sent back in $\mathcal{O}(\log{p})$ iterations using the stored routing information.

	Since $n$ is polynomial in $p$, we find $\log{nf^{+}} = \mathcal{O}(\log{p})$.
	Therefore, each hash value/index pair occupies only $\mathcal{O}(\log{p})$ bits and we achieve the above stated high probability bound.
	We need $\mathcal{O}(n^{c'})$ time per level in case all elements accumulate on a single PE during the course of the algorithm, which is highly unlikely as this event occurs with probability smaller than $p^{c}$ for an arbitrary large constant $c$.
	Since we assume $n$ to be polynomial in $p$, the above bound on the running time also holds in expectation.
\end{proof}

\section{Distinguishing Prefix Approximation via Doubling}
\label{appendix:section:prefix-doubling}

In the following, we prove the running time of our distinguishing prefix approximation algorithm using $k$-level Bloom filters and (atomic) hypercube quicksort for duplicate detection.

\begin{theorem}[\cref{thm:prefix-doubling}]
  \label{appendix:thm:prefix-doubling}
	For each string $s \in \mathcal{S}$ with $\distMath{s} \ge \log{p}/\log{\sigma}$ an approximation $\distApproxMath{s}$ with $\expectedValue{\distApproxMath{s}} = \mathcal{O}(\distMath{s})$ can be computed in time
    \[
      \mathcal{O}\left( \overbrace{\alpha k\sqrt[k]{p}\log{\dmax}}^{\mathrm{latency}} + \overbrace{\beta k \left(\frac{n}{p}\log{p}+ \frac{D}{p}\log{\sigma}\right)}^{\mathrm{communication \ volume}}  + \overbrace{k\frac{n}{p}\log{k}\log\log{\sigma} + \frac{D}{p}}^{\mathrm{local \ work}}\right)
	\]
	in expectation.
	We assume a balanced distribution of strings and their distinguishing prefixes, i.e., $\Theta(n/p)$ strings and $\Theta(D/p)$ per PE, and an overall number of strings $n = \mathcal{O}(\poly(p))$.
    Additionally, we assume $n/p = \omega(k^2\sqrt[k]{p}\log{p}\log\log{p})$ and $k \le \log{p}/(2\log\log{p})$.
\end{theorem}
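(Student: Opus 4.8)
The algorithm runs prefix doubling: in round $t$ it globally tests uniqueness of the hash of the length-$\ell_t$ prefix ($\ell_t = \Theta(2^t)$) of every still-active string, and a string $s$ finishes with $\distApproxMath{s} = \ell_t$ as soon as its hash is unique. Duplicate detection uses the $k$-dimensional grid Bloom filter of \cref{thm:k-level-bloomfilter} while the number $n'$ of active strings exceeds its threshold $\Theta(k^2 p^{1+1/k}\log p)$, and exact detection via atomic hypercube quicksort (\cref{appendix:thm:qsort_duplicate}) once $n'$ drops below it. The plan is to (i) bound the number of rounds and $\expectedValue{\distApproxMath{s}}$, (ii) bound the per-round cost in each phase, and (iii) sum over the rounds, using that each string takes part in only $\lceil\log\distApproxMath{s}\rceil + \bigO{1}$ of them so that active-string counts sum to $\bigO{n} + o(D)$ rather than to $n\log\dmax$.

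For (i): re-randomising the hash each round, a string with $\distMath{s}=d$ becomes eligible to finish in round $\lceil\log d\rceil$ and then survives a further geometrically distributed number of rounds with success probability $1-f^+$; choosing the filter size $m \ge en$ (resp.\ the quicksort hash range) so that $f^+ < 1/2$ gives $\expectedValue{\distApproxMath{s}} = \bigO{2^{\lceil\log d\rceil}} = \bigO{\distMath{s}}$ and hence $\expectedValue{D_{\approx}} = \bigO{D}$. Since exact detection in the tail removes false positives, all strings finish by round $\lceil\log\dmax\rceil$, so there are $\bigO{\log\dmax}$ rounds in expectation. The hypothesis $\distMath{s}\ge\log p/\log\sigma$ is only needed for the approximation guarantee: a prefix shorter than $\log_\sigma n$ cannot hash uniquely by counting, so the (at most $\bigO{n/\sigma}$) shorter strings get only $\distApproxMath{s}=\bigO{\log p/\log\sigma}$, which is irrelevant for the running time.

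For (ii)+(iii) in the Bloom-filter phase I would apply \cref{thm:k-level-bloomfilter} with $\nmax = \bigO{n/p}$ and a filter of size $\Theta(n')$ (so $\log(mp/n') = \Theta(\log p)$), giving $\bigO{k\alpha\sqrt[k]{p}}$ latency per round, i.e.\ $\bigO{\alpha k\sqrt[k]{p}\log\dmax}$ in total. For communication I would not multiply by $\log\dmax$ but count per string: in round $t$ a string transmits its $\ell_t$-prefix either raw ($\ell_t\log\sigma$ bits) or as a $\bigO{\log p}$-bit hash, whichever is smaller; over the $\lceil\log\distApproxMath{s}\rceil$ rounds it is active the raw parts form a geometric series of total length $\bigO{\distApproxMath{s}}$ characters and the hashed parts occur $\bigO{1+\log(\distApproxMath{s}\log\sigma/\log p)}$ times, so summed over all strings this is $\bigO{n\log p + D\log\sigma}$ bits (using $\sum_s\log(\distApproxMath{s}\log\sigma/\log p) \le \sum_s\distApproxMath{s}\log\sigma/\log p$ and $\expectedValue{\sum_s\distApproxMath{s}}=\bigO{D}$); the $\bigO{k}$ grid overhead yields $\bigO{\beta k(\tfrac{n}{p}\log p + \tfrac{D}{p}\log\sigma)}$. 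The local work is dominated by reading/hashing the prefixes ($\bigO{D/p}$ amortised, by the same geometric-series argument) and the Bloom filter's internal integer sorting ($\bigO{k\tfrac{n}{p}\log k}$ summed), with the $\log\log\sigma$ factor from word-level operations on $\log\sigma$-bit characters.

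Finally, for the quicksort tail phase each of the $\bigO{\log\dmax}$ remaining rounds invokes \cref{appendix:thm:qsort_duplicate} with $\nmax=\bigO{n'/p+1}=\bigO{k^2\sqrt[k]{p}\log p}$ (after rebalancing), costing $\bigO{\alpha\log^2 p}$ latency and $\bigO{\beta\log^2 p\,(\nmax+\log p)+\nmax\log p}$ volume and work; since $k\le\log p/(2\log\log p)$ forces $\sqrt[k]{p}\ge\log^2 p$, the tail latency $\bigO{\alpha\log^2 p\log\dmax}$ is absorbed by the claimed latency, and using $\sum_t n'_t = \bigO{D}$ (each tail-active string participates in $\bigO{\log\distApproxMath{s}}\le\bigO{\distApproxMath{s}}$ rounds) one argues the tail volume and work are lower-order. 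The hard part will be exactly this last reconciliation: bounding when the switch occurs, hence how many polylog-overhead quicksort rounds remain, and checking that none of the three claimed cost components is dominated by the fallback — this is where the extra $\log\log p$ in the precondition $n/p=\omega(k^2\sqrt[k]{p}\log p\log\log p)$ and the bound on $k$ are really used.
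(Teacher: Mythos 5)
Your overall architecture is the same as the paper's (doubling from an initial prefix length $\Theta(\log p/\log\sigma)$, duplicate detection via the $k$-level Bloom filter with a switch to atomic hypercube quicksort when too few strings remain, latency $\mathcal{O}(\alpha k\sqrt[k]{p}\log\dmax)$, and a telescoping $\log x\le x$ bound giving the $\beta k(\frac{n}{p}\log p+\frac{D}{p}\log\sigma)$ communication term), but two of your cost accounts have genuine gaps. First, the local work: your claim that the Bloom filter's internal integer sorting sums to $\mathcal{O}(k\frac{n}{p}\log k)$ is false in general — a string with a long distinguishing prefix stays active for $\Theta(\log(\distMath{s}\log\sigma/\log p))$ rounds, so if most strings are long this sum is larger by that factor — and the $\log\log\sigma$ in the theorem has nothing to do with word-level operations on $\log\sigma$-bit characters. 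In the paper it counts the up to $\Theta(\log\log\sigma)$ doubling rounds between prefix length $\log p/\log\sigma$ and $\min(\davg,\log p)$ (with $\davg=D/n$) during which pessimistically all $\Theta(n/p)$ strings are still active, each round costing $\mathcal{O}(\frac{n}{p}k\log k)$ sorting work; rounds with $l$ between $\log p$ and $\davg$ are instead charged against $D/p$ using $k\log k=\mathcal{O}(\log p)$ (this is one place the bound on $k$ enters), and rounds with $l\ge\davg$ are absorbed because the active counts then decay geometrically ($\le n/(p2^j)$ strings have distinguishing prefix $\ge\davg 2^j$). Your per-string accounting can be repaired along these lines, but as written it does not justify the stated local-work term.

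Second, the quicksort tail, which you yourself flag as the hard part, is exactly where your sketch does not close. Your only quantitative tool there, $\sum_t n'_t=\mathcal{O}(D)$ via $\log\distApproxMath{s}\le\distApproxMath{s}$, is too lossy: multiplied by the per-item $\Theta(\log^2 p)$ overhead of \cref{appendix:thm:qsort_duplicate} it yields $\beta\frac{D}{p}\log^2 p$, which is not dominated by $\beta k(\frac{n}{p}\log p+\frac{D}{p}\log\sigma)$. The paper closes this with the same average-prefix counting: it introduces a length threshold $l^{\mathrm{thr}}=\log^3 p$ and distinguishes $\davg<\log^2 p$ (then at most $n/(p\log p)$ strings per PE survive past $l^{\mathrm{thr}}\ge\davg\log p$, and the geometrically decaying tail counts sum to $\mathcal{O}(n/(p\log p))$, absorbing the $\log^2 p$ overhead) from $\davg\ge\log^2 p$ (then $D/p=\Omega(\frac{n}{p}\log^2 p)$, so full participation for the $\mathcal{O}(\log(\davg/\log^2 p))\le\davg/\log^2 p$ rounds up to $\davg$ is affordable); the additive per-round term $\beta\log^3 p$ of each nonempty quicksort round is charged against $\dmax=\mathcal{O}(D/p)$ (or a constant expected number of such rounds when $\dmax\le\log^3 p$). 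None of this machinery appears in your proposal, and your per-round bound $\nmax=\mathcal{O}(k^2\sqrt[k]{p}\log p)$ presumes a rebalancing step the algorithm does not perform. Finally, the slack $n/p=\omega(k^2\sqrt[k]{p}\log p\log\log p)$ is used earlier than you guess: it absorbs the up to $\mathcal{O}(\log\log p)$ rounds with $l\le l^{\mathrm{thr}}$ in which the active count has already fallen below the Bloom-filter threshold $n^{\mathrm{thr}}$, each still costing up to $\mathcal{O}(n^{\mathrm{thr}}k(\beta\log p+\log k))$, so that their total is dominated by the very first round.
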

\begin{proof}(Outline)
    We first briefly discuss the overall algorithmic idea to obtain the above stated running time and then discuss details.
    We start with an initial prefix length $l^{init} = \Theta(\log{p}/\log{\sigma})$ for the prefix doubling process using a $k-$level Bloom filter (see \cref{appendix:k-level-bloomfilter}) with a constant false positive probability $f^+$.
	By doubling the tested prefix length $l$ in each round, this results in $\mathcal{O}(\log{\dmax})$ rounds of duplicate detection in expectation and therefore an expected overall latency in $\mathcal{O}(\alpha\log{\dmax}k\sqrt[k]{p})$.
	We therefore no longer have to deal with the latency term when arguing about the expected running time.
    A limitation of $k$-level Bloom filters is that they require a certain number of operations $n^{thr}$ so that a running time in $\mathcal{O}(k(\beta\log{p} + \log{k}))$ per operation can be achieved (see \cref{appendix:k-level-bloomfilter}).
    However, during the course of the prefix doubling process, more and more strings drop out as an approximation of their distinguishing prefix has been found.
    We mitigate this problem by switching to (atomic) hypercube quicksort once the number of strings becomes too small.

    $\mathbf{l \le l^{thr}}$. We now start with a discussion of the running time until we have reached a tested prefix $l = l^{thr}$ with $l^{thr} = \log^3(p)$.
	Let $\mathcal{X}_{l}$ denote the maximum number of strings participating in the duplicate detection round with prefix length $l$ on any PE.
    As stated in \cref{appendix:k-level-bloomfilter} ($k$-level Bloom filter), with probability $1-1/p^{\omega(1)}$ we have a running time for each round of duplicate detection in $\mathcal{O}(\mathcal{X}_l \cdot k(\beta\log{p} + \log{k}))$.
    As an upper bound for the running time in the unlikely case where all operations are routed to the same PE occurring with probability in $1/p^{\omega(1)}$, we find $\mathcal{O}(X_l\cdot pk(\beta \log{p}+ \log{k}))$.
    Therefore, by the linearity of expectation, the expected running time is in $\mathcal{O}(\expectedValue{\mathcal{X}_l} k(\beta\log{p} + \log{k}))$ in each round.
    In total, we therefore have $\sum_{i=1}^{c\log\log{p}} \mathcal{O}(\expectedValue{\mathcal{X}_{2^i} l_{\textrm{init}}}k(\log{p}\beta + \log{k}))$ as an upper bound on the expected running time for rounds $l^{\textrm{init}} < l \le l^{\textrm{thr}}$ with a constant $c > 0$ as it takes $\mathcal{O}(\log\log{p})$ rounds to reach $l = l^{thr}$.

    In the first round, we have an expected running time in $\mathcal{O}(n/pk(\beta\log{p} + \log{k}))$ as all strings participate.
    It remains to bound the size of the expected value of $\mathbb{E}(X_{l})$ for the remaining rounds which can be done using the average distinguishing prefix length $\davg = D/n$.
    We find $\davg = (D/p)/(n/p) = D/n = \Omega(\log{n}/\log{\sigma}) = \Omega(\log{p}/ \log{\sigma})$ on each PE as we assume all strings to be unique.
    A general observation is that there are at most $n/(p2^j)$ strings with distinguishing prefix length greater than $\davg 2^j$ for any $j \ge 0$.
    Assuming a constant false positive probability, for $l \ge \davg$, we have $\sum_{\davg \le l} \expectedValue{\mathcal{X}_l} = \mathcal{O}(n/p)$.
    Therefore, the expected running time for those rounds is dominated by the first one.
    This also holds when the number of participating strings drops below holds $n^{thr} = \omega(k^2p^{1/k}\log{p})$ during the first $\mathcal{O}(\log\log{p})$ rounds, since we have $\mathcal{O}(n^{thr}k(\beta\log{p} + \log{k}))$ as an upper bound on the expected running time in these cases.
    Since we assume $n/p = \omega(k^2p^{1/k}\log{p}\log\log{p}) = \Omega(n^{thr}\log\log{p})$, the accumulated running time for those $\mathcal{O}(\log\log{p})$ rounds is bounded by the running time of the first round

    We now differentiate three cases:
    \begin{enumerate}
      \item $\davg \le l^{\textrm{init}}.$ In this case, the above argument directly applies and the accumulated running time for the first $\mathcal{O}(\log\log{p})$ rounds is dominated by the running time of the first round.

      \item 
      \label{appendix:proof:prefix-doubling:case2} 
        $l^{\textrm{init}} < \davg \le \log{p}.$ 
        \noindent
        In this case, there are $\mathcal{O}(\log(\davg \log{\sigma}/ \log{p})) = \mathcal{O}(\log\log{\sigma})$ rounds until we reach $l = \davg$.
        Assuming that all $n/p$ strings on a PE participate in these rounds  --  note that this is an upper bound on $\mathcal{X}_l$  --  causes an expected communication volume in 
        \[
          \mathcal{O}\left(\log\left(\frac{\davg \log{\sigma}}{\log{p}}\right)\frac{n}{p}k\log{p}\right) = \mathcal{O}\left(\frac{\davg \log{\sigma}}{\log{p}} \frac{n}{p} k\log{p}\right) = \mathcal{O}\left(k \frac{D}{p} \log{\sigma}\right).
        \]
        For expected local work, we have to charge $\mathcal{O}(\log\log{\sigma} \cdot n/pk\log{k})$.
        The subsequent rounds with $l > \davg$ are then dominated by the running time of the very first round.

        \item
        \label{appendix:proof:prefix-doubling:case3} 
        $\log{p} < \davg.$ For the rounds until we reach $l = \log{p}$, we have the same running time as in case \ref{appendix:proof:prefix-doubling:case2}.
        For the remaining $\mathcal{O}(\log(\davg/\log{p})$ rounds until we reach $l = \min(\davg, l^{\textrm{thr}})$, we may also assume that all strings further participate as an upper bound on $\mathcal{X}_l$.
        We then have a running time in
\[
  \mathcal{O}\left(\log\left(\frac{\davg}{\log{p}}\right)\frac{n}{p}k(\log{p}\beta + \log{k})\right) = \mathcal{O}\left(\frac{\davg}{\log{p}} \frac{n}{p}k(\log{p}\beta + \log{k})\right) = \mathcal{O}\left(\frac{D}{p}(k\beta + 1)\right).
        \]
        Note that $k\log{k} = \mathcal{O}(\log{p})$ as we assume $k \le \log{p}/\log\log{p}$.
        Potential further rounds with  $\davg < l \le l^{thr}$ are dominated by the running time of the first round.
    \end{enumerate}
    Therefore, we find a running time for the first $\mathcal{O}(\log\log{p})$ iterations as stated in the theorem above.

    $\mathbf{l > l^{thr}}$.
    Until now we have analyzed the expected running time of the algorithm until a tested prefix length $l = l^{thr} =  \Omega(\log^3{p})$ is reached.
    From now on we switch to duplicate detection with hypercube quicksort once $X_l$ drops below $n^{thr}$ (see \cref{appendix:thm:qsort_duplicate}).
    Since the running time of duplicate detection via $k$-level Bloom filters is dominated by that of duplicate detection via hypercube quicksort (apart from the latency term which has already been dealt with) we can assume without loss of generality that only hypercube quicksort is used from now on.
    As stated in \cref{appendix:thm:qsort_duplicate}, with probability $1-1/p^{c_1}$ for an arbitrary large constant $c_1$ we have a running time for each round of duplicate detection in $\mathcal{O}(\mathcal{X}_l\log^2{p}(\beta + 1) + \mathbbm{1}_{\mathcal{X}_l > 0}\beta\log^3{p})$ with $\mathbbm{1}_{\mathcal{X}_l > 0}$ being the indicator variable whether there are still strings in the duplicate detection process.
    As an upper bound for the running time in the unlikely case occurring with probability less than $1/p^{c_1}$, we find $\mathcal{O}(n^{c_2})$.
    This is in $\mathcal{O}(p^{c_3})$ as $n$ is polynomial in $p$ with $c_2,c_3 > 0$ being fixed constants.
    By the linearity of expectation and by choosing $c_1 > c_3$, we find the expected running time to be in $\mathcal{O}(\expectedValue{\mathcal{X}_l}\log^2{p}(\beta + 1) + \mathcal{O}(\expectedValue{\mathbbm{1}_{\mathcal{X}_l > 0}}\log^3{p}\beta)$ in each round.

    We begin by finding a bound for the latter term.
    If the longest distinguishing prefix $\dmax \le \log^3{p}$, we know that the expected number of rounds with $l > l^{thr}$ is constant.
    Thus, $\mathcal{O}(\expectedValue{\mathbbm{1}_{\mathcal{X}_l > 0}}\log^3{p}\beta) = \mathcal{O}(\log^3{p}\beta) = \mathcal{O}(n/p\log{p}k\beta)$ as $n/p = \omega(\log{p}\sqrt[k]{p})$ and $k \le \log{p}/\log\log{p}$.
    If the longest distinguishing prefix $\dmax > \log^3{p}$, we have $\dmax / \log^3{p} > 1$.
    Hence, we have $\log{\dmax / \log^3{p}} = \mathcal{O}(\dmax / \log^3{p})$ remaining rounds in expectation and the accumulated costs of $\mathcal{O}(\log^3{p}\beta)$ in each of these rounds are in $\mathcal{O}(\log^3{p} \beta \dmax / \log^3{p}) = \mathcal{O}(D/p)$, as $\dmax \le \mathcal{O}(D/p)$.

    Now we discuss the $\mathcal{O}(\expectedValue{\mathcal{X}_l}\log^2{p}(\beta + 1))$ part of the running time.
    We differentiate two cases:
    \begin{enumerate}
      \item $\mathbf{\davg < \log^2{p}}.$ The number of strings on any PE with distinguishing prefix length $\log^3{p} \ge \davg\log{p}$ or longer is smaller than $n/(p\log{p})$ as discussed above
      Therefore, $\sum_{l=l^{thr}}^\infty \expectedValue{\mathcal{X}_l} = \mathcal{O}(n/(p\log{p}))$.
      Hence the accumulated costs for duplicate detection with hypercube quicksort are in $\mathcal{O}(n/(p\log{p})\log^2{p}(\beta + 1) = \mathcal{O}(n/p \log{p}(\beta + 1)$ and thus are dominated by the running time of the very first iteration of the prefix doubling process.
      \item $\mathbf{\davg \ge \log^2{p}}.$ We have $D/p = \Omega(n/p\log^2{p})$. For the $\mathcal{O}(\log(\davg / \log^2{p}))$ rounds until $l$ reaches $\davg$, we can therefore simply assume that $n/p$ strings participate in the duplicate detection causing expected costs in
\[
  \mathcal{O}\left(\log\left(\frac{\davg}{\log^2{p}}\right)\frac{n}{p}\log^2{p}(\beta + 1)\right) = \mathcal{O}\left(\frac{\davg}{\log^2{p}} \frac{n}{p}\log^2{p}(\beta + 1)\right) = \mathcal{O}\left(\frac{D}{p}(\beta + 1)\right).
        \]
        The costs of the remaining rounds when $l \ge \davg$ are again in $\mathcal{O}(n/p\log^2{p}(\beta +1)) = \mathcal{O}(D/p(\beta + 1))$.
        In fact, if we recall that in these rounds we actually use the $k$-level Bloom filter until $n^{thr}$ is reached and switch to hypercube quicksort once we fall below this threshold,
        we even obtain a running time in $o(D/p(\beta + 1)$.
  \end{enumerate}

  Therefore, in every case the summed expected running time is in the bound stated in the theorem above.
\end{proof}

\section{More Experimental Results}
\label{appendix:sec:additional_results}
In this section, we give additional experimental results that highlight some of the key properties of our new distributed multi-level string sorting algorithms.
These results break the average running times down and give time requirements for different phases of the algorithms.

First, in \cref{fig:appendix:np_ratio}, we give the average sorting times per sorting phase for the weak scaling experiments on synthetic data with different \(n/p\) rations, i.e., the results presented in \cref{fig:evaluation:np_ratio}.
Here, we can see that our multi-level approach significantly reduces the time spent in the partitioning phase (both \ms\ and \pdms) and the approximation of the distinguishing prefix, i.e., the Bloom filter (\pdms\ only).
For the total running times, we also give more values for slower algorithms, which is feasible due to the size of the plot.
This further highlights the scalability of our new algorithms.
When not only considering \ms\ but also \pdms, we can achieve a speedup of up to \(7\).

Next, in \cref{fig:appendix:dn_ratio}, we depict the communication volume and the average sorting times per sorting phase for the weak scaling experiments on synthetic data with different \(D/N\) rations, i.e., the results presented in \cref{fig:evaluation:dn_ratio}.
Communication volume is measured on each PE during execution using the size of buffers passed to
MPI routines and summed afterward to arrive at the final value.
Different rules are applied depending on the used routine:\fullstop{}
For example, calls to \texttt{MPI\_Alltoall} and \texttt{MPI\_Alltoallv}, as well as reduce and
scan operations count the size of send buffers on all PEs.
Calls to \texttt{MPI\_Bcast} only count the send buffer on the root PE and multiply its size by
$p$.
The result is only an approximation of the actual communication volume for several.
Most importantly, the measurements are idealized and do not necessarily correspond to actual
communication performed by MPI -- especially for broadcast and reduction operations.
Communication of collective exchange operations may also be overestimated if send buffers include
data that is already on the correct PE and does not require sending.

Finally, in \cref{fig:appendix:strong_scaling}, we give the average running times per sorting phase for our strong scaling experiment using real-world inputs, which we depict in \cref{fig:evaluation:strong_scaling}.
The results highlight the difficulty of the real-world inputs as many benefits that are visible for synthetic data are less prominent.

\iffigures
	\begin{figure*}
		\centering
		\pgfplotsset{
    np ratio legend/.style={
            legend columns=8,
            legend style={at={(0.5, -0.3)}, anchor=north},
            legend entries={$\ms_1$\\$\pdms_1$\\$\ms_2$\\$\pdms_2$\\$\ms_3$\\$\pdms_3$\\
                    $\RQuick$\\$\LcpRQuick$\\},
        },
}

\begin{tikzpicture}[baseline]
    \begin{groupplot}[
            footnotesize,
            group style={
                    group size=3 by 1,
                    x descriptions at=edge bottom,
                    ylabels at=edge left,
                    yticklabels at=all,
                    horizontal sep=2em,
                },
            height=.22\textheight, width=.25\textwidth,
            scale only axis, enlargelimits=0.025,
            xtick=data, xmode=log, log basis x=2, xmajorgrids,
            xlabel={$\text{nodes}\times48/p$}, ylabel={$\text{wall time}/\unit{\second}$}, 
            xticklabel=\pgfmathparse{2^\tick}\pgfmathprintnumber{\pgfmathresult},
            x tick label style={rotate=90,/pgf/number format/.cd,fixed,precision=0},
            ylabel={$\text{wall time}/\unit{\second}$},
            label style={font=\footnotesize},
            title style={font=\footnotesize},
            cycle list name=colorListUpToThree,
        ]

        \nextgroupplot[title={$n/p = 10^4$}, ymin=0, ymax=1.6, ytick distance=0.1,
            y tick label style={/pgf/number format/.cd,fixed,fixed zerofill=true,precision=1}]

        \addplot coordinates { (4,0.0505905) (8,0.0662462) (16,0.102683) (32,0.216874) (64,0.449523) (128,1.16063) (256,1.17907) (512,1.95185) (1024,3.79291) };
        \addplot coordinates { (4,0.0418138) (8,0.0557095) (16,0.102141) (32,0.201617) (64,0.433726) (128,1.56009) (256,2.97662) (512,7.05657) (1024,18.6557) };
        \addplot coordinates { (4,0.0560272) (8,0.0587485) (16,0.068184) (32,0.0689761) (64,0.0794039) (128,0.129826) (256,0.174955) (512,0.379632) (1024,0.860719) };
        \addplot coordinates { (4,0.0373176) (8,0.0376604) (16,0.0470978) (32,0.0493042) (64,0.0603443) (128,0.0826471) (256,0.148347) (512,0.298922) (1024,0.66346) };
        \addplot coordinates { (4,0.0897979) (8,0.0910808) (16,0.0919553) (32,0.0936831) (64,0.106478) (128,0.131268) (256,0.163788) (512,0.266722) (1024,0.531366) };
        \addplot coordinates { (4,0.0474277) (8,0.0466564) (16,0.049892) (32,0.0539536) (64,0.0600533) (128,0.0780923) (256,0.109479) (512,0.19862) (1024,0.409279) };

        \addplot coordinates { (4,0.254327) (8,0.291817) (16,0.331052) (32,0.375952) (64,0.444785) (128,0.502336) (256,0.591295) (512,0.662481) (1024,0.86462) };
        \addplot coordinates { (4,0.226327) (8,0.26956) (16,0.300978) (32,0.342181) (64,0.407616) (128,0.459926) (256,0.88805) (512,0.989355) (1024,0.813704) };

        \nextgroupplot[
            title={$n/p = 10^5$},
            ymin=0, ymax=7.2, ytick distance=0.4,
            y tick label style={/pgf/number format/.cd,fixed,fixed zerofill=true,precision=1},
            np ratio legend
        ]

        \addplot coordinates { (4,0.644801) (8,0.677757) (16,0.813842) (32,0.832994) (64,1.0635) (128,1.75911) (256,2.88921) (512,5.53908) (1024,12.8824) };
        \addplot coordinates { (4,0.59318) (8,0.606499) (16,0.647423) (32,0.753289) (64,0.98564) (128,1.64705) (256,3.72144) (512,11.1687) (1024,24.9355) };
        \addplot coordinates { (4,0.831777) (8,0.828691) (16,0.839111) (32,0.921098) (64,0.955699) (128,1.01874) (256,1.05302) (512,1.22557) (1024,1.76392) };
        \addplot coordinates { (4,0.633879) (8,0.631919) (16,0.646015) (32,0.662322) (64,0.673849) (128,0.715472) (256,0.788839) (512,0.945914) (1024,1.30923) };
        \addplot coordinates { (4,1.22442) (8,1.20322) (16,1.13623) (32,1.14768) (64,1.19633) (128,1.27094) (256,1.33497) (512,1.50795) (1024,1.87613) };
        \addplot coordinates { (4,0.781282) (8,0.759182) (16,0.730639) (32,0.725067) (64,0.740623) (128,0.761897) (256,0.80277) (512,0.901721) (1024,1.07139) };

        \addplot coordinates { (4,2.88197) (8,3.2947) (16,3.70603) (32,4.14237) (64,4.87946) (128,5.411) (256,6.28294) (512,6.90788) (1024,9.06809) };
        \addplot coordinates { (4,2.64504) (8,3.03981) (16,3.40672) (32,3.78964) (64,4.48999) (128,4.99303) (256,5.79771) (512,6.38561) (1024,8.51642) };

        \nextgroupplot[title={$n/p = 10^6$}, ymin=0, ymax=16, ytick distance=2]

        \addplot coordinates { (4,5.20592) (8,5.43991) (16,5.62522) (32,5.78779) (64,6.57597) (128,8.23696) (256,9.33346) (512,12.3119) (1024,24.1115) };
        \addplot coordinates { (4,4.62864) (8,4.77387) (16,4.83676) (32,5.12225) (64,5.35129) (128,6.57427) (256,8.54862) (512,15.2255) (1024,31.2713) };
        \addplot coordinates { (4,7.17933) (8,7.1421) (16,7.21363) (32,7.37185) (64,7.87024) (128,8.39283) (256,8.3063) (512,8.41634) (1024,11.6387) };
        \addplot coordinates { (4,5.23343) (8,5.23599) (16,5.29987) (32,5.38873) (64,5.41789) (128,5.47008) (256,5.61729) (512,5.84161) (1024,6.97526) };
        \addplot coordinates { (4,11.3501) (8,11.0116) (16,10.3207) (32,10.3167) (64,10.5077) (128,11.4675) (256,11.4457) (512,11.8219) (1024,14.4682) };
        \addplot coordinates { (4,6.65732) (8,6.44196) (16,6.14305) (32,6.1952) (64,6.11161) (128,6.17939) (256,6.32011) (512,6.53512) (1024,7.37365) };
    \end{groupplot}
\end{tikzpicture}
		\input{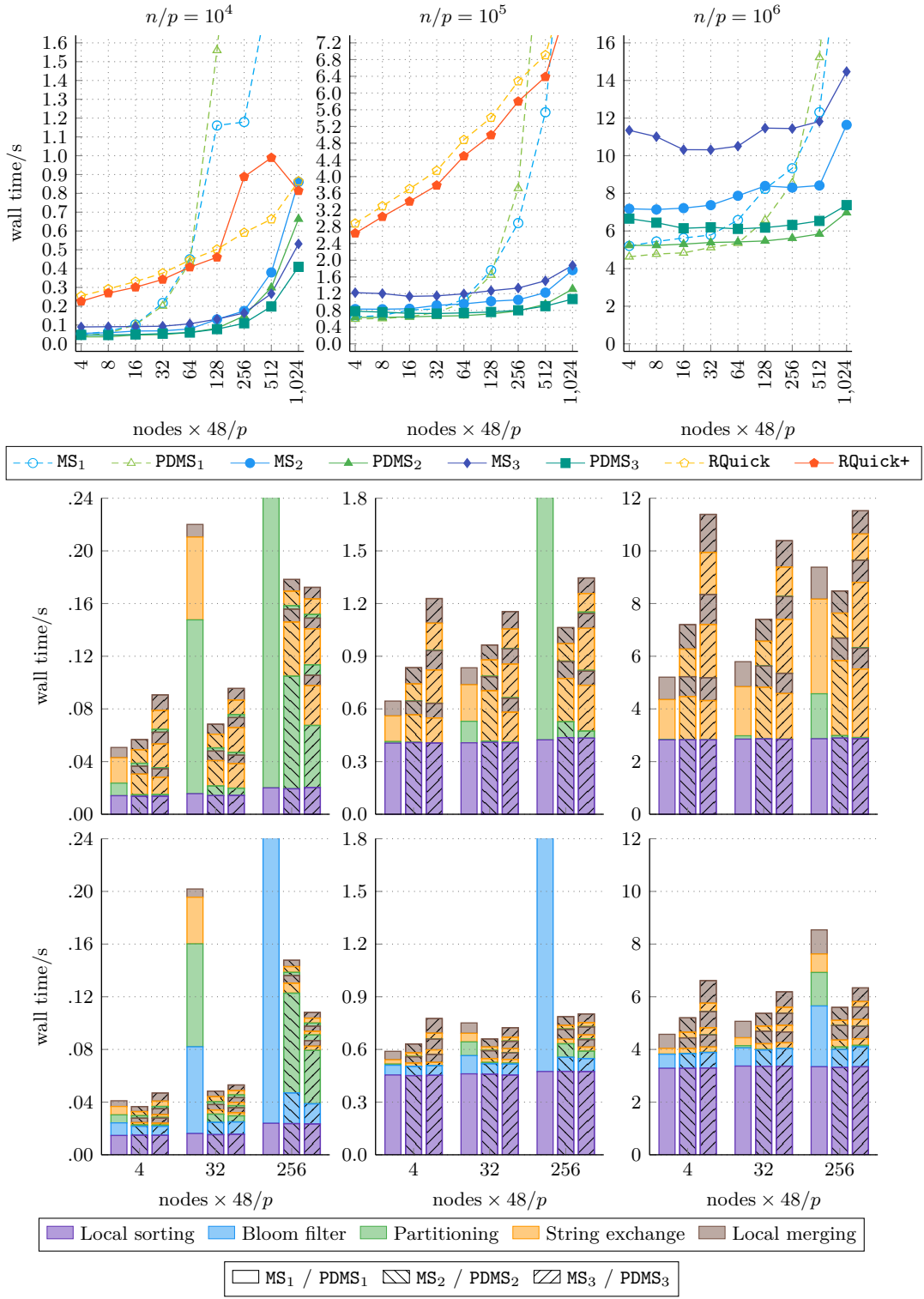}
		\caption{
			 Average sorting times (top row, same as \cref{fig:evaluation:np_ratio} but with wider $y$-range) and running times per sorting phase for $\ms_k$ (middle
			row) and $\pdms_k$ (bottom row) for a weak scaling experiment using \textsc{DNData}
			with $\ell = 500$ and $D/N = 0.5$. Phases are in execution order starting from the bottom.}
		\label{fig:appendix:np_ratio}
	\end{figure*}

	\begin{figure*}[h]
		\centering
		\pgfplotsset{
	dn ratio time style/.style={
			ymin=0, ymax=6,
			ytick distance=1
		},
	dn ratio comm style/.style={
			ymin=0, ymax=3.0,
			ytick distance=0.5,
			y tick label style={
					/pgf/number format/.cd,
					fixed,fixed zerofill,precision=1
				},
		}
}

\begin{tikzpicture}[baseline]
	\begin{groupplot}[
			footnotesize,
			group style={
					group size=5 by 1,
					x descriptions at=edge bottom,
					y descriptions at=edge left,
					horizontal sep=.5em,
					vertical sep=1ex,
				},
			xmajorgrids, scale only axis,
			enlarge x limits=0.05,
			enlarge y limits=0.03,
			height=.215\textheight, width=.1685\textwidth,
			xmin=4, xmax=128,
			xtick={4, 8, 16, 32, 64, 128}, xmode=log, log basis x=2,
			xticklabel=\pgfmathparse{2^\tick}\pgfmathprintnumber{\pgfmathresult},
			x tick label style={rotate=90,/pgf/number format/.cd,fixed,precision=0},
			label style={font=\footnotesize}, title style={font=\footnotesize},
      cycle list name=colorListUpToTwoWithoutPSV,
		]

		\nextgroupplot[title={$D/N = 0.0$},ylabel={$\text{bytes sent per string}/\unit{\kilo\byte}$}, dn ratio comm style]

  \addplot coordinates { (4,0.513891) (8,0.524763) (16,0.548247) (32,0.598905) (64,0.707945) (128,0.940802) };
  \addplot coordinates { (4,1.01197) (8,1.01245) (16,1.0132) (32,1.01451) (64,1.01709) (128,1.02042) };
  \addplot coordinates { (4,0.0377232) (8,0.0385521) (16,0.0400649) (32,0.0430121) (64,0.049075) (128,0.0608668) };
  \addplot coordinates { (4,0.0742054) (8,0.0744699) (16,0.0747601) (32,0.0750752) (64,0.0755011) (128,0.0742184) };

  \addplot coordinates { (4,1.92061) (8,2.17113) (16,2.4217) (32,2.67235) (64,2.92309) (128,3.17409) };
  \addplot coordinates { (4,1.94863) (8,2.20318) (16,2.45758) (32,2.71214) (64,2.96662) (128,3.22112) };

		\nextgroupplot[title={$D/N = 0.25$},dn ratio comm style]

  \addplot coordinates { (4,0.395121) (8,0.40622) (16,0.430162) (32,0.481733) (64,0.592605) (128,0.830057) };
  \addplot coordinates { (4,0.774033) (8,0.774516) (16,0.775277) (32,0.77661) (64,0.779224) (128,0.784627) };
  \addplot coordinates { (4,0.0410151) (8,0.0445929) (16,0.0520625) (32,0.0678439) (64,0.101417) (128,0.172855) };
  \addplot coordinates { (4,0.0767177) (8,0.0770405) (16,0.0774558) (32,0.0780398) (64,0.0790405) (128,0.0809801) };

  \addplot coordinates { (4,1.92063) (8,2.17108) (16,2.42176) (32,2.67233) (64,2.92307) (128,3.17409) };
  \addplot coordinates { (4,1.94867) (8,2.20303) (16,2.45755) (32,2.71205) (64,2.9666) (128,3.22115) };

  \nextgroupplot[title={$D/N = 0.5$},dn ratio comm style,
  			title={$D/N = 0.5$},
			dn ratio time style,
			legend columns=7,
			legend style={at={(0.5,-0.15)},anchor=north,font=\small},
			legend entries={$\ms_1$\\$\ms_2$\\$\pdms_1$\\$\pdms_2$\\\RQuick\\\LcpRQuick\\}]

  \addplot coordinates { (4,0.27036) (8,0.281699) (16,0.306121) (32,0.358653) (64,0.471436) (128,0.712731) };
  \addplot coordinates { (4,0.524096) (8,0.524586) (16,0.525357) (32,0.52671) (64,0.529363) (128,0.534847) };
  \addplot coordinates { (4,0.0464064) (8,0.0528613) (16,0.0665777) (32,0.0958342) (64,0.158321) (128,0.291517) };
  \addplot coordinates { (4,0.0832495) (8,0.0836331) (16,0.0841792) (32,0.0850457) (64,0.0866499) (128,0.0898765) };

  \addplot coordinates { (4,1.92049) (8,2.17106) (16,2.42161) (32,2.67238) (64,2.92315) (128,3.1741) };
  \addplot coordinates { (4,1.94859) (8,2.2031) (16,2.4576) (32,2.71203) (64,2.96659) (128,3.22111) };

		\nextgroupplot[title={$D/N = 0.75$},dn ratio comm style]

  \addplot coordinates { (4,0.145599) (8,0.15718) (16,0.182079) (32,0.235575) (64,0.35028) (128,0.595418) };
  \addplot coordinates { (4,0.274161) (8,0.274655) (16,0.275437) (32,0.276811) (64,0.279504) (128,0.285066) };
  \addplot coordinates { (4,0.161731) (8,0.17344) (16,0.198603) (32,0.252616) (64,0.368372) (128,0.615592) };
  \addplot coordinates { (4,0.306199) (8,0.306693) (16,0.307481) (32,0.308865) (64,0.311583) (128,0.317186) };

  \addplot coordinates { (4,1.92063) (8,2.17102) (16,2.42174) (32,2.67236) (64,2.9231) (128,3.17411) };
  \addplot coordinates { (4,1.94852) (8,2.20304) (16,2.4575) (32,2.71211) (64,2.96658) (128,3.22109) };

		\nextgroupplot[title={$D/N = 1.0$},dn ratio comm style]

  \addplot coordinates { (4,0.0208391) (8,0.0326584) (16,0.0580393) (32,0.112495) (64,0.229123) (128,0.478091) };
  \addplot coordinates { (4,0.0242271) (8,0.0247255) (16,0.0255168) (32,0.0269103) (64,0.0296441) (128,0.035289) };
  \addplot coordinates { (4,0.036972) (8,0.0489224) (16,0.0745623) (32,0.129538) (64,0.247214) (128,0.498271) };
  \addplot coordinates { (4,0.0562631) (8,0.0567647) (16,0.0575614) (32,0.0589674) (64,0.0617213) (128,0.0674084) };

  \addplot coordinates { (4,1.92063) (8,2.17117) (16,2.42169) (32,2.67233) (64,2.92313) (128,3.17409) };
  \addplot coordinates { (4,1.94847) (8,2.20311) (16,2.45759) (32,2.71207) (64,2.96661) (128,3.22116) };

	\end{groupplot}
\end{tikzpicture}
		\input{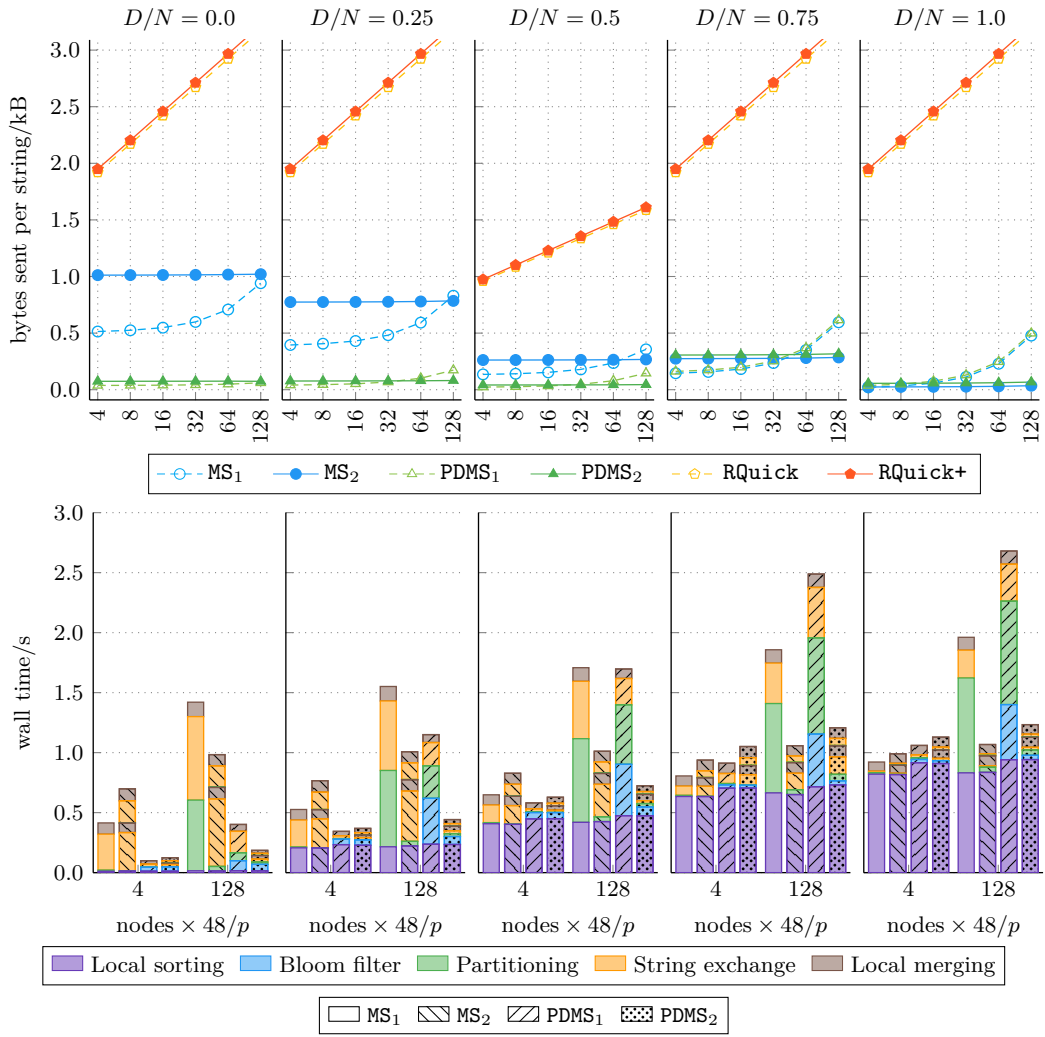}
		\caption[%
			Overall sorting times, bytes sent per string, and runtimes per sorting phase for the weak scaling
			experiment with variable $D/N$ ratio.
		]{%
			Approx. bytes sent per string (top row) and running time for the sorting phases (bottom row) for the weak scaling experiment using \textsc{DNData} inputs with $\ell=500$ and $n/p=10^5$ depicted in \cref{fig:evaluation:dn_ratio}. Sorting phases are in order of execution starting from the bottom.}
		\label{fig:appendix:dn_ratio}
	\end{figure*}

	\begin{figure*}
		\centering
		\input{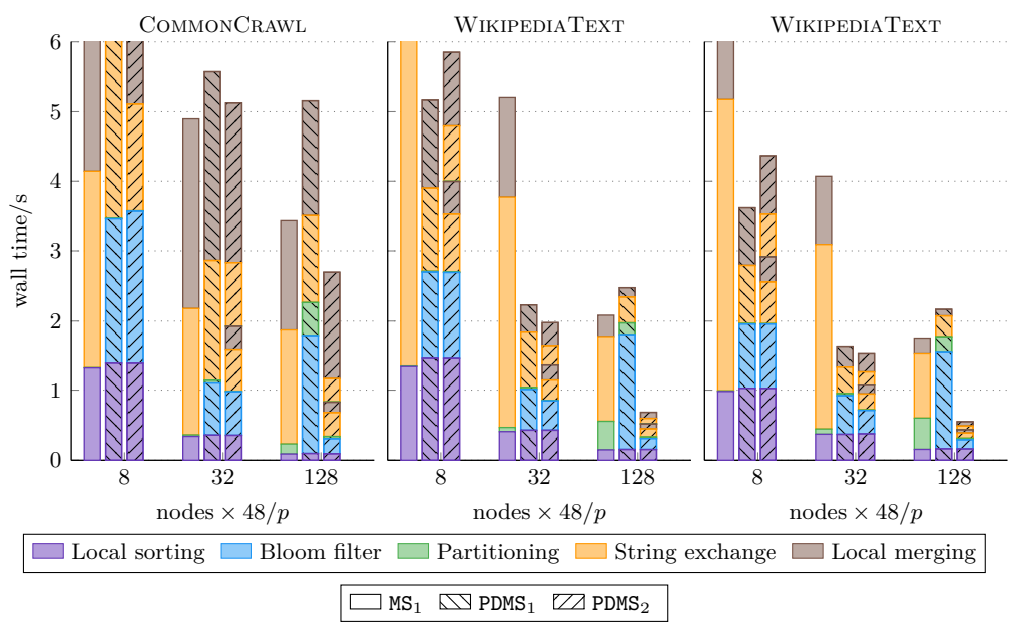}
		\caption{Average running times per sorting phase for the strong scaling
			experiment using real-world inputs depicted in \cref{fig:evaluation:strong_scaling}.
			Sorting phases are in order of execution starting from the bottom}
		\label{fig:appendix:strong_scaling}
	\end{figure*}
\fi

\end{document}